\definecolor{dullmagenta}{rgb}{0.4,0,0.4}   
\definecolor{darkblue}{rgb}{0,0,0.4}
\definecolor{coquelicot}{rgb}{0.20, 0.12, 0.72}
\definecolor{navyblue}{rgb}{0,0,0.5}
\definecolor{newblue}{RGB}{46, 77, 167}
\newcommand{\cmark}{\text{\ding{51}}}
\newcommand{\xmark}{\text{\ding{55}}}
\newcommand\independent{\protect\mathpalette{\protect\independenT}{\perp}}
\def\independenT#1#2{\mathrel{\rlap{$#1#2$}\mkern2mu{#1#2}}}
\definecolor{coquelicot}{rgb}{0.90, 0.42, 0.72}
\definecolor{burntorange}{rgb}{0.8, 0.33, 0.0}
\definecolor{burntblue}{RGB}{0, 114, 206}
\def\tcr{\textcolor{red}}
\definecolor{burntorange}{rgb}{0.8, 0.33, 0.0}
 \definecolor{burntblue}{RGB}{0, 114, 206}
\def\bx{\mathbf{x}}
\def\bX{\mathbf{X}}
\def\bz{\mathbf{z}}
\def\bZ{\mathbf{Z}}
\def\bU{\mathbf{U}}
\def\bv{\mathbf{v}}
\def\bV{\mathbf{V}}
\def\bW{\mathbf{W}}
\def\bs{\mathbf{s}}
\def\br{\mathbf{r}}
\def\K{\mathbb{K}}
\def\E{\mathbb{E}}
\def\P{\mathbb{P}}
\def\balpha{\boldsymbol{\alpha}}
\def\bbeta{\boldsymbol{\beta}}
\def\bSigma{\boldsymbol{\Sigma}}
\def\balphahat{\widehat{\balpha}}
\def\bbetahat{\widehat{\bbeta}}
\def\Sigmahat{\widehat{\Sigma}}
\def\etabold{\boldsymbol{\eta}}
\def\Deltahat{\widehat{\Delta}}
\def\bDeltatil{\widetilde{\bDelta}}
\def\bdeltatil{\widetilde{\bdelta}}
\def\elltil{\widetilde{\ell}}
\def\Bcaltil{\widetilde{\mathcal B}}
\def\Acaltil{\widetilde{\mathcal A}}
\def\Ecaltil{\widetilde{\mathcal E}}
\def\varepsilontil{\widetilde{\varepsilon}}
\def\sbar{\bar{s}}
\def\Sigmatil{\widetilde{\Sigma}}
\def\pbar{\bar{p}}
\def\mhat{\widehat{m}}
\def\fhat{\widehat{f}}
\def\mtil{\widetilde{m}}
\newcommand\ind{\protect\mathpalette{\protect\independenT}{\perp}}
\def\independenT#1#2{\mathrel{\rlap{$#1#2$}\mkern4mu{#1#2}}}
\def\tcr{\textcolor{black}}
\def\Lsc{\mathcal{L}}
\def\Usc{\mathcal{U}}
\def\Xsc{\mathcal{X}}
\def\Var{\mbox{Var}}
\def\bzero{\mathbf{0}}
\def\bS{\mathbf{S}}
\def\ghat{\widehat{g}}
\def\S{\mathbb{S}}
\def\bDelta{\boldsymbol{\Delta}}
\def\bdelta{\boldsymbol{\delta}}
\def\Asc{\mathcal{A}}
\def\R{\mathbb{R}}
\def\S{\mathbb{S}}
\def\Ytil{\widetilde{Y}}
\def\bbetatil{\widetilde{\bbeta}}
\def\balphatil{\widetilde{\balpha}}
\def\Lsc{\mathcal{L}}
\def\bSigma{\boldsymbol{\Sigma}}
\def \hs2{\hspace{2mm}}
\numberwithin{table}{section}
\numberwithin{equation}{section}
\definecolor{jcolor}{RGB}{041,122,000}
\definecolor{darkred}{RGB}{100,000,000}
\definecolor{purple}{RGB}{200,000,200}
\def\boxit#1{\vbox{\hrule\hbox{\vrule\kern6pt  \vbox{\kern6pt#1\kern6pt}\kern6pt\vrule}\hrule}}
\def\muhat{\widehat{\mu}}
\def\muhat{\widehat{\mu}}
\def\muhat{\widehat{\mu}}
\def\be{\mathbf{e}}
\def\bS{\mathbf{S}}
\def\pitil{\widetilde{\pi}}
\def\gammabar{\bar{\gamma}}
\def\psihat{\widehat{\psi}}
\def\pihat{\widehat{\pi}}
\def\mhat{\widehat{m}}
\def\Ytil{\widetilde{Y}}
\def\mtil{\widetilde{m}}
\def\mtil{\widetilde{m}}
\def\pitil{\widetilde{\pi}}
\def\bOmega{\boldsymbol{\Omega}}
\def\bomega{\boldsymbol{\omega}}
\def\thetahat{\widehat{\theta}}
\def\psitil{\widetilde{\psi}}
\def\thetahat{\widehat{\theta}}
\def\psitil{\widetilde{\psi}}
\def\phat{\widehat{p}}
\def\gammahat{\widehat{\gamma}}
\def\gammatil{\widetilde{\gamma}}
\theoremstyle{plain}
\newtheorem{theorem}{Theorem}[section]
\newtheorem{corollary}[theorem]{Corollary}
\newtheorem{lemma}[theorem]{Lemma}
\theoremstyle{remark}
\newtheorem{assumption}{Assumption}
\newtheorem{remark}{Remark}
\begin{document}

\begin{frontmatter}

\title{The Decaying Missing-at-Random Framework: Model Doubly Robust Causal Inference with Partially Labeled Data}

\begin{aug}
\author[A]{\fnms{Yuqian} \snm{Zhang}\ead[label=e1]{yuqianzhang@ruc.edu.cn}\thanksref{T1}},
\author[B]{\fnms{Abhishek } \snm{Chakrabortty}\ead[label=e2]{abhishek@stat.tamu.edu}\thanksref{T1}}
\and
\author[C]{\fnms{Jelena} \snm{Bradic}\ead[label=e3]{jbradic@ucsd.edu}\thanksref{T1}}

\thankstext{T1}{All authors contributed equally in this work.} %

\address[A]{Institute of Statistics and Big Data, Renmin University of China \printead{e1}}
\address[B]{Department of Statistics, Texas A\&M University \printead{e2}}
\address[C]{Department of Mathematics and Halicioglu Data Science Institute, University of California San Diego \printead{e3}}
\end{aug}

\begin{abstract}
{In {modern large-scale observational studies,} data collection constraints often result in partially labeled datasets, posing challenges for reliable causal inference, {especially due to potential labeling bias and relatively small size of the labeled data.} This paper introduces a \emph{decaying missing-at-random} (decaying MAR) framework {and associated approaches} for {doubly robust} causal inference {on treatment effects in such semi-supervised (SS) settings. This} simultaneously addresses selection bias {in the labeling mechanism} and the extreme imbalance between labeled and unlabeled groups, bridging the gap between {the} standard SS and missing data literatures, {while throughout allowing for confounded treatment assignment and high-dimensional confounders under appropriate sparsity conditions.} To ensure robust causal conclusions, we propose a \emph{bias-reduced SS (BRSS) estimator} for the average treatment effect, {a type of `model doubly robust' estimator appropriate for such settings,} establishing asymptotic normality {at the appropriate rate} under decaying labeling propensity scores, provided that at least one nuisance model is correctly specified. Our approach also relaxes sparsity conditions beyond those required in existing methods, including standard supervised approaches. Recognizing the asymmetry between labeling and treatment mechanisms, we further introduce a \emph{de-coupled BRSS (DC-BRSS)} estimator, which integrates inverse probability weighting (IPW) with bias-reducing techniques in nuisance estimation. This refinement further weakens model specification and sparsity requirements. Numerical experiments confirm the effectiveness and adaptability of our estimators in addressing labeling bias and model misspecification.}
\end{abstract}

\begin{keyword}
\kwd{Causal inference}
\kwd{extreme imbalance}
\kwd{generalizability}
\kwd{high dimensions}
\kwd{debiasing}
\kwd{semi-supervised learning}
\kwd{model double robustness}
\end{keyword}

\end{frontmatter}


\section{Introduction}\label{sec:intro}

{Semi-supervised (SS) learning has a long history in machine learning applications such as speech recognition and image classification \citep{zhu2005semi, chapelle2006semi}. In these problems, collecting {the outcome variable (or label)} \( Y \in \mathbb{R} \) is often costly and labor-intensive, whereas covariate information is abundant and readily available. This setting naturally leads to SS datasets, which consist of both labeled and unlabeled samples, with the latter typically far outnumbering the former. More recently, the growing availability of large-scale biomedical datasets, such as electronic health records (EHR) and integrative genomic studies, has further expanded the role of SS learning in causal inference. In these settings, rich baseline information {on covariates (or potential confounders)} is accessible for patients, but outcome data often require additional assessments and evaluations, leading to extreme missingness. Consequently, SS learning has gained increasing attention in causal inference research \citep{cheng2021robust, hou2021efficient, zhang2022high, chakrabortty2022general, kallus2024role}, where the primary objective is to estimate the effect of a \emph{binary treatment \( T \in \{0,1\} \)} on the outcome \( Y \) in the presence of substantial outcome missingness due to practical constraints.

SS causal inference can also be viewed as a data integration problem, where \emph{generalizability} and \emph{transportability} of causal conclusions have garnered growing interest; see, e.g., \cite{dahabreh2019generalizing, lesko2017generalizing, shi2023data, degtiar2023review, colnet2024causal, athey2020combining}. For example, in economic policy, combining randomized trial data on job training programs with large administrative datasets can lead to more precise estimates of program impacts. In public health, integrating survey outcomes with large-scale administrative records enables a more comprehensive evaluation of campaign effects on behavioral changes. Similarly, incorporating vast unlabeled datasets in integrative genomics can strengthen the discovery of gene-disease associations.}

{To formulate the causal question, we adopt} the potential outcome framework \citep{rubin1974estimating,imbens2015causal} and consider potential outcomes $Y(0)$ and $Y(1)$, {the versions of $Y$ that `would have been observed'} corresponding to treatment assignments $T=0$ and $T=1$, respectively. The {\it observable outcome} is denoted by $Y \equiv Y(T)$, and under the consistency assumption, $Y=TY(1)+(1-T)Y(0)$. The {\it average treatment effect} (ATE), {our causal estimand of interest,} characterizes the average causal effect of $T$ on $Y$ and is defined as:
\begin{align}\label{def:ATE}
\mu_0:=\theta_1-\theta_0,\;\;\text{where}\;\;\theta_1:=\E\{Y(1)\}\;\;\text{and}\;\;\theta_0:=\E\{Y(0)\}.
\end{align}
Estimating the ATE in {\it observational} studies, where the treatment $T$ and outcome $Y$ are influenced by a shared set of confounders $\bX\in\R^p$, presents challenges due to confounding.

The complexity deepens further in SS settings, where a labeled dataset \(\Lsc = (\bX_i, T_i, Y_i)_{i=1}^n\) is accompanied by a {\it substantially larger} unlabeled dataset \(\Usc = (\bX_i, T_i)_{i=n+1}^N\). {Hence, the observable outcome $Y \equiv Y(T)$ is missing {\it itself}, apart from the natural missingnness of one of $\{Y(1), Y(0)\}$ that is governed by $T$. Moreover, this work} also addresses challenges arising from {\it high-dimensional} covariates, accommodating scenarios where \( p \gg n \). Classical SS approaches assume that the labeled and unlabeled datasets follow the same distribution, relying on the assumption that {the outcomes $Y$ in $\Usc$} are missing completely at random (MCAR) \citep{cheng2021robust, zhang2022high, hou2021efficient, chakrabortty2022general}. This assumption facilitates efficient use of both labeled and unlabeled data, {with the primary goal being efficiency improvement (over the supervised approach on $\Lsc$)}. However, in practice, MCAR is often violated {-- the observational setting may naturally cause the labeling mechanism to be biased as well}. This paper aims to address these limitations by tackling selection bias {in labeling}, where the missingness of the outcome $Y$, denoted {via} the \emph{labeling indicator} $R\in \{0,1\}$, is itself observational and may depend on {\it both} $(T,\bX)$.
Moreover, in typical SS situations where the unlabeled dataset is significantly {larger} than the labeled dataset ($N \gg n$), the probability of observing $Y$ {\it decreases uniformly} as $N$ increases, violating the standard positivity assumption {in missing data literatures} \citep{bang2005doubly,tsiatis2007semiparametric}. This necessitates a framework that simultaneously addresses labeling bias and the extreme imbalance between labeled and unlabeled groups.

\subsection{SS causal inference with labeling bias: the `decaying' MAR framework}\label{sec:setup}

We formulate the problem as a {two-layered (but non-standard)} missing data problem {-- one naturally through $T$ and the other (more challenging one) through $R$}. Unlike traditional SS settings, our framework treats the labeling indicator \( R \) as \emph{random}. We allow the labeled fraction \( n/N>0 \) to approach zero arbitrarily closely and investigate the theoretical properties when both \( \P(R=1)\to 0 \) and \( N\to\infty \), {leading to {\it non-standard asymptotics}}. The labeling probability \( \P(R=1) \), the marginal distribution of \( R \equiv R_{(N)} \), and the conditional distribution of \( R\mid (T,\bX) \) are all considered as {\it decaying} functions of \( N \), with the limit taken as \( N\to\infty \). {Formally, we assume that the observed samples \( (\bZ_{N,i})_{i=1}^N = (\bZ_i)_{i=1}^N = (\bX_i, T_i, R_i, R_i Y_i)_{i=1}^N \) are drawn from the \( N \)-th row of a triangular array of vectors \( (\bZ_{N,i})_{1 \leq i \leq N, N \geq 1} \), where the elements within the same row are independent and identically distributed (i.i.d.). For simplicity, we omit the dependency on \( N \) when possible to streamline the notation.}

{Since the ATE parameter is given by the difference between two counterfactual means, $\mu_0 = \E\{Y(1)\} - \E\{Y(0)\} = \theta_1 - \theta_0$, {\it we focus on the estimation of $\theta_1$ throughout this work, {without loss of generality}}. Analogous results apply to $\theta_0$ and, consequently, to the estimation of $\mu_0 = \theta_1 - \theta_0$. We define the corresponding outcome regression (OR) models, the propensity score (PS) models for treatment $T$ and labeling $R$, as well as the PS model for a `product' indicator as follows. For $\bx$ in $\Xsc\subseteq\R^p$, the support of $\bX$, we define:
\begin{align}
&\textbf{OR model:}\quad m(\bx):=\E\{Y(1)\mid \bX=\bx\},\label{eq:key-defs0}\\
&\textbf{T-PS model:}\quad\pi(\bx):=\P(T=1\mid\bX=\bx),\label{eq:key-defs1}\\
&\textbf{R-PS model:}\quad p_N(\bx):=\P(R=1\mid T=1,\bX=\bx),\label{eq:key-defs2}\\
&\textbf{Product PS model:}\quad\gamma_N(\bx) ~:=~ \P(\Gamma = 1\mid\bX=\bx), \;\;\; \mbox{and} \;\;\; \gamma_{N} := \P(\Gamma = 1), \label{eq:key-defs3}\\
&\mbox{where} \;\;\; \Gamma:=TR ~~\mbox{is the {\it product indicator}.} \nonumber
\end{align}

Since the product PS function can be expressed as the product of the treatment and labeling PS functions, \( \gamma_N(\bx) = \pi(\bx)p_N(\bx) \), the complexities and sparsity structures of these functions are inherently {\it intertwined}. This interdependence necessitates the development of methods that capitalize on these relationships to improve robustness and efficiency; see Section \ref{subsec:SP} and Remark \ref{remark:MDR1} for a more detailed discussion and methodological development.}

To identify the parameter of interest, we assume the following basic conditions.

\begin{assumption}[Basic assumptions]\label{cond:basic}

(a) We assume the `no unmeasured confounding' (NUC) and overlap assumptions {\it for the treatment} $T$, so that for some constant $c_0\in(0,1/2)$:
\begin{align*}
& T \ind \{Y(0), Y(1)\} \mid \bX \;\; \leadsto \mbox{(\textbf{T-NUC})} \;\;\; \mbox{and} \;\;\; c_0 < \pi(\bx) < 1-c_0 \;\; \forall \; \bx \in \Xsc \;\; \leadsto \mbox{(\textbf{T-overlap})}.
\end{align*}

(b) We further assume a missing at random (MAR) condition and a \emph{`decaying overlap'} condition (DOC) {\it for the  labeling indicator} $R \equiv R_{(N)}$ as follows:
\begin{align*}
R \ind Y \mid (T,\bX) \;\; \leadsto \mbox{(\textbf{R-MAR}),} \;\;\,\mbox{and}\;\;\,  {p_N(\bx) > 0 \;\, \forall \, \bx\in\Xsc, \; \mbox{for each fixed $N$}\;\; \leadsto \mbox{(\textbf{R-DOC}).}}
\end{align*}
\end{assumption}

The T-NUC assumption (`ignorability'), and the T-overlap condition are commonly assumed \citep{rosenbaum1983central,imbens2015causal}. The R-MAR condition was used recently in \cite{wei2022doubly}, but the authors didn't consider the full SS setting of $N \gg n$. \cite{kallus2024role} discuss it but only develop theory under a relaxed case of $R\independent T\mid\bX$ with an {inappropriate} assumption $\P(R=1)=0$\footnote{{\cite{kallus2024role} required the assumptions \( \P(R = 1) = 0 \) and \( R \independent T \mid \bX \) in their initial preprints but relaxed them in the published version, following the release of the initial preprint of this manuscript.}}, which leads to $n=\sum_{i=1}^NR_i=0$ almost surely, i.e., a degenerate unsupervised setting. For this, the R-DOC assumption plays a critical role and is a non-standard condition. It is weaker than the traditional (strict) positivity condition, which requires the existence of a constant $c_0>0$ independent of $N$ such that $p_N(\bx)>c_0$ \citep{khan2010irregular,tsiatis2007semiparametric}. {In comparison, the R-DOC assumption accommodates scenarios where the labeling PS may decay to zero, i.e., \( p_N(\bx) \rightarrow 0 \), potentially uniformly over \( \bx \in \mathcal X \). By allowing \( p_N(\bx) \) to depend on both \( \bx \) and \( N \) without imposing uniform lower bounds, this assumption offers a flexible framework suitable for typical SS settings with a small labeling fraction \( n/N \), while also accommodating potential heterogeneity between the labeled and unlabeled groups.} When both the R-MAR and R-DOC conditions hold, we define this scenario as the \emph{`decaying' MAR setting}.

\paragraph*{Challenges and objectives}

{In this setup, we face challenges from the following aspects:
\begin{itemize}
\item \emph{A uniformly decaying labeling PS function.} Unlike standard SS approaches that assume a constant labeling PS function, we allow for non-degenerate scenarios. This generalization introduces additional challenges in establishing results that adapt to a decaying PS.

\pagebreak 
\item \emph{High-dimensional covariates.} In observational studies, the true treatment and labeling mechanisms are unknown. To satisfy the T-NUC and R-MAR conditions in Assumption \ref{cond:basic}, it is essential to collect a comprehensive set of potential confounders to reduce bias from unmeasured confounding. This requirement is even more critical in SS causal inference, where confounders affecting either \( T \) or \( R \) must be accounted for. Consequently, it is typically necessary to handle the more challenging high-dimensional scenario.

\par\smallskip
\item \emph{Bias introduced by nuisance estimation.} ATE estimation is highly sensitive to the initial estimation of the nuisance functions \eqref{eq:key-defs0}-\eqref{eq:key-defs3}. Misspecification of these nuisance models is common and can introduce substantial bias. Moreover, the bias of ATE estimation is also influenced by the sparsity structures of the nuisance models, particularly when regularized estimates are employed in high-dimensional settings. It is crucial to develop methods that are robust {-- not just in consistency but in ensuring valid
    {\it inference} via asymptotic normality (at the right rate) --} and perform well without imposing strict model or sparsity constraints.
\end{itemize}

Our objective is to develop \emph{robust} methods for SS causal inference, ensuring reliable causal conclusions. We aim to address the challenges outlined above by establishing {\it consistent and asymptotically normal} (CAN) {estimators} that: (a) adapt to decaying PS and high-dimensional settings, (b) remain valid under model misspecification, and (c) relax sparsity requirements.}

\subsection{Our contributions}\label{subsec:contribution}
We summarize our main contributions below.

\paragraph*{Establishing a unified framework for SS causal inference problems under labeling bias}
{We introduce a decaying MAR setting for causal inference problems, redefining the non-MCAR SS setup. This approach tackles a previously unexplored intersection by addressing the often-overlooked labeling bias in the SS literature and challenging the conventional positivity condition in the missing data domain. Our work establishes a unified framework that generalizes several existing strands, including causal inference in supervised settings, SS inference under the MCAR condition, missing data problems with non-decaying labeling probabilities, and mean estimation in a non-causal decaying MAR context; see comparisons in Table \ref{table:settings} and Section \ref{sec:compare-S2}. Additionally, our research advances the literature on the generalizability of randomized controlled trials (RCTs) by integrating RCTs with unlabeled external data. In our framework, \( R \) denotes whether an individual participates in the RCT, and the absence of a strict positivity condition enables our method to address scenarios where the external data size exceeds that of the RCT—a prevalent yet unresolved challenge in causal generalizability.

Once the causal decaying MAR setting is established, we recognize that the SS causal inference problem can be framed as a two-occasion missing data problem (see identification results in Lemma \ref{lemma:rep}). Therefore, achieving CAN estimation for the ATE becomes a natural extension of existing doubly robust (DR) methods \citep{chernozhukov2018double, zhang2023double}, {except now} within the causal decaying MAR framework. See Theorem \ref{thm:gen-rate-DR}, where we derive CAN results with adaptive rates that account for decay in the PS, under the {\it condition} that the product nuisance estimation error is small enough. This leads to the property known as {\it\underline{rate DR}} {(`rate double robust')}; see Definition 2 of \cite{smucler2019unifying} for its supervised counterpart.

\begin{table}[t]
\caption{Comparison of the missing outcome settings. `Labeling bias' allows $R$ to be dependent on $(Y,\bX)$. `Causal setup $+$ missing $Y$' allows for the observed outcome $Y=Y(T)$ to be possibly missing with $R,T\not\equiv1$.
}\label{table:settings}
\resizebox{\columnwidth}{!}{%
\begin{tabular}{| p{9.2cm}| c | c | c |}
\hline
\multirow{2}{*}{Settings} & \multirow{1}{*}{Labeling } & \multirow{1}{*}{Decaying } & Causal $+$ \\
 & Bias& PS& missing $Y$\\
\hline
\multirow{3}{*}{\parbox{9.2cm}{\cite{kawakita2013semi,azriel2022semi,chakrabortty2018efficient,zhang2019semi,tony2020semisupervised,chan2020semi,xue2021semi,chakrabortty2022semi}}} & \multirow{3}{*}{$\xmark$} & \multirow{3}{*}{$\cmark$} & \multirow{3}{*}{$\xmark$}\\
&&&\\
&&&\\
\hline
\multirow{2}{*}{\parbox{9.2cm}{\cite{cheng2021robust,hou2021efficient,zhang2022high,chakrabortty2022general}}} & \multirow{2}{*}{$\xmark$} & \multirow{2}{*}{$\cmark$} & \multirow{2}{*}{$\cmark$}\\
&&&\\
\hline
\multirow{3}{*}{\parbox{9.2cm}{\cite{rubin1976inference,robins1994estimation,robins1995semiparametric,
bang2005doubly,tsiatis2007semiparametric,kang2007demystifying,graham2011efficiency,chakrabortty2019high}}} & \multirow{3}{*}{$\cmark$} & \multirow{3}{*}{$\xmark$} & \multirow{3}{*}{$\xmark$}\\
&&&\\
&&&\\
\hline
\multirow{2}{*}{\parbox{9.2cm}{\cite{dahabreh2019generalizing,lesko2017generalizing,kallus2024role,wei2022doubly}}} & \multirow{2}{*}{$\cmark$} & \multirow{2}{*}{$\xmark$} & \multirow{2}{*}{$\cmark$}\\
&&&\\
\hline
\cite{zhang2023double} & $\cmark$ & $\cmark$ & $\xmark$\\
\hline
The proposed (causal) decaying MAR setting & $\cmark$ & $\cmark$ & $\cmark$\\
\hline
\end{tabular}
}
\end{table}

\paragraph*{Robust inference {for the ATE} through bias-reducing approaches}

While achieving rate DR {estimators} through DR representations and cross-fitting techniques has become a standard approach \citep{chernozhukov2018double, zhang2023double}, this requires consistent estimation for {\it both} the OR and product PS models {and} with {\it fast enough} convergence rates {-- any violations of these can only possibly ensure consistency but {\it not} the CAN property (nor rate) necessary for valid inference}. Such requirements are often too restrictive in practice, necessitating additional efforts to mitigate the impact of model misspecification and relatively slow nuisance estimation rates, especially in high-dimensional or non-parametric settings. To improve robustness {and guarantee CAN properties (therefore, valid inference)}, we propose two novel classes of DR methods, each based on distinct PS representations.

The first approach jointly models the product PS. To mitigate the bias introduced by model misspecification, we adopt the loss functions from \cite{tan2020model, smucler2019unifying, avagyan2021high, bradic2019sparsity} and extend the method to the decaying MAR setting, allowing for a decaying labeling PS. This method is referred to as the \emph{bias-reduced SS (BRSS)} estimator; see Section \ref{subsec:nuisance}. CAN results are established as long as either the OR model or the product PS model is correctly specified, a property also known as {\it\underline{model DR}}; see Definition 3 of \cite{smucler2019unifying}. Furthermore, using a specialized asymmetric cross-fitting technique, we further demonstrate that the proposed method is not only {\it model DR}, but also {\it\underline{sparsity DR}} (see a weaker version in Theorem 1 of \cite{bradic2019sparsity}). This means that CAN results can be achieved without the typical product-rate condition, as long as one nuisance model exhibits an ultra-sparse structure, and the sparsity condition for the other model is relatively mild; see Theorem \ref{cor:para}. In contrast to the aforementioned works that focus on supervised settings, our framework supports a decaying PS, making it applicable to SS problems where \( N \gg n \). {It is important to note that due} to the extreme missingness, the {\it effective sample sizes} for both nuisance estimation and final ATE estimation {\it deviate from standard scenarios} {(see below Theorem \ref{thm:gen-rate-DR} for its explicit characterization).} Our work provides an initial attempt to understand the necessary sparsity conditions for achieving CAN results in this non-standard setting. Moreover, even in canonical supervised settings with \( N = n \), we impose weaker sparsity conditions than those required by these approaches, as shown in Table \ref{table:sparsity} and Section \ref{sec:compare-S2}.

While directly modeling the product PS and treating the SS causal inference problem entirely as a two-occasion missing data problem is a feasible approach, it does not fully leverage the asymmetry between labeling and treatment mechanisms. Under the causal decaying MAR setup, we have rich information for the treatment PS due to the fully observed pairs \((\bX_i, T_i)\) with balanced treatment groups. In contrast, when dealing with a large number of unlabeled samples, the labeling groups \(R_i\) are highly imbalanced, making labeling PS estimation more difficult. Therefore, our second approach, the \emph{de-coupled BRSS (DC-BRSS)} estimator, de-couples the PS estimation. We estimate the treatment PS fully non-parametrically and introduce a novel integration of IPW and bias-reducing techniques in nuisance estimation. This approach further extends the range of both {\it sparsity DR} and {\it model DR} properties, as detailed in Theorem \ref{t4-SP} and Remark \ref{remark:MDR1}. {Finally, it is worth emphasizing that} our work tackles {technical} challenges {\it beyond} traditional high-dimensional analyses, such as extreme label imbalance, sequential nuisance estimation, and establishing CAN results under weaker model and sparsity conditions; see Remarks \ref{remark:challenge-ATE} and \ref{remark:challenge-nuisance}. The results, including those on nuisance estimation, are of independent interest and offer broader insights beyond the immediate scope of this study.}

\paragraph*{Organization}
{The {rest of the} paper is structured as follows. Section \ref{sec:psetup} introduces {some preliminaries, including} basic identifications and an initial rate DR estimator. Section \ref{sec:MT} presents the proposed bias-reduced estimators, followed by the main theoretical results in Section \ref{sec:theory} and a comparative analysis with existing approaches in Section \ref{sec:compare-S2}.} Numerical results based on simulated and pseudo-random datasets are provided in Sections \ref{sec:sim} and \ref{sec:data_analysis}, respectively, {followed by a concluding discussion in Section \ref{sec:disc}. Additional discussions, theoretical results, and proofs of all results in the main paper are included in the \hyperref[supp_mat]{Supplement} (Sections \ref{subsec:misT}-\ref{sec:proof_sec:gen}).} 

\paragraph*{Notation}
Throughout {the paper}, {we generically denote positive constants independent of \( N \) as lowercase or uppercase letters} \( c \) and \( C \) {which may vary across instances}.
The notations \( \P(\cdot) \) and \( \E(\cdot) \) represent {the probability and expectation under the joint distribution of $\mathbb{Z}:= \{\bX,T,R,Y(1),Y(0)\}$.} For any subset \( \Asc \) {of data}, \( \P_{\Asc} \) and \( \E_{\Asc}(\cdot) \) represent {probability and expectation under the joint distribution of $\Asc$.} \( \P_{\bX} \) and  {\( \E_{\bX}\{f(\bX)\} := \int f(\bx) d\P_{\bX}(\bx) \) respectively} denote the marginal distribution of \( \bX \) and the expectation of any function \( f \) {of $\bX$ (where $f(\cdot) \equiv \fhat(\cdot)$ itself maybe a random function $\ind$ of its random argument $\bX \sim \P_{\bX}$)}. For any \( r>0 \) {and any (non-random) function $g$ of $\mathbb{Z}$, we let \( \|g(\cdot)\|_{\P,r} := \{\E|g(\mathbb{Z})|^r\}^{1/r} \),} and for any vector \( \bz \in \mathbb{R}^d \) {(for any $d$), we let} \( \|\bz\|_r := (\sum_{j=1}^d|\bz_j|^r)^{1/r} \), \( \|\bz\|_0 := |\{j: \bz_j \neq 0\}| \), and \( \|\bz\|_\infty := \max_j|\bz_j| \). For any square matrix \( \bSigma\), \( \|\bSigma\|_r := \sup_{\|\bv\| \neq 0} \|\bSigma \bv\|_r / \|\bv\|_r \). For any $a,b\in\R$, let $a\land b=\min\{a,b\}$ and $a\lor b=\max\{a,b\}$. Lastly, \( \be_j \in \mathbb{R}^d \) denotes the \( j \)-th column of the identity matrix \( \mathbbm{I}_d \). {Finally, all key statistical notations used in the paper are compactly summarized in Table \ref{table:notations} of the \hyperref[supp_mat]{Supplement}.} 

\section{Preliminaries: identifications and a rate DR estimator}\label{sec:psetup}

We first establish some {basic identification results for $\theta_1$} under the causal decaying MAR framework. Next, {based on the identification}, we introduce a {natural} \emph{rate DR} SS estimator {of $\theta_1$} as a baseline procedure and demonstrate its rate DR property within a non-standard asymptotic regime.

\subsection{Identification of the parameter(s)}\label{sec:identification}
Given the definitions in \eqref{eq:key-defs0}-\eqref{eq:key-defs2} and Assumption \ref{cond:basic}, we introduce an augmented inverse probability weighting (AIPW) representation, also known as the doubly robust (DR) form, incorporating the OR model in \eqref{eq:key-defs0} and the product PS model in \eqref{eq:key-defs3}.

\begin{lemma}[Identification of $\theta_1$]\label{lemma:rep}
Let Assumption \ref{cond:basic} hold. Then, for any arbitrary functions $m^*(\cdot)$ and $\gamma_N^*(\cdot)$, as long as either $m^*(\cdot) = m(\cdot)$ or $\gamma_N^*(\cdot) = \gamma_N(\cdot)$ holds but not necessarily both, we have:
\[\E\left\{\psi^*_{N} (\bZ)\right\} =0,\;\;\mbox{with}\;\;\psi^*_{N}(\bZ)=m^*(\bX) - \theta_1+\frac{\Gamma}{\gamma_N^*(\bX)} \left\{Y - m^*(\bX)\right\}.\]
\end{lemma}

{The representation above clarifies that estimating \(\theta_1 = \E\{Y(1)\}\) can be framed as a mean estimation problem under a MAR labeling, where the \emph{effective labeling indicator} is \(\Gamma = TR\). In this context, the potential outcome \(Y(1)\) is observable only when both \(T=1\) and \(R=1\). Consequently, the SS causal inference problem mirrors a two-occasion missing data problem, akin to the structures found in \cite{robins1994estimation,bang2005doubly,sun2022high,shi2023data}, among others. Recognizing this, rate DR estimation becomes a natural adaptation of existing DR methods \citep{chernozhukov2018double,zhang2023double} {but now} within the {\it causal decaying MAR framework}.}

\subsection{{A rate DR estimator under the decaying MAR setting}}\label{sec:gen-DR-DMAR}

We split the samples into $\K\geq2$ parts, $\mathcal I_1,\dots,\mathcal I_{\K}$ of equal sizes $|\mathcal I_k|=M=N/\K$ and define $\mathcal I_{-k}=\mathcal I\setminus\mathcal I_k$, $\forall k\leq\K$. Let $\mhat^{(-k)}(\cdot)$ and $\gammahat_{N}^{(-k)}(\cdot)$ be estimators of $m(\cdot)$ and $\gamma_N(\cdot)$, respectively, using the samples from $\mathcal I_{-k}$, based on suitable (working) models.

We now define the \emph{rate DR (R-DR) estimator} $\thetahat_{1,\mbox{\tiny R-DR}}$ of
$\theta_1$ as
\begin{align}\label{eq:SS-est-def}
\thetahat_{1,\mbox{\tiny R-DR}} \; := \; \frac{1}{N} \sum_{k=1}^K\sum_{i\in\mathcal I_k}^N \left[\mhat^{(-k)}(\bX_i) + \frac{\Gamma_i\{Y_i - \mhat^{(-k)}(\bX_i)\}}{\gammahat_{N}^{(-k)}(\bX_i)}\right],\;\;\mbox{where}\;\;\Gamma_i:=T_iR_i.
\end{align}
Similarly, the R-DR estimator for $\theta_0$, denoted as $\thetahat_{0,\mbox{\tiny R-DR}}$, is obtained by replacing $T_i$ with $1-T_i$. The R-DR estimator for $\mu_0$ is then given by $\muhat_{\mbox{\tiny R-DR}} = \thetahat_{1,\mbox{\tiny R-DR}} - \thetahat_{0,\mbox{\tiny R-DR}}$.

\paragraph*{Asymptotic properties}

{The rate DR property of $\thetahat_{1,\mbox{\tiny R-DR}}$ can be established using the techniques from \cite{zhang2023double}, where the product $\Gamma_i := T_iR_i$ is treated as the effective labeling indicator for estimating $\theta_1=\E\{Y(1)\}$. Below, we {only} summarize the key findings, with a full description provided in Theorem \ref{t1} {(and Assumptions \ref{a2:high-level}-\ref{a3:L-F}) in Section \ref{sec:add-S1}} of the \hyperref[supp_mat]{Supplement}}.

\begin{theorem}[Asymptotic results for $\thetahat_{1,\mbox{\tiny R-DR}}$]\label{thm:gen-rate-DR}
Let Assumptions \ref{cond:basic} and \ref{a2:high-level}-\ref{a3:L-F} of the \hyperref[supp_mat]{Supplement} (Section \ref{sec:add-S1}) hold. Suppose that both nuisance models are correctly specified, i.e., $\gamma^*_N(\cdot)=\gamma_N(\cdot)$ and $m^*(\cdot) = m(\cdot)$. Let $\E\{\gamma_N^{-1}(j,\bX)\}<\infty$ {(for each $N$)} and {define:
\begin{align}
a_{N} \, := \, [\E\{\gamma_N^{-1}(\bX)\}]^{-1}.\label{eq:a_N-defn}
\end{align}
Assume that} $Na_N \rightarrow \infty$ and possibly, $a_N \rightarrow 0$, as $N \rightarrow \infty$.
Let $c<\E[\{Y(1)-m(\bX)\}^2\mid\bX]\leq C$ almost surely and $\Var\{Y(1)\}\leq C<\infty$ with constants $c,C>0$. Suppose that the following product-rate condition holds:
\begin{align}\label{cond:product-rate}
\E_{\bX}\{\mhat^{(-k)}(\bX) - m(\bX) \}^2\E_{\bX}\left\{1 - \frac{\gamma_N(\bX)}{\gammahat_N^{(-k)}(\bX)} \right\}^2 = o_p(1/(Na_N)).
\end{align}
{Define the DR score:} $\psi_{N}^{opt}(\bZ) := m(\bX) - \theta_1 + \Gamma\{Y(1) - m(\bX)\}/\gamma_N(\bX)$, {and let} $\Sigma_N^{opt} := \Var\{\psi^{opt}_{N}(\bZ)\}$.
Then, {$\Sigma_N^{opt} \asymp a_N^{-1}$,}
\begin{align}\label{result:normal1}
\sqrt{Na_N}(\thetahat_{1,\mbox{\tiny R-DR}} - \theta_1) \, = \, O_p(1), \;~\mbox{and}~\; \sqrt{N} \left(\Sigma^{opt}_N\right)^{-1/2}(\thetahat_{1,\mbox{\tiny R-DR}} - \theta_1) \, \xrightarrow{d} \, \mathcal N(0,1).
\end{align}
\end{theorem}

Here, \( Na_N \) denotes the \emph{effective sample size} under 
the decaying MAR {setting,} 
while \( a_N \) signifies the {\it deceleration factor} resulting from the decaying PS, giving rise to atypical convergence rates {of the form ${(Na_N)}^{-1/2}$}.

\begin{remark}[Necessity and usefulness of the decaying PS]\label{remark:dPS}
The introduced decaying MAR schema addresses the dependence of $a_N$, $\gamma_N(\bX)$, $R_i\equiv R_{N,i}$, and $R_i\mid(T_i,\bX_i)$ on $N$ in $N \gg n$ SS causal contexts. This schema accounts for the asymptotic scenario where $N\to\infty$ and $\P(R=1)\to0$. Previous studies may have overlooked this aspect, leading to limitations in exploring the crucial $N \gg n$ setting and only allowing $\P(R=1)=c=\lim_{N\to\infty}\E(n/N) >0$.
In conventional doubly robust (DR) literature, estimation error control relies on $a_N^{-2}\|\balphahat-\balpha^*\|_2 =o(1)$ for a nuisance parameter $\balpha^*$ \tcr{and its corresponding estimator $\balphahat$}; see, e.g., the control of $\mathcal I_2$ in Step 5 of the proof of Theorems 5.1 and 5.2 of \cite{chernozhukov2018double}. However, this condition becomes excessively demanding when $a_N\to0$, rendering accurate estimation of the nuisance parameters practically unattainable. For instance, as the expected inverse of the labeled sample size is $\E(1/n)\asymp 1/(Na_N)$, we have $\|\balphahat-\balpha^*\|_2=O_p((Na_N)^{-1/2})$ even in low dimensions. This leads to a very restrictive requirement on the decaying probability, $Na_N^5\to\infty$. Through refined analysis on the decaying PS, our results only require $Na_N\to\infty$.
\end{remark}

\begin{remark}[The rate DR property]\label{remark:rate-DR}
{The estimator \(\thetahat_{1,\mbox{\tiny R-DR}}\) is {\it \underline{rate DR}}, indicating that the asymptotic normality in \eqref{result:normal1} is achieved when the product-rate condition \eqref{cond:product-rate} holds for the nuisance estimation errors. Unlike the standard rate DR property seen in traditional supervised settings \citep{chernozhukov2018double, smucler2019unifying}, both the required product-rate condition and the resulting convergence rate of \(\thetahat_{1,\mbox{\tiny R-DR}}\) are influenced by the \emph{`effective sample size'} \(Na_N\), which adaptively incorporates the PS decay. Furthermore, the established rate and asymptotic variance in \eqref{result:normal1} aligns with the semi-parametric efficiency bound established by \cite{kallus2024role} when \(a_N \to 0\) as \(N \to \infty\). Notably, the nuisance estimation rates are also governed by \(Na_N\), rather than the total sample size \(N\); see, e.g., \cite{wang2020logistic, zhang2023double}.}
\end{remark}

{The asymptotic normality result \eqref{result:normal1} holds under the correct specification of both the OR and PS models. If one of these nuisance models is misspecified, \(\thetahat_{1,\mbox{\tiny R-DR}}\) remains consistent but converges at a slower rate, proportional to the estimation error of the correctly specified model, as outlined in Theorem \ref{t1}. Consequently, model misspecification can introduce significant bias, particularly when nuisance parameters cannot be estimated at a parametric rate, which is often the case in high-dimensional or non-parametric settings. In Section \ref{sec:MT}, we propose refined estimators that further achieve \emph{sparsity DR} and \emph{model DR}, thereby relaxing the model specification and complexity conditions required for CAN estimation.}

\section{The bias-reduced estimators}\label{sec:MT}

{In this section, we focus on developing robust inference methods for the ATE with enhanced robustness. We introduce two novel doubly robust (DR) estimators, representing the primary methodological contribution of this work. First, we propose the bias-reduced SS (BRSS) estimator in Section \ref{subsec:nuisance}, providing robust inference under possible model misspecification. Building on this, we {further} develop a de-coupled BRSS (DC-BRSS) estimator in Section \ref{subsec:SP}, which makes better use of the large number of unlabeled samples to strengthen robustness.}

\subsection{A bias-reduced SS (BRSS) ATE estimator}\label{subsec:nuisance}
We improve the estimator from Section \ref{sec:gen-DR-DMAR} by defining refined nuisance estimates to minimize bias {from model misspecification}. Consider the following \emph{working} models for $m(\cdot)$ and $\gamma_N(\cdot)$ with `targets':
\begin{equation}\label{eq:models_para}
m^*(\bx) = \varphi(\bx)^T\balpha^* 
\;\;\; \mbox{and} \;\;\; \gamma^*_N(\bx) = g(\varphi(\bx)^T\bbeta^*+\log\gamma_N) = \frac{\gamma_N\exp(\varphi(\bx)^T\bbeta^*)}{1+\gamma_N\exp(\varphi(\bx)^T\bbeta^*)},
\end{equation}
where $g(u):=\exp(u)/\{1+\exp(u)\}$ is the logistic function, $\gamma_N:=\E(\Gamma)$, and {$\varphi(\cdot):\R^p \mapsto \R^d$ represents a pre-specified feature map, enabling the use of series-type non-parametric nuisance estimators}. We extend the analysis to {allow for} {\it high-dimensional} settings, where $d$ is potentially large relative to the `effective sample size,' \(Na_N\). In \eqref{eq:models_para}, the \(\log\gamma_N\) term serves as an \emph{offset}, ensuring uniform decay in \(N\) for the working model \(\gamma^*_N(\bx)\) when \(\gamma_N \to 0\). This is particularly suitable when the true PS \(\gamma_N(\bx) = \P(TR=1 \mid \bX=\bx)\) also decays uniformly in \(N\), a common occurrence in SS settings with extreme outcome missingness. In the above, $\balpha^*,\bbeta^*\in\R^d$ are the \emph{targeted bias-reducing nuisance parameters} defined as follows:
\begin{align*}
\bbeta^* =~\arg\min_{\bbeta\in\R^d}\E\left\{l (\Gamma, \bX , \bbeta, \gamma_N) \right\} \quad \mbox{and} \quad
\balpha^* =~\arg\min_{\balpha\in\R^d}\E\left\{h(\Gamma, \bX, Y, \balpha, \bbeta^*, \gamma_N)\right\},
\end{align*}
with the loss functions $l$ and $h$ defined as
\begin{align}
l (\Gamma, \bX , \bbeta, \gamma_N)&=(1-\Gamma)\bS^T\bbeta+\frac{\Gamma}{\gamma_N}\exp(-\bS^T\bbeta) \quad \mbox{and}\label{def:loss-1}\\
h(\Gamma, \bX, Y, \balpha, \bbeta, \gamma_N)&=\frac{\Gamma}{\gamma_N}\exp(-\bS^T\bbeta )(Y-\bS^T\balpha)^2,\;\;\mbox{where}\;\;\bS:=\varphi(\bX).\label{def:loss-2}
\end{align}
Let $s_{\bbeta}:=\|\bbeta^*\|_0$ and $s_{\balpha}:=\|\balpha^*\|_0$ be the corresponding \emph{sparsity} levels, and assume $s_{\bbeta},s_{\balpha}\geq1$ for the sake of simplicity.
Note that $\balpha^*$ and $\bbeta^*$ always exist, and also equal the corresponding `true' model parameters when the working models in \eqref{eq:models_para} are correctly specified for $m(\cdot)$ or $\gamma_N(\cdot)$. In the following, we require only that {\it either} $m^*(\cdot)$ or $\gamma_N^*(\cdot)$ provides a good approximation of $m(\cdot)$ or $\gamma_N(\cdot)$, but {\it not} necessarily both.

We define {\it targeted bias-reducing nuisance estimators} as follows: split the samples into $\K=2$ equal parts, $\mathcal I_1$ and $\mathcal I_2$, with $|\mathcal I_1|=|\mathcal I_2|=M=N/2$. For each $k\in\{1,2\}$, define
\begin{align}
\gammahat_N^{(k)}&:=~M^{-1}\sum_{i\in\mathcal I_k}\Gamma_i,\label{def:gammahat}\\
\bbetahat^{(k)}&:=~\arg\min_{\bbeta\in\R^d}M^{-1}\sum_{i\in\mathcal I_k} l (\Gamma_i, \bX_i, \bbeta, \gammahat_N^{(k)}) +\lambda_{\bbeta}\|\bbeta\|_1\;\;\mbox{s.t.}\;\;\max_{i\in\mathcal I_k}|\bS_i^\top\bbeta|<C,\label{def:betahat} ~\; \mbox{and}\\
\balphahat^{(k)}&:=~\arg\min_{\balpha\in\R^d}M^{-1}\sum_{i\in\mathcal I_k} h(\Gamma_i, \bX_i, Y_i, \balpha, \bbetahat^{(k)}, \gammahat_N^{(k)})
 +\lambda_{\balpha}\|\balpha\|_1,\label{def:alphahat}
\end{align}
where $\lambda_{\balpha}, \lambda_{\bbeta} \geq 0$ denote the respective tuning parameters for the $\ell_1$-regularizations {and $C>0$ is any large enough constant satisfying $C>C_0$ with $C_0$ defined in Assumption \ref{cond:bound-SP}. Define \(\psi_{N}(\bZ;\balpha,\bbeta)=\bS^T\balpha-\theta_1+\Gamma(Y-\bS^T\balpha)/g(\bS^T\bbeta+\log\gamma_N)\) as the DR score function.
The nuisance estimates \eqref{def:gammahat}-\eqref{def:alphahat} and the corresponding loss functions \eqref{def:loss-1}-\eqref{def:loss-2} are constructed to ensure that the associated Karush-Kuhn-Tucker (KKT) conditions guarantee that the Neyman orthogonal condition \citep{chernozhukov2018double} \(\E\{\nabla_{\balpha}\psi_{N}(\bZ;\balpha^*,\bbeta^*)\}=\bzero\) and \(\E\{\nabla_{\bbeta}\psi_{N}(\bZ;\balpha^*,\bbeta^*)\}=\bzero\) are satisfied, \emph{even when the models are misspecified}. This approach effectively reduces bias in ATE estimation, particularly under model misspecification. To ensure robust ATE inference, \cite{tan2020model,smucler2019unifying,avagyan2021high} explored similar nuisance estimates, though only in degenerate supervised settings. We adapt these to the decaying MAR context with our proposed bias-reducing estimators. Notably, the PS estimation in \eqref{def:betahat} can be interpreted as a stabilized high-dimensional logistic regression designed for extreme label imbalance.} We incorporate an apriori chosen estimate $\gammahat_N^{(k)}$ to counterbalance the impact of diminishing PSs. In \eqref{def:betahat}, we aggregate two sums: $\sum_{i\in\mathcal I_{-k}:\Gamma_i=0}\bS_i^T\bbeta$ and $\sum_{i\in\mathcal I_{-k}:\Gamma_i=1}\exp(-\bS_i^T\bbeta)/\gammahat_N^{(k)}$. The set where $\Gamma_i=0$ predominates over the set where $\Gamma_i=1$. To compensate for this imbalance, we amplify the latter group by a large factor of $1/\gammahat_N^{(k)}$, ensuring a balanced influence from both groups.

With a special asymmetric cross-fitting, we propose the \emph{bias-reduced SS counterfactual mean estimator} $\thetahat_{\mbox{\tiny 1,BRSS}}$ of $\theta_1$ as:
 $\thetahat_{\mbox{\tiny 1,BRSS}}:=(\thetahat_{\mbox{\tiny 1,BRSS}}^{(1)}+\thetahat_{\mbox{\tiny 1,BRSS}}^{(2)})/2$, where {for} $k\neq k'\in\{1,2\}$,
\begin{align}
\thetahat_{\mbox{\tiny 1, BRSS}}^{(k)}~:=~M^{-1}\sum_{i\in\mathcal I_k}\left\{\bS_i^T\balphahat^{(k')}+\frac{\Gamma_i(Y_i-\bS_i^T\balphahat^{(k')})}{\gammahat_N^{(k)}(\bX_i)}\right\},\label{eq:SS-Lasso-est1-def}
\end{align}
with $\gammahat_N^{(k)}(\bX_i)=g(\bS_i^T\bbetahat^{(k)}+\log\gammahat_N^{(k)})$.
Analogously, we propose a bias-reduced estimator $\thetahat_{\mbox{\tiny 0,BRSS}}$ of $\theta_0$, and the \emph{bias-reduced SS (BRSS) ATE estimator} as:
\begin{align}\label{def:BRSS-ATE}
\muhat_{\mbox{\tiny BRSS}} ~:=~ \thetahat_{\mbox{\tiny 1,BRSS}}-\thetahat_{\mbox{\tiny 0,BRSS}}.
\end{align}

\paragraph*{The asymmetric cross-fitting}

In the following, we introduce the rationale behind the asymmetric cross-fitting strategy focusing on the simplest case where
both the OR and PS models are correctly specified. For $i\in\mathcal I_1$, the PS function $\gamma_N(\bX_i)$ is estimated using a non-cross-fitted $\bbetahat^{(1)}\not\independent\bZ_i$, whereas the OR function $m(\bX_i)$ is estimated using a cross-fitted $\balphahat^{(2)}\independent\bZ_i$. W.l.o.g., we consider $\thetahat_{\mbox{\tiny 1,BRSS}}^{(1)}~=~M^{-1}\sum_{i\in\mathcal I_1}\psi_{N}(\bZ;\balpha^*,\bbeta^*)+\delta_1+\delta_2+\delta_3,$ where
\begin{align}
\delta_1~:=&~-M^{-1}\sum_{i\in\mathcal I_1}\left\{\frac{\Gamma_i}{\gammahat_N^{(1)}(\bX_i)}-\frac{\Gamma_i}{\gamma_N^*(\bX_i)}\right\}\bS_i^T(\balphahat^{(2)}-\balpha^*),\nonumber\\
\delta_2~:=&~M^{-1}\sum_{i\in\mathcal I_1}\left\{1-\frac{\Gamma_i}{\gamma_N^*(\bX_i)}\right\}\bS_i^T(\balphahat^{(2)}-\balpha^*),\nonumber \\
\delta_3~:=&~M^{-1}\sum_{i\in\mathcal I_1}\left\{\frac{\Gamma_i}{\gammahat_N^{(1)}(\bX_i)}-\frac{\Gamma_i}{\gamma_N^*(\bX_i)}\right\}(Y_i-\bS_i^T\balpha^*). \nonumber
\end{align}

{We consider two distinct approaches for controlling the effects of nuisance estimation, depending on the approximation error of the OR model. When the OR model provides a good approximation of the truth, $\delta_3$ can be treated as the average of conditionally independent terms (given all pairs $(\bX_i, \Gamma_i)$) with means approaching zero. Importantly, the PS estimator does not require cross-fitting, as its construction is independent of the outcomes. The remaining terms, $\delta_1$ and $\delta_2$, can be jointly controlled using the KKT conditions for the PS estimate. In contrast, when the OR model deviates from the truth, our asymmetric cross-fitting approach effectively controls $\delta_2$ by ensuring that $\balphahat^{(2)}$ is independent of the sample $(\bX_i, \Gamma_i)_{i \in \mathcal{I}_1}$. The remaining terms, $\delta_1$ and $\delta_3$, are further managed via the KKT condition for the OR estimate. The asymmetry arises because the PS model estimation only involves $(\bX_i, \Gamma_i)$, whereas OR model estimation requires the entire triple $(\bX_i, \Gamma_i, \Gamma_i Y_i)$. This technique imposes less stringent sparsity conditions for valid inference, as demonstrated in Theorems \ref{cor:para} and \ref{t4-SP}.}

Our proposed approach innovatively combines the strengths of existing strategies to enhance robustness for ATE inference. We draw on the methodology of \cite{farrell2015robust,tan2020model,avagyan2021high} for non-cross-fitted PS estimates and the cross-fitted OR estimate approach of \cite{chernozhukov2018double,smucler2019unifying}. As opposed to previous works, our method relaxes the ultra-sparsity requirements on both $\balpha^*$ and $\bbeta^*$ by introducing asymmetric cross-fitting. This novel combination provides superior robustness under degenerate supervised settings and requires weaker sparsity conditions.
{Although \cite{bradic2019sparsity} employs cross-fitting akin to ours, our method and analysis streamline implementation, expand applicability, and reveal additional theoretical insights. See Table \ref{table:sparsity} and Section \ref{sec:compare-S2} for detailed comparisons.}

\subsection{A de-coupled BRSS (DC-BRSS) estimator}\label{subsec:SP}

{In this section, we introduce a de-coupled approach that, unlike the R-DR and BRSS methods in Sections \ref{sec:gen-DR-DMAR} and \ref{subsec:nuisance}, treats the missingness patterns of \(R\) and \(T\) separately, rather than combining them as \(\Gamma = R \cdot T\). This distinction provides several advantages. First, by fully utilizing all \(N\) of the \((\bX_i, T_i)\) pairs, we achieve more accurate estimation of \(\pi(\bx) = \P(T=1\mid\bX=\bx)\), thereby improving overall precision. In contrast, relying on \(\Gamma_i = R_i T_i\) often leads to information loss, as many of the \(R_i\) values are zero. Second, when \(N\) is substantially larger than the labeled sample size \(n\), fully non-parametric methods (e.g., random forests or neural networks) can be employed to estimate the treatment PS \(\pi(\bx)\), enhancing flexibility while maintaining precision. In contrast, due to the limited effective sample size for estimating the labeling PS \(p_{N}(\bx) = \P(R=1 \mid T=1, \bX = \bx)\), we use linear-form nuisance estimates to simplify the model and propose ATE estimators designed to mitigate bias from potential model over-simplification and to reduce the impact of misspecification errors. Finally, by de-coupling the product propensity score \(\gamma_N(\bx) = \pi(\bx) p_{N}(\bx)\), we reduce model complexity by extracting out the effect of the treatment PS \(\pi(\bx)\). This approach improves both estimation and approximation accuracy, especially when \(\pi(\bx)\) and \(p_{N}(\bx)\) are sparse with highly non-overlapping active features; see Remark \ref{remark:MDR1}.}

The de-coupled approach posits the following working models,
\begin{equation*}
\mtil^*(\bx) = \varphi(\bx)^T\balphatil^*,\;\;\gammatil_N^*(\bx) = \pi^*(\bx)p_{N}^*(\bx),\;\;\mbox{and}\;\;p_{N}^*(\bx)=g(\varphi(\bx)^T\bbeta_{p}^*+\log p_{N}).
\end{equation*}
where $p_{N} := \P(R=1 \mid T=1)$, $\pi^*(\cdot)$ is a (potentially) non-parametric model for $\pi(\cdot)$, the treatment PS \eqref{eq:key-defs2}, and $p_{N}^*(\cdot)$ is an offset-based logistic model for $p_{N}(\cdot)$, the labeling PS \eqref{eq:key-defs1}. {We introduce a novel re-parametrization based on an inverse probability weighting (IPW) approach leveraging the treatment PS. By replacing the effective labeling indicator $\Gamma$ in the loss functions \eqref{def:loss-1}-\eqref{def:loss-2} with a re-weighted version $\Gamma/\pi^*(\bX)$, we define the new (target) nuisance parameters $\balphatil^*,\bbeta_{p}^*\in\R^d$ as:}
\begin{align*}
\bbeta_{p}^* &= \arg\min_{\bbeta\in\R^d}\E\left\{l\Bigl(\frac{\Gamma}{\pi^*(\bX)}, \bX, \bbeta, p_{N}\Bigl)\right\}, \quad \mbox{and} \\
\balphatil^* &= \arg\min_{\balpha\in\R^d}\E \biggl\{h \Bigl(\frac{\Gamma}{\pi^*(\bX)},\bX,Y, \balpha, \bbeta_{p}^*, p_{N}\Bigl)\biggl\}.
\end{align*}
We define $s_{p}:=\|\bbeta_{p}^*\|_0$ and $s_{\balphatil}:=\|\balphatil^*\|_0$ as the corresponding sparsity levels, and assume $s_{p},s_{\balphatil}\geq1$ for the sake of simplicity. {{We clarify here that} the subscripts in \(\bbeta_p^*\) and \(s_p\) indicate their association with the function \(p_N(\cdot)\), rather than the dimension of the covariate vector.}

\paragraph*{The {de-coupled} bias-reduced SS ATE estimator}
Consider the {same 2-fold} sample-splitting strategy {as the one} outlined in Section \ref{subsec:nuisance}. Let \(\pihat^{(1)}(\cdot)\) and \(\pihat^{(2)}(\cdot)\) be {generic} user-specified, possibly non-parametric, estimates of \(\pi(\cdot)\), constructed using the pairs \((T_i, \bX_i)\) from the sub-samples \(\mathcal{I}_1\) and \(\mathcal{I}_2\), respectively. For $k\in\{1,2\}$, let
\begin{align*}
\phat_{N}^{(k)} & =\sum_{i\in\mathcal I_k}\Gamma_i/\sum_{i\in\mathcal I_k}T_i, \\
\bbetahat_{p}^{(k)}& =~\arg\min_{\bbeta\in\R^d}M^{-1}\sum_{i\in\mathcal I_k} l\Bigl(\frac{\Gamma_i}{\pihat^{(k)}(\bX_i)}, \bX_i , \bbeta, \phat_{N}^{(k)}\Bigl) +\lambda_{\bbeta}\|\bbeta\|_1\;\;\mbox{s.t.}\;\;\max_{i\in\mathcal I_k}|\bS_i^\top\bbeta|<C, ~\; \mbox{and}\\
\balphatil^{(k)}&=~\arg\min_{\balpha\in\R^d}M^{-1}\sum_{i\in\mathcal I_k} h\Bigl(\frac{\Gamma_i}{\pihat^{(k)}(\bX_i)}, \bX_i , Y_i, \balpha, \bbetahat_{p}^{(k)}, \phat_{N}^{(k)}\Bigl)+\lambda_{\balpha}\|\balpha\|_1,
\end{align*}
where with a slight abuse of notations, $\lambda_{\balpha}, \lambda_{\bbeta} \geq 0$ denote the corresponding tuning parameters. Then, the \emph{de-coupled bias-reduced SS counterfactual mean estimator} $\thetahat_{\mbox{\tiny 1,DC-BRSS}}$ of $\theta_1$ is {defined as:} $\thetahat_{\mbox{\tiny 1,DC-BRSS}}:=(\thetahat_{\mbox{\tiny 1,DC-BRSS}}^{(1)}+\thetahat_{\mbox{\tiny 1,DC-BRSS}}^{(2)})/2$, where for any $k\neq k'\in\{1,2\}$,
\begin{align*}
\thetahat_{\mbox{\tiny 1,DC-BRSS}}^{(k)}~:=~M^{-1}\sum_{i\in\mathcal I_k}\left\{\bS_i^T\balphatil^{(k')}+\frac{\Gamma_i(Y_i-\bS_i^T\balphatil^{(k')})}{\phat_{N}^{(k)}(\bX_i)\pihat^{(k)}(\bX_i)}\right\},
\end{align*}
with $\phat_{N}^{(k)}(\bX_i)=g(\bS_i^T\bbetahat_{p}^{(k)}+\log\phat_{N}^{(k)})$. Analogously, we define $\thetahat_{\mbox{\tiny 0,DC-BRSS}}$ and the ATE estimator $\muhat_{\mbox{\tiny DC-BRSS}}:=\thetahat_{\mbox{\tiny 1,DC-BRSS}}-\thetahat_{\mbox{\tiny 0,DC-BRSS}}$.
The variance estimator is also considered
\begin{align}
\Sigmahat_{\mbox{\tiny 1,DC-BRSS}}~:=~N^{-1}\sum_{k=1}^2\sum_{i\in\mathcal I_k}\left\{\bS_i^T\balphatil^{(k')}+\frac{\Gamma_i(Y_i-\bS_i^T\balphatil^{(k')})}{\phat_{N}^{(k)}(\bX_i)\pihat^{(k)}(\bX_i)}-\thetahat_{\mbox{\tiny 1,DC-BRSS}}\right\}^2.
\end{align}

\section{Theoretical results}\label{sec:theory}

{In Section \ref{subsec:result_BRSS}, we establish the theoretical properties of the proposed BRSS (Section \ref{subsec:nuisance}) and DC-BRSS (Section \ref{subsec:SP}) estimators, followed by the theoretical guarantees for the nuisance estimates used in their construction, in Section \ref{subsec:PS}.}

\subsection{Asymptotic properties of the bias-reduced SS estimators}\label{subsec:result_BRSS}

We assume the following conditions to establish robust inference for the ATE.

\begin{assumption}[Sub-Gaussian covariates]\label{cond:subG}
{Assume that $\bS=\varphi(\bX)$ is a sub-Gaussian random vector with
$\|\bS^T\bv\|_{\psi_2}\leq\sigma\|\bv\|_2$ for all $\bv\in\R^d $ with some constant $\sigma>0$. Additionally, for some constant $\kappa_l>0$, assume that $\inf_{\|\bv\|_2=1}\Var(\bS^T\bv\mid\Gamma=1)\geq\kappa_l$ and $\inf_{\|\bv\|_2=1}\Var(\bS^T\bv)\geq\kappa_l$.}
\end{assumption}

\begin{assumption}[Moment condition on product PS]\label{cond:tail}
There exists some $q>1$, such that $\E\{\gamma_N^q(\bX)\}\leq C\gamma_N^q,$  for some constant $C>0$.
\end{assumption}

\begin{assumption}[Non-parametric estimation of treatment PS]\label{cond:NP-est}
With some constant $c>0$, let: (a) $c_0<\pi^*(\bx)<1-c_0$ for all $\bx\in\mathcal X$ and (b) the events $\mathcal E_\pi:=\{c_0<\pihat^{(k)}(\bx)<1-c_0,\forall k\in\{1,2\},\bx\in\mathcal X\}$ and $\mathcal E_\zeta:=\{(M\gamma_N)^{-1}\sum_{i\in\mathcal I_k}\Gamma_i\{\pihat^{(k)}(\bX_i)-\pi^*(\bX_i)\}^2\leq\zeta_N^2,\forall k\in\{1,2\}\}$ occur with probability approaching one as $N\to\infty$ with some $\zeta_N=o(1/\sqrt{\log N})$.
\end{assumption}

\begin{assumption}[Decaying overlap of labeling PS]\label{cond:bound-SP}
{Assume \(|\bS^\top \bbeta_{p}^*| < C_0\) almost surely for some constant \(C_0 > 0\) independent of \(N\), while allowing for \(p_{N} \to 0\) as \(N \to \infty\).}
\end{assumption}

\begin{assumption}[Conditional sub-Gaussian noise]\label{cond:subG'}
Conditional on $\bX$, let the noise $Y(1)-m(\bX)$ be a sub-Gaussian random variable with a constant $\psi_2$-norm $\sigma_\varepsilon>0$.
\end{assumption}

\begin{assumption}[Approximation error]\label{cond:approx}
{Let $|\gamma_N(\bX)/\gammatil_N^*(\bX)-1|\leq e_\gamma$ and $|m(\bX)-m^*(\bX)|\leq e_m$ almost surely with some sequences $e_\gamma,e_m=O(1)$.}
\end{assumption}

\begin{assumption}[Lower bound]\label{cond:lowerbound}
Let $\Var\{Y(1)-m(\bX)\mid\Gamma=1\}\geq c_{\min}$.
\end{assumption}

If Assumptions \ref{cond:basic}-\ref{cond:subG} hold and \( p_{N}^*(\cdot) = p_{N}(\cdot) \), then Assumption \ref{cond:tail} is satisfied for any constant \( q > 1 \), although it is only required for some \( q > 1 \). Assumption \ref{cond:NP-est} places constraints on the non-parametric estimate \( \pihat(\cdot) \) and is specifically required for the DC-BRSS estimator. Conditions similar to Assumption \ref{cond:NP-est} (a) are found in \cite{tan2020model} and \cite{ning2020robust} as `overlap conditions' for the treatment PS, which account for model misspecification adaptively. Furthermore, Assumption \ref{cond:NP-est} (b) ensures the convergence of the PS estimate toward a (potentially misspecified) target \( \pi^*(\cdot) \) by controlling the in-sample mean squared error. {Assumption \ref{cond:bound-SP} generalizes the overlap condition to scenarios with decaying MAR by introducing a diverging offset term \( \log p_{N} \), which captures the decaying propensity score while assuming the remaining signal is bounded. In non-decaying setups with $p_{N}\asymp1$, this assumption is equivalent to the standard overlap condition \( c_0 < p_{N}^*(\bX)=\{1+p_{N}^{-1}\exp(-\bS^\top\bbeta_{p}^*)\}^{-1} < 1 - c_0 \).} Assumptions \ref{cond:subG'} and \ref{cond:lowerbound} serve as regularity conditions for the noise in the OR model. {Additionally, we introduce the sequences \( e_m \) and \( e_\gamma \) to regulate the approximation errors for the OR and PS models, respectively, as specified in Assumption \ref{cond:approx}.}

As we allow $\pi(\cdot)\neq\pi^*(\cdot)$, we can set $\pihat^{(k)}(\cdot)\equiv\pi^*(\cdot)\equiv1$, and $\thetahat_{\mbox{\tiny 1,BRSS}}$ can be viewed as a special case of the de-coupled version $\thetahat_{\mbox{\tiny 1,DC-BRSS}}$, apart from a minor difference between offset terms $\log\gammahat_N^{(k)}$ and $\log\phat_{N}^{(k)}$. We first introduce the theoretical properties for the BRSS estimator under the special case $\pi^*(\cdot)\equiv1$, $\balphatil^*=\balpha^*$, and $\bbeta_{p}^*=\bbeta^*$.

\begin{theorem}[BRSS]\label{cor:para}
Let Assumptions \ref{cond:basic}-\ref{cond:tail} and \ref{cond:bound-SP}-\ref{cond:lowerbound} hold with $\balphatil^*=\balpha^*$ and $\bbeta_{p}^*=\bbeta^*$. Let $N\gamma_N\gg\max(\log^{5/2} N,s_{\bbeta},s_{\balpha})\log d\log^{1/2}N$, $e_me_\gamma=o(N^{-1/2}(\|\balpha^*\|_2+\gamma_N^{-1/2}))$ {(where recall $\gamma_N$ from \eqref{eq:key-defs3} and the sparsities $(s_{\bbeta}, s_{\balpha})$ from Section \ref{subsec:nuisance}),} and
either one of the following hold:
\begin{align*}
&\text{(a)}\;s_{\balpha}\sqrt{s_{\bbeta}} \, = \, o(N\gamma_N/\log^{3/2}{d}),\;s_{\bbeta} \, = \, o(\sqrt{N\gamma_N}/\log{d}); \quad\mbox{or}\\
&\text{(b)}\;s_{\balpha} \, = \, o(\sqrt{N\gamma_N}/\log{d}),\;e_m^2s_{\bbeta} \, = \, o(1/\log{d}).
\end{align*}
{Then, with} $\psi_{N}^*(\bZ):=m^*(\bX)-\theta_1+\Gamma\{Y-m(\bX)\}/\gamma_N^*(\bX)$, {we have: as $N,d\to\infty$,} $\Sigma_{N}^*:=\Var\{\psi_{N}^*(\bZ)\}\asymp\|\balpha^*\|_2^2+\gamma_N^{-1}$ {with $\gamma_N\asymp a_N$ as in \eqref{eq:a_N-defn},} and
\begin{align}\label{normal:BRSS}
\sqrt N\left(\Sigma_{N}^*\right)^{-1/2}\left(\thetahat_{\mbox{\tiny 1,BRSS}}-\theta_1\right) \, \xrightarrow{d} \, \mathcal N(0,1).
\end{align}
\end{theorem}
{In contrast to the analysis provided in Theorem \ref{thm:gen-rate-DR}, Theorem \ref{cor:para} allows for a growing signal level \( \|\balpha^*\|_2 \) in the OR model while keeping the noise level constant for ease of presentation. Therefore, unlike the results presented in \cite{zhang2023double}, we show that the asymptotic variance \( \Sigma_{N}^* \asymp \|\balpha^*\|_2^2 + \gamma_N^{-1} \) depends not only on \( \gamma_N^{-1} \asymp a_N^{-1} \), but also on the signal level \( \|\balpha^*\|_2^2 \). Specifically, when the linear signal is strong, with \( \|\balpha^*\|_2^2 \gg \gamma_N^{-1} \), the convergence rate scales as \( \|\balpha^*\|_2/\sqrt{N} \); otherwise, it scales as \( 1/\sqrt{N\gamma_N} \).}

\begin{remark}[Robust inference]\label{remark:MDR}
{Below, we discuss the required conditions under the scenario where \( \|\balpha^*\|_2 \asymp 1 \) for simplicity. As demonstrated in Theorem \ref{cor:para}, the consistent and asymptotically normal (CAN) result in \eqref{normal:BRSS} requires the product of approximation errors to satisfy \( e_m e_\gamma = o((N\gamma_N)^{-1/2}) \). This condition holds under the following scenarios:

\par\smallskip
(a) \emph{Both nuisance models are correctly specified}. When $e_m = e_\gamma = 0$, the asymptotic normality result in \eqref{normal:BRSS} parallels that of the R-DR estimator in \eqref{result:normal1}, attaining the semi-parametric efficiency bound. However, the sparsity conditions required in Theorem \ref{cor:para} are weaker than those in Theorem \ref{thm:gen-rate-DR}. The asymptotic normality of the R-DR estimator relies on the product-rate condition \ref{cond:product-rate}, which translates to $s_{\balpha}s_{\bbeta} = o(N\gamma_N/\log^2d)$ when nuisance functions are estimated via standard $\ell_1$-regularized maximum likelihood estimators. In contrast, the BRSS estimator exhibits the {\it\underline{sparsity DR}} property -- it remains CAN as long as either (a) \(s_{\balpha}\sqrt{s_{\bbeta}}=o(N\gamma_N/\log^{3/2}{d})\) and \(s_{\bbeta}=o(\sqrt{N\gamma_N}/\log{d})\), or (b) \(s_{\balpha}=o(\sqrt{N\gamma_N}/\log{d})\). These conditions are strictly weaker than the conventional product-rate condition. A weaker form of this property was introduced by \cite{bradic2019sparsity}, where CAN results were established under more restrictive sparsity conditions in standard supervised settings. Detailed comparisons are provided in Table \ref{figure:sparsity1}.

\par\smallskip
(b) \emph{One of the nuisance models is correctly specified while the other may be misspecified}. In this scenario, the proposed estimator \( \thetahat_{\mbox{\tiny 1,BRSS}} \) remains CAN as long as either \( e_m = 0 \) or \( e_\gamma = 0 \). This result ensures that \( \thetahat_{\mbox{\tiny 1,BRSS}} \) exhibits the \underline{\textit{model DR}} property \citep{tan2020model, smucler2019unifying, dukes2020doubly}, achieving CAN results as long as at least one nuisance model is correctly specified. Consequently, the BRSS estimator offers more robust inference compared to the R-DR estimator introduced in Section \ref{sec:gen-DR-DMAR}, which requires both \( e_m = 0 \) and \( e_\gamma = 0 \).

\par\smallskip
(c) \emph{Both nuisance models are misspecified, but with decaying approximation errors}. We accommodate situations where both nuisance functions are misspecified, provided their approximation errors decay sufficiently fast as \( N, d \to \infty \). For example, if the true OR and PS functions are additive with Lipschitz components and sparsity levels \( s_m \) and \( s_\gamma \), respectively, additive piecewise-linear splines with \( K \) knots yield approximation errors \( e_m \asymp s_m/K \) and \( e_\gamma \asymp s_\gamma/K \). Furthermore, we have \( d = Kp \), \( s_{\balpha} \leq K s_m \) and \( s_{\bbeta} \leq K s_\gamma \). Setting \( K \asymp (s_m s_\gamma)^{1/2}(N \gamma_N)^{1/4}\log N \) ensures that all the required conditions on approximation errors and sparsity levels hold, guaranteeing the asymptotic normality \eqref{normal:BRSS}, as long as either \( N \gamma_N \gg s_m^6 s_\gamma^2 \) or \( N \gamma_N \gg s_m^2 s_\gamma^6 + s_m^{2.8} s_\gamma^2 \) (up to logarithmic terms).
Besides, we also allow for situations where one nuisance estimate experiences a non-decaying approximation error, as long as the other approximation error decays sufficiently fast with an \( o((N\gamma_N)^{-1/2}) \) rate.}
\end{remark}

In the following {result}, we {establish} the asymptotic properties of the DC-BRSS estimator.

\begin{theorem}[DC-BRSS]\label{t4-SP}
Let Assumptions \ref{cond:basic}-\ref{cond:approx} hold, {and recall $\gamma_N$ from \eqref{eq:key-defs3} and the sparsities $(s_{p},s_{\balphatil})$ from Section \ref{subsec:SP}.} Suppose that $N\gamma_N\gg\max(\log^{5/2} N,s_{p},s_{\balphatil})\log d\log^{1/2}N$. Let either one of the following hold:
\begin{align*}
&\text{(a)}\;s_{\balphatil}\sqrt{s_{p}}=o(N\gamma_N/\log^{3/2}{d}),\;s_{p}=o(\sqrt{N\gamma_N}/\log{d}),\;\zeta_N=o((N\gamma_N)^{-1/2}); \;~\mbox{or}\\
&\text{(b)}\;s_{\balphatil}=o(\sqrt{N\gamma_N}/\log{d}),\;e_m^2s_{p}=o(1/\log{d}),\;\;e_m\zeta_N=o((N\gamma_N)^{-1/2}).
\end{align*}
Then, with some $\lambda_{\bbeta}\asymp\lambda_{\balpha}\asymp\sqrt{\log{d}/(N\gamma_N)}$, as $N,d\to\infty$,
\begin{align*}
\thetahat_{\mbox{\tiny 1,DC-BRSS}}-\theta_1~=~N^{-1}\sum_{i\in\mathcal I}\psitil_{N}^*(\bZ_i)+o_p\left((N\gamma_N)^{-1/2}\right),
\end{align*}
where $\psitil_{N}^*(\bZ):=\mtil^*(\bX)-\theta_1+\Gamma\{Y-\mtil^*(\bX)\}/\gammatil_N^*(\bX)$ with $\E\{\psitil_{N}^*(\bZ)\}=O(e_me_\gamma)$ and
$N^{-1}\sum_{i\in\mathcal I}\psitil_{N}^*(\bZ_i)=O_p(e_me_\gamma+N^{-1/2}(\|\balphatil^*\|_2+\gamma_N^{-1/2})).$
If further, Assumption \ref{cond:lowerbound} holds and $e_me_\gamma=o(N^{-1/2}(\|\balphatil^*\|_2+\gamma_N^{-1/2}))$, then $\Sigmatil_{N}^*:=\Var\{\psitil_{N}^*(\bZ)\}\asymp\|\balphatil^*\|_2^2+\gamma_N^{-1}$ with $\gamma_N\asymp a_N$ {(from \eqref{eq:a_N-defn})}, and as $N,d\to \infty$,
\begin{align}
&\sqrt N\left(\Sigmatil_{N}^*\right)^{-1/2}\left(\thetahat_{\mbox{\tiny 1,DC-BRSS}}-\theta_1\right)\xrightarrow{d}\mathcal N(0,1),\label{result:BRSS2-SP1}\\
&\Sigmahat_{\mbox{\tiny 1,DC-BRSS}}=\Sigmatil_{N}^*\{1+o_p(1)\},\;\text{and}\;\sqrt N\left(\Sigmahat_{\mbox{\tiny 1,DC-BRSS}}\right)^{-1/2}\left(\thetahat_{\mbox{\tiny 1,DC-BRSS}}-\theta_1\right)\xrightarrow{d}\mathcal N(0,1).\label{result:BRSS2-SP}
\end{align}
\end{theorem}

\vspace{-0.15in} 
\begin{remark}[Theoretical challenges in ATE estimation]\label{remark:challenge-ATE}
{The asymmetric cross-fitting and the aim of enhanced robustness create unique challenges, setting our analysis apart from existing works such as \cite{zhang2023double,tan2020model,chernozhukov2018double}, which rely either on standard cross-fitting or no cross-fitting and establish valid inference under stronger assumptions. Since the PS estimates are constructed without cross-fitting, controlling their impact on the final estimator $\thetahat_{\mbox{\tiny 1,DC-BRSS}}$ requires additional effort. Meanwhile, the cross-fitted OR estimates also introduce further challenges in applying the corresponding KKT conditions to efficiently manage out-of-sample errors using in-sample data. To achieve sufficiently fast convergence rates, we first established key empirical process results in Lemmas \ref{lemma:convex}-\ref{lemma:emp}, which are crucial for addressing the above challenges and provide adaptive rates that account for the decaying PS. With more advanced theoretical analysis (see the error controls in Lemmas \ref{lemma:Delta1}-\ref{lemma:Delta6}), we provide valid inference under weaker sparsity conditions than those in \cite{bradic2019sparsity}, even in supervised settings. Furthermore, our derived rates are adaptive to the decaying PS, initial treatment PS estimation error $\zeta_N$, and approximation errors $e_m$ and $e_\gamma$, presenting additional analytical challenges. These adaptive rates expand the applicability of our results.}
\end{remark}

\begin{remark}[The treatment PS estimation]\label{remark:pi}
{The treatment PS estimate \( \pihat(\cdot) \) can be either parametric or non-parametric, as long as it satisfies the high-level conditions in Theorem \ref{t4-SP}. For example, when the number of covariates grows as \( p = O(N^c) \) for some constant \( c > 0 \), \cite{chi2022asymptotic} demonstrated that random forest estimates achieve an error rate of \( \zeta_N = O(N^{-1/(11.1\alpha_1 \lor 16.1)}) \), where \( \alpha_1 > 0 \) is the sufficient impurity decrease parameter (see Condition 1 therein). As a result, Conditions (a) and (b) of Theorem \ref{t4-SP} hold when \( \gamma_N = o(N^{2/(11.1\alpha_1 \lor 16.1) - 1}) \) and \( e_m^2 \gamma_N = o(N^{2/(11.1\alpha_1 \lor 16.1) - 1}) \), respectively. Neural networks can also be employed to estimate the treatment PS. For instance, \cite{bhattacharya2024deep} established a convergence rate of \( \zeta_N = O(\{p^{0.5}N^{-\beta_1/(4\beta_1 + 2)}+pN^{-\beta_2/(2\beta_2+2)}\}\log^{2.25}N) \) for two-way interaction models with \( p = o(N) \), where \(\beta_1\) and \(\beta_2\) denote the smoothness indices of the univariate and bivariate components, respectively. In this case, Conditions (a) and (b) hold when \( \gamma_N = o(\omega_N) \) and \( e_m^2 \gamma_N = o(\omega_N) \), respectively, where \( \omega_N = \{pN^{(\beta_1+1)/(2\beta_1 + 1)}+p^2N^{1/(\beta_2 + 1)}\}^{-1}\log^{-4.5}N \).

The PS estimate can also be constructed after applying initial dimension reduction techniques. Notably, we allow for \( \pi(\cdot) \neq \pi^*(\cdot) \), accommodating scenarios where variable selection consistency does not hold. However, if variable selection consistency holds and \( \pi(\cdot) \) belongs to a smooth function class, we can further ensure that \( \pi(\cdot) = \pi^*(\cdot) \), where accurate treatment PS estimation strengthens the robustness of ATE inference; see Remark \ref{remark:MDR1}.}
\end{remark}

\begin{remark}[Enriched robustness compared to the BRSS estimator]\label{remark:MDR1}
{Similar to Theorem \ref{cor:para}, the asymptotic normality results \eqref{result:BRSS2-SP1}-\eqref{result:BRSS2-SP} hold as long as the product of approximation errors satisfies $e_me_\gamma=o(N^{-1/2}(\|\balphatil^*\|_2+\gamma_N^{-1/2}))$. Importantly, the approximation error \( e_\gamma \) here depends on the distance between the ratio \( \gamma_N(\cdot)/\pi^*(\cdot) \) and its `best logistic approximation', \( \gammatil_N^*(\cdot) \). Specifically, \( e_\gamma = 0 \) occurs if and only if \( \gamma_N(\cdot)/\pi^*(\cdot) \) is logistic, i.e., there exists a \( \bbeta_{p}^0 \) such that \( \gamma_N(\bx)/\pi^*(\bx) = g(\varphi(\bx)^T\bbeta_{p}^0 + \log p_{N}) \).

Compared to the BRSS estimator, incorporating accurate treatment PS estimates significantly enhances robustness in inference. In many practical applications, {it is natural to expect that} only a limited number of confounders simultaneously affect both the treatment and labeling variables \( T \) and \( R \), while most confounders influence only one of the two. When \( \pi^*(\cdot) = \pi(\cdot) \), the complexity of the ratio \( \gamma_N(\cdot)/\pi^*(\cdot) = p_{N}(\cdot) \) depends solely on the confounders impacting \( R \), rather than those affecting both \( T \) and \( R \). Consequently, for any feature map \( \varphi(\cdot) \), the sparsity level \( s_{p} \) of the labeling PS tends to be much smaller than that of the product PS \( s_{\bbeta} \). In fact, as long as the treatment PS model \(\pi^*(\cdot)\) captures most of the confounders affecting \( T \), the complexity of estimating \( \gamma_N(\cdot)/\pi^*(\cdot) \) is reduced, even if \(\pi^*(\cdot)\) deviates from the true \(\pi(\cdot)\). This reduction in complexity makes it easier to meet the sparsity conditions in Theorem \ref{t4-SP} than those in Theorem \ref{cor:para}. Moreover, since the complexity of \( p_{N}(\cdot) \) is lower than that of the product \( \gamma_N(\cdot) \), achieving a reasonable approximation error also becomes more feasible as the feature map's dimension \( d \) increases.}
\end{remark}

\subsection{Theoretical properties of the targeted bias-reducing nuisance estimators}\label{subsec:PS}

In the following, we present the theoretical properties of the nuisance estimates {$\bbetahat_{p}^{(k)}$ and $\balphatil^{(k)}$} developed for the DC-BRSS estimator {in Section \ref{subsec:SP}}. The corresponding results for the {nuisance estimates $\bbetahat^{(k)}$ and $\balphahat^{(k)}$ involved in constructing the} BRSS estimator {in Section \ref{subsec:nuisance}} follow similarly, provided that we set \(\pihat^{(k)}(\cdot) \equiv 1\) {in the definitions of $\bbetahat_{p}^{(k)}$ and $\balphatil^{(k)}$ and then apply the results below}. {Theorems \ref{thm:PS-SP}-\ref{thm:OR-SP-mis} discuss nuisance estimation adaptive to the non-parametric initial step, high dimensionality, and the decaying PS setting, making them of independent interest given their applicability and the underlying challenges.}

\begin{theorem}[PS estimator]\label{thm:PS-SP}
Let Assumptions \ref{cond:basic}-\ref{cond:bound-SP} hold and suppose that $N\gamma_N\gg\max(\log N,s_{p})\log d$ with some $\lambda_{\bbeta}\asymp\sqrt{\log d/(N\gamma_N)}$. Then, as $N,d\to\infty$,
\begin{align*}
\|\bbetahat_{p}^{(k)}-\bbeta_{p}^*\|_2&=O_p\left(\sqrt\frac{s_{p}\log d}{N\gamma_N}+\zeta_N \right), \quad \mbox{and}\\
\|\bbetahat_{p}^{(k)}-\bbeta_{p}^*\|_1&=O_p\left(s_{p}\sqrt\frac{\log d}{N\gamma_N}+\zeta_N^2\sqrt\frac{N\gamma_N}{\log d}\right).
\end{align*}
\end{theorem}

\begin{remark}[Theoretical challenges in nuisance estimation]\label{remark:challenge-nuisance}
{Our de-coupled approach estimates \( \bbetahat_{p} \) using an initial non-parametric estimation \( \pihat(\cdot) \), introducing additional challenges for nuisance estimation analysis. This sets our method apart from existing works such as \cite{zhang2023double,tan2020model,bradic2019sparsity}, where the product PS model does not arise, and direct PS estimation is sufficient.

The dependence on \( \pihat(\cdot) \) shifts the error \( \bDelta_{\bbeta} = \bbetahat_{p} - \bbeta_{p}^* \) away from the conventional sparse cone set, \(\mathcal C_{k_0}(S) := \{\bDelta\in\R^d: \|\bDelta_{S^c}\|_1\leq k_0\|\bDelta_S\|_1\}\) \citep{wainwright2019high}, where \( S \) is the support set of \( \bbeta_{p}^* \). As shown in \eqref{eq:basic-PS} of the \hyperref[supp_mat]{Supplement}, unlike standard techniques for \(\ell_1\)-regularized estimators, our analysis requires controlling an additional error term \( \br_\pi^T\bDelta_{\bbeta} \), where \( \br_\pi \) (defined in \eqref{def:rpi}) captures the estimation error from \( \pihat(\cdot) \). Since this initial estimation error influences all coefficients \emph{simultaneously}, recovering the true support set is challenging even with \(\ell_1\)-regularization, preventing us from establishing \( \bDelta_{\bbeta} \in \mathcal C_{k_0}(S) \) unless the initial estimation error \( \zeta_N \) (in Assumption \ref{cond:NP-est}) is sufficiently small (e.g., \(\zeta_N=o((N\gamma_N)^{-1/2})\)).

To address this, we first control \( \br_\pi^T\bDelta_{\bbeta} \) using both \( \|\bDelta_{\bbeta}\|_1 \) and \( \|\bDelta_{\bbeta}\|_2 \); see Lemma \ref{lemma:rpi}. We then establish that \( \bDelta_{\bbeta} \) belongs to a cone set incorporating both \( \ell_1 \)- and \( \ell_2 \)-norm constraints: $\widetilde{\mathcal C}(r_N) := \{\bDelta\in\R^d: \|\bDelta\|_1\leq r_N\|\bDelta\|_2\}$, where \( r_N \asymp \sqrt{s_{p}} + \zeta_N\sqrt{N\gamma_N/\log d} \). This cone set is \emph{independent of the true support set \( S \)} and \emph{symmetric across all coordinates}, allowing it to accommodate the effects of initial estimation error. Finally, we establish convergence results based on this newly defined cone set; see Lemmas \ref{lemma:RSC-SP}-\ref{lemma:Fdelta} and the proof of Theorem \ref{thm:PS-SP}. Notably, Theorem \ref{thm:PS-SP} requires only \( \zeta_N = o(1/\sqrt{\log N}) \) for labeling PS estimation. This condition remains sufficient for the final ATE estimation when Condition (b) of Theorem \ref{t4-SP} holds and \( e_m = 0 \). More broadly, these techniques may also benefit other estimators that deviate from standard cone set conditions. Besides, achieving an `effective sample size' measure of \( N\gamma_N \) crucially depends on ensuring adaptability to a decaying \( \gamma_N \), which introduces additional challenges as discussed in Remark \ref{remark:RSC}.}
\end{remark}

\begin{theorem}[OR estimator]\label{thm:OR-SP-mis}
Let Assumptions \ref{cond:basic}-\ref{cond:approx} hold and suppose that $N\gamma_N\gg\max(\log N,s_{p},s_{\balphatil})\log d$ with some $\lambda_{\bbeta}\asymp\lambda_{\balpha}\asymp\sqrt{\log d/(N\gamma_N)}$. Then, as $N,d\to\infty$,
\begin{align*}
\|\balphatil^{(k)}-\balphatil^*\|_2&=O_p\left(\sqrt\frac{(s_{\balphatil}+e_m^2s_{p})\log d}{N\gamma_N}+e_m\zeta_N\right), \quad \mbox{and}\\
\|\balphatil^{(k)}-\balphatil^*\|_1&=O_p\left((s_{\balphatil}+e_m^2s_{p})\sqrt\frac{\log d}{N\gamma_N}+e_m^2\zeta_N^2\sqrt\frac{N\gamma_N}{\log d}\right).
\end{align*}
\end{theorem}

{While existing works have analyzed the properties of OR estimates constructed from initial PS estimates in supervised settings, our analysis provides additional insights into the effect of model approximation errors in the more challenging SS settings. For \( R \equiv 1 \) and \( \zeta_N = 0 \), \cite{tan2020model,ning2020robust} demonstrated that the OR estimation's consistency rates are influenced by \( s_{\balphatil} + s_{\bbeta} \), as in the above with \( e_m \asymp 1\). In contrast, \cite{bradic2019sparsity} found that the rates depend only on \( s_{\balphatil} \), assuming accurate PS and OR models, i.e., \( e_m = e_\gamma = 0 \). Our analysis carefully tracks the impact of the possibly decaying (but nonzero) approximation errors \( e_m \) and \( e_\gamma \) in nuisance estimation, presenting additional challenges beyond existing literature. Our results contribute by establishing consistency rates that adapt to the OR model's approximation error, showing dependence on \( s_{\balphatil} + e_m^2 s_{p} \). The established rates in Theorem \ref{thm:OR-SP-mis} are \emph{strictly faster} than those of \cite{tan2020model,smucler2019unifying} when \( e_m = o(1) \) and \( s_p \gg s_{\balphatil} \), and are the same as theirs otherwise, as both analyses are restricted to \( e_m = O(1) \). Notably, our analysis reveals that a more accurate linear approximation reduces the dependence of OR estimation rates on the complexity of the PS model. Moreover, we find that the consistency rates remain independent of the PS model's approximation error \( e_\gamma \).} On another front, we have integrated an innovative nuisance estimation step centered around \( \pi(\cdot) \) and achieved rates independent of the correctness of such estimates. Similarly, the dependence of the consistency rates on \( \zeta_N \) is influenced by the OR model's approximation error. When the OR model is correctly specified, the consistency rates no longer rely on \( \zeta_N \).

\section{Comparative Analysis}\label{sec:compare-S2}
{Our proposed framework unifies multiple strands of research. Below, we provide detailed comparisons with existing literature, highlighting specific instances that can be viewed as special cases of our framework.}

\paragraph*{Regular SS ATE estimation, a special case under MCAR}
\cite{cheng2021robust,hou2021efficient,zhang2022high,chakrabortty2022general} developed SS ATE estimators that improve efficiency compared to fully supervised methods. However, their approaches rely on the restrictive MCAR assumption, \(R \independent (Y, T, \mathbf{X})\), and introduces bias when labeling bias or data heterogeneity is present. Our methods retain the same efficiency under MCAR but are applicable to a wider range of scenarios, providing robust guarantees even in significantly more complex scenarios.

\paragraph*{Regular MAR in the causal context, a special case when \(\gamma_N \asymp 1\)}
\cite{kallus2024role} introduced a SS ATE estimator addressing labeling bias under MAR conditions. {While they provided insightful results on the role of surrogates in SS settings, their proposed estimator only achieved \emph{rate DR}, requiring both OR and PS models to be correctly specified for valid inference.} Restricting to low-dimensional parametric models, \cite{wei2022doubly} proposed a \emph{model DR} estimator, requiring either (a) the OR function to be correctly specified, or (b) both the labeling and treatment PS functions to be correctly specified. In contrast, our BRSS estimators achieve \emph{model DR} in the more challenging high-dimensional settings, where nuisance estimates typically converge with non-root-\(n\) rates. Moreover, our de-coupled approach allows fully non-parametric estimation of \(\pi(\cdot)\). Unlike their condition (b), we only require the fraction \(\gamma_N(\cdot) / \pi^*(\cdot)\) to be correctly specified, and \(\pi^*(\cdot) \neq \pi(\cdot)\) is permissible even when the OR model is misspecified; see the discussion in Remark \ref{remark:MDR}. {Furthermore, both \cite{kallus2024role} (in their earlier versions; see Section \ref{sec:setup}) and \cite{wei2022doubly} only provided rigorous analysis under a positivity condition, which restricts applicability in typical SS scenarios where \(N \gg n\).}

\paragraph*{Decaying MAR in the non-causal context, a special case when $T_i \equiv 1$}
{To bridge the gap between the SS and missing data literature, \cite{zhang2023double} recently explored a variation of the decaying MAR setting, albeit in a non-causal context, focusing on confounding related to the labeling variable \( R \). In contrast, our framework addresses confounders that affect both the labeling variable \( R \) and the treatment \( T \), posing a more complex challenge. While \cite{zhang2023double} established \emph{rate DR} results, our BRSS estimators are further \emph{model DR} and \emph{sparsity DR}, imposing weaker conditions on both model specifications and sparsity structures. Unlike the non-causal context, we address the estimation of \emph{product} PS models \eqref{eq:key-defs3}, where the efficient use of asymmetric \(T_i\) and \(R_i\) is the key challenge within the causal decaying MAR framework. This makes the proposed de-coupled approach unique in the considered setup.}

\begin{table}[t]
\caption{
Comparison of the sparsity conditions required (up to logarithmic terms) for the asymptotic normality of the DR estimators with $R\equiv1$. For brevity and without prejudice, we've abbreviated 'Avagyan' as 'Av.', 'Bradic' as 'Br.', 'Chakrabortty' as 'Chakr.', 'Chernozhukov' as 'Chern.', 'Kallus' as 'Kal.', and 'Vansteelandt' as 'Vdt.'.
}\label{table:sparsity}
\begin{tabular}{m{9em}|| m{9em}| m{9em} | m{9em} }
\toprule
Literature&Correct PS and OR& Misspecified PS & Misspecified OR\\
\hline
Farell (2015) &$N\gg s_{\balpha}^2+s_{\bbeta}^2$&$\xmark$&$\xmark$\\
\hline
Chern. et al.(2018) & &\\
Kal. and Mao (2020) & &\\
Hou et al. (2021)&\multirow{2}{*}{$N\gg s_{\balpha}s_{\bbeta}$}&\multirow{2}{*}{$\xmark$}&\multirow{2}{*}{$\xmark$}\\
Chakr. et al. (2022)& &\\
Zhang and Br. (2022)& &&\\
Zhang et al. (2023)& &&\\
\hline
Tan (2020)& &\\
Ning et al. (2020)&$N\gg s_{\balpha}^2+s_{\bbeta}^2$&$N\gg s_{\balpha}^2+s_{\bbeta}^2$&$N\gg s_{\balpha}^2+s_{\bbeta}^2$
\\
Av. and Vdt. (2021)
&&&\\
\hline
Dukes et al.(2020)&\multirow{2}{*}{$N\gg s_{\balpha}s_{\bbeta}+s_{\bbeta}^2$}&\multirow{2}{*}{$N\gg s_{\balpha}^2+s_{\bbeta}^2$}&\multirow{2}{*}{$N\gg s_{\balpha}^2+s_{\bbeta}^2$}
\\
Dukes and Vdt. (2021)
&&&\\
\hline
Smucler et al. (2019)&$N\gg s_{\balpha}s_{\bbeta}$&$N\gg s_{\balpha}s_{\bbeta}$&$N\gg s_{\balpha}s_{\bbeta}+s_{\bbeta}^2$\\
\hline
\multirow{2}{*}{Br. et al. (2019)}&$N\gg s_{\balpha}^{4/3}+s_{\bbeta}^2$ \footnote{\cite{bradic2019sparsity} require stronger sparsity conditions than ours. Specifically, when $s_{\bbeta}^3 \ll s_{\balpha}^2$, we have $s_{\balpha}s_{\bbeta}^{1/2} + s_{\bbeta}^2 \ll s_{\balpha}^{4/3} + s_{\bbeta}^2$. Otherwise, $s_{\balpha}s_{\bbeta}^{1/2} + s_{\bbeta}^2\asymp s_{\bbeta}^2\asymp s_{\balpha}^{4/3} + s_{\bbeta}^2$.}&\multirow{2}{*}{$\xmark$}&\multirow{2}{*}{$\xmark$}\\
&or $N\gg s_{\balpha}^2+s_{\bbeta}$&&\\
\hline
\multirow{2}{*}{This paper}&$N\gg s_{\balpha}s_{\bbeta}^{1/2}+s_{\bbeta}^2$ &$N\gg s_{\balpha}s_{\bbeta}^{1/2}+s_{\bbeta}^2$&\multirow{2}{*}{$N\gg s_{\balpha}s_{\bbeta}^{1/2}+s_{\bbeta}^2$}\\
&or $N\gg s_{\balpha}^2+s_{\bbeta}$&or $N\gg s_{\balpha}^2+s_{\bbeta}$&\\
\bottomrule
\end{tabular}
\end{table}

\paragraph*{Supervised ATE estimation, a special case when $R_i \equiv 1$}
In scenarios where all outcome variables \(Y_i\) are fully observed, i.e., \( N = n \), the problem reduces to standard ATE estimation in a supervised setting. Even in these cases, our approach surpasses existing methods by relaxing both model correctness and sparsity requirements; see Table \ref{table:sparsity}. Our proposal is closest to the \emph{model DR} estimators of \cite{smucler2019unifying}, \cite{tan2020model}, and \cite{avagyan2021high}. Beyond addressing the imbalance in labeling groups and effectively leveraging large unlabeled data under decaying MAR conditions, we introduce an asymmetric cross-fitting technique that further relaxes sparsity constraints. Unlike previous work that required the product of two sparsity levels to be on the order of \(N\), our approach permits this product to be on the order of \(N^{3/2}\) or \(N^{5/4}\), depending on which model is correctly specified. {While \cite{bradic2019sparsity} employs a similar cross-fitting strategy, their approach entails additional implementation challenges and requires correct specification of all nuisance models. Additionally, they impose stronger sparsity conditions even under the supervised setting with correctly specified models, as shown in Table \ref{table:sparsity}. Our results demonstrate that the asymmetric cross-fitting technique yields a stronger \emph{sparsity DR} property than that established by \cite{bradic2019sparsity}, without the need to solve non-convex optimization problems. Furthermore, we show that the effect of reducing the required sparsity condition remains valid not only under model misspecification but also when the PS functions decay, as commonly encountered in typical SS setups.}

\begin{figure}[b]
\begin{subfigure}{.24\linewidth}
\includegraphics[width=.9\linewidth]{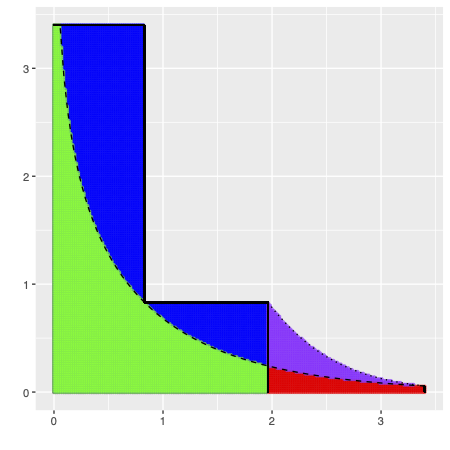}
\caption{$r=0.4$}
\end{subfigure}
\begin{subfigure}{.24\linewidth}
\includegraphics[width=.9\linewidth]{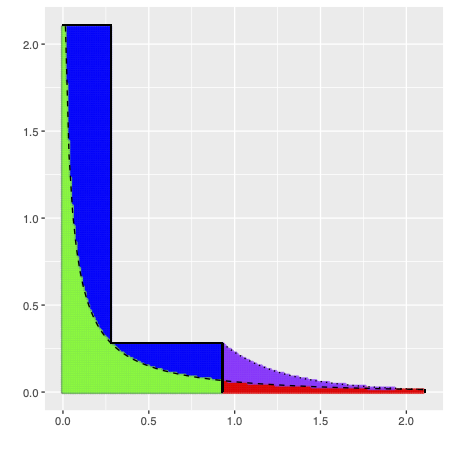}
\caption{$r=0.1$}
\end{subfigure}
\begin{subfigure}{.24\linewidth}
\includegraphics[width=.9\linewidth]{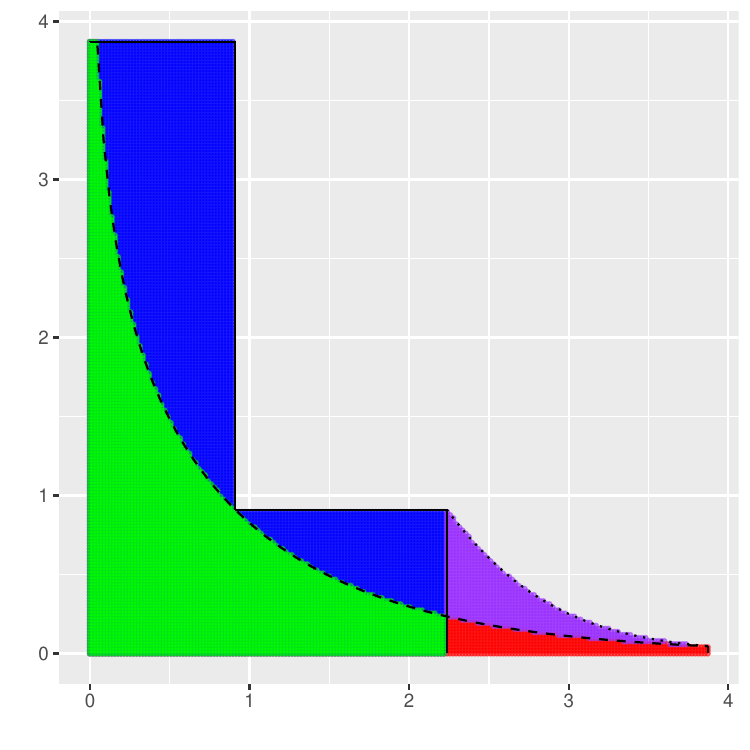}
\caption{$r=0.4$}
\end{subfigure}
\begin{subfigure}{.24\linewidth}
\includegraphics[width=.9\linewidth]{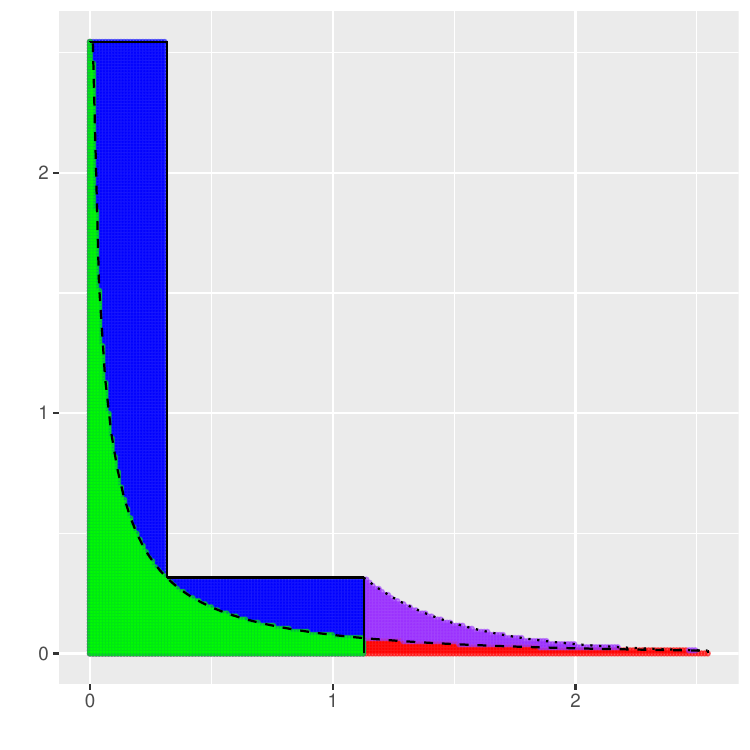}
\caption{$r=0.1$}
\end{subfigure}
\caption{Plots of $\log(1+\bs)=(\log(1+s_{\balpha}),\log(1+s_{\bbeta}))$ satisfying sparsity conditions with $N=500$, $d=1000$ for (a)-(b) and $N=1000$, $d=5000$ for (c)-(d) : Green = \{$f_1(\bs)\leq r$, $f_2(\bs)\leq r$, and $f_3(\bs)\leq r$\}; Red = \{$f_1(\bs)\leq r$, $f_3(\bs)\leq r$, and $f_2(\bs)>r$\}; Blue = \{$f_2(\bs)\leq r$, $f_3(\bs)\leq r$, and $f_1(\bs)>r$\}; Purple = \{$f_3(\bs)\leq r$, $f_1(\bs)>r$, and $f_2(\bs)>r$\}. Lines
 \{$f_1(\bs)=r$\}, $\{f_2(\bs)=r\}$, and $\{f_3(\bs)=r\}$ are dashed, solid, and dotted. With $R \equiv 1$ and all models well specified, \cite{chernozhukov2018double} and the R-DR estimator in Section \ref{sec:gen-DR-DMAR} require $f_1(\bs)=o(1)$ (green + red), \cite{bradic2019sparsity} requires $f_2(\bs)=o(1)$ (green + blue), and the BRSS estimators only require $f_3(\bs)=o(1)$ (green + red + blue + purple).}
\label{figure:sparsity1}
\end{figure}

We clarify the statements above and consider the following special cases: {\bf (M1)} both nuisance models are correctly specified, {\bf (M2)} only the OR model is correctly specified, and {\bf (M3)} only the PS model is correctly specified. Important quantities with $\bs=(s_{\balpha},s_{\bbeta})$ are
\begin{align*}
f_1(\bs)&:={s_{\balpha}\log d/N}\lor{s_{\bbeta}\log d/N}\lor{\sqrt{s_{\balpha}s_{\bbeta}}\log d/\sqrt N},\\
f_2(\bs)&:=\left({s_{\balpha}\log d/N^{3/4}}\lor{s_{\bbeta}\log d/\sqrt N}\right)\land\left({s_{\balpha}\log d/\sqrt N}\lor{s_{\bbeta}\log d/N}\right),\;\;\mbox{and}\\
f_3(\bs)&:=f_1(\bs)\land f_2(\bs)\land\left\{\left(s_{\balpha}\log d/N\right)\lor\left(s_{\bbeta}\log d/\sqrt{N}\right)\lor\left(s_{\balpha}^{2/3}s_{\bbeta}^{1/3}\log d/N^{2/3}\right)\right\}.
\end{align*}

{\bf (M1)} $m(\cdot)=m^*(\cdot)$ and $\gamma_N(\cdot)=\gamma_N^*(\cdot)$.
While \cite{chernozhukov2018double} (and the R-DR estimator in Section \ref{sec:gen-DR-DMAR}) requires $f_1(\bs)=o(1)$ for \emph{rate DR}, and \cite{bradic2019sparsity} relies on $f_2(\bs)=o(1)$ for \emph{sparsity DR}, our method
allows for a more flexible and general sparsity setting $f_3(\bs)=o(1)$, as illustrated in Figure \ref{figure:sparsity1} -- our work combines all four colors, where the area colored in purple denotes the sparsity scenario new to the literature.

{\bf (M2)} $m(\cdot)=m^*(\cdot)$ and $\gamma_N(\cdot)\neq\gamma_N^*(\cdot)$ with $e_\gamma\asymp1$.
We achieve the coveted \emph{model DR} property while requiring the same sparsity as in (M1), i.e., $f_3(\bs)=o(1)$. Comparatively, the best result in the existing literature, \cite{smucler2019unifying}, still necessitates a stronger condition of $f_1(\bs)=o(1)$, while \cite{bradic2019sparsity} is valid only for correctly specified models. Further details and comparisons can be found in Table \ref{table:sparsity}.

{\bf (M3)} $\gamma_N(\cdot)=\gamma_N^*(\cdot)$ and $m(\cdot)\neq m^*(\cdot)$ with $e_m\asymp1$. Our sparsity conditions are once again weaker.
The correctness of OR and PS models affects the required sparsity conditions differently. When the PS model is misspecified, we do not require any additional assumptions compared with (M1). In contrast, when the OR model is misspecified, both \cite{smucler2019unifying} and our proposed method further require an ultra-sparse PS. By using the non-cross-fitted PS estimate, we allow a weaker product-rate condition $s_{\balpha}\sqrt{s_{\bbeta}}=o(N)$ (omitting the logarithm terms) instead of the usual product-rate condition $s_{\balpha}s_{\bbeta}=o(N)$ -- a condition that is always required in \cite{smucler2019unifying}. Although \cite{tan2020model,ning2020robust,avagyan2021high,dukes2021inference,dukes2020doubly} also provide robust inference for the ATE when the OR model is misspecified, they require ultra-sparse conditions for both nuisances, whereas we only need the PS model to be ultra-sparse.

{Beyond the aforementioned works, another line of research has explored optimal sparsity conditions for achieving \(\sqrt{N}\)-inference in supervised settings. To the best of our knowledge, \cite{bradic2019minimax} is the only study that has investigated minimax sparsity conditions; however, it focuses on a different class of functions with `approximate sparsity' structures. \cite{liu2023root} established \(\sqrt{N}\)-inference under the condition \(s_{\balpha} \land s_{\bbeta} = o(\sqrt{N}/\log d)\) but additionally imposed a bounded signal requirement on the OR model, assuming conditions such as \(|Y(j)|<c_0\) almost surely. The minimal sparsity conditions necessary to achieve \(\sqrt{N}\)-inference without such bounded signal assumptions, even in degenerate supervised settings, remain an open question and merit further investigation.

\paragraph*{Limited overlap} The potential violation of the (strict) positivity condition and its associated challenges have been studied by another line of work, often referred to as `limited overlap'; see, for example, \cite{scharfstein1999adjusting}, \cite{kang2007demystifying}, \cite{robins2007comment}, \cite{d2021overlap}, and \cite{mou2023kernel}. While efficient estimation becomes unattainable in the worst-case scenario when the strict positivity condition is violated \citep{khan2010irregular}, a common approach is to trim the PS function and estimate a re-weighted ATE, which down-weights samples with small PS values; see, e.g., \cite{crump2009dealing}, \cite{rothe2017robust}, and \cite{yang2018asymptotic}. This line of work is suitable when the PS function approaches zero at specific \(\bx\). These approaches typically model PS functions as dependent only on \(\bx\), without accounting for their dependency on \(N\). In contrast, SS datasets typically involve a labeled sample size \(n\) that is much smaller than the total sample size \(N\), necessitating the modeling of the distribution of the labeling indicator \(R\) as a function of \(N\), so that \(\E(R) = \E(n/N) \to 0\) as \(N \to \infty\) is permissible. When the PS function decays \emph{uniformly} across the support, trimming either excludes a substantial portion of the data if the threshold is not small enough, or fails to resolve the decaying issue if the threshold is small, as the post-trimming PS values are still small. Our framework differs from the `limited overlap' approach in that we treat the labeling PS function as a function of both \(\bx\) and \(N\), making it especially suitable for SS scenarios where \(N \gg n\). Our results demonstrate that CAN estimation of the original ATE parameter is feasible under the decaying MAR framework, without the need for trimming techniques, even when the positivity condition is violated.}

\section{Simulation studies}\label{sec:sim}
We evaluate the R-DR and bias-reduced estimators under various data-generating processes (DGPs), focusing on the estimation of the ATE, $\mu_0 = \theta_1 - \theta_0$.

\subsection{Results under the decaying MAR setting}\label{subsec:sim_DMAR}

We consider three types of DGPs in our simulation studies: (a) linear OR models, logistic (product) PS models; (b) linear OR models, non-logistic PS models; and (c) non-linear OR models, logistic PS models. In this section, we focus on linear and logistic working models (excluding estimates based on non-parametric models) for the OR and PS functions. Consequently, DGP (a) represents the case where both models are correctly specified, while DGPs (b) and (c) correspond to scenarios where only one of the nuisance models is correctly specified. We consider i.i.d. truncated normal covariates $\bX_{ij}\sim^\mathrm{iid}Z_{\mathrm{trun},2}$ and $\bX_{i1}=1$ for each $i\in\{1,\dots,N\}$ and $j\in\{2,\dots,d\}$, where $Z_{\mathrm{trun},2}\sim Z\mid\{|Z|<2\}$ and $Z\sim\mathcal N(0,1)$.

\paragraph*{(a) Linear OR models, logistic PS models} For $j\in\{0,1\}$ and any $\bx\in\R^d$, we set
\begin{align}
&\gamma_N(j,\bx)=g(\bx^T\bbeta(j)), \;\; \pi(\bx)=\frac{\gamma_N(1,\bx)+1-\gamma_N(0,\bx)}{2},\label{DGP_logistic_PS1}\\
&p_N(1,\bx)=\frac{\gamma_N(1,\bx)}{\pi(\bx)},\;\; \mbox{and} \;\; p_N(0,\bx)=\frac{\gamma_N(0,\bx)}{1-\pi(\bx)}.\label{DGP_logistic_PS2}
\end{align}
\begin{table}[t]
\begin{center}
\caption{Simulation results for DGP (a) with sparsity levels $s_{\balpha}=s_{\bbeta}=3$. Bias: empirical bias; RMSE: root mean square error; Length: average length of the $95\%$ confidence intervals; Coverage: average coverage of the $95\%$ confidence intervals. {Bias, RMSE, and Length are calculated based on medians.}} \label{table:DGP-a}
\resizebox{\columnwidth}{!}{%
\begin{tabular}{>{\centering\arraybackslash}p{3.8em}>{\centering\arraybackslash}p{3.4em}>{\centering\arraybackslash}p{3.4em}>{\centering\arraybackslash}p{3.4em}>{\centering\arraybackslash}p{3.4em}c>{\centering\arraybackslash}p{3.4em}>{\centering\arraybackslash}p{3.4em}>{\centering\arraybackslash}p{3.4em}>{\centering\arraybackslash}p{3.4em}}
\hline
\toprule
Estimator&Bias&RMSE&Length&Coverage&&Bias&RMSE&Length&Coverage\\
\hline
&\multicolumn{4}{c}{\cellcolor{gray!50} DGP (a) $N\gamma_N=500$, $d=51$, $\gamma_N=0.05$}&&\multicolumn{4}{c}{\cellcolor{gray!50} DGP (a) $N\gamma_N=1000$, $d=51$, $\gamma_N=0.1$}\\
\cline{2-5}\cline{7-10}
$\muhat_{\mbox{\tiny oracle}}$&0.007&0.102&0.540&0.948&&0.006&0.075&0.428&0.956\\
$\muhat_{\mbox{\tiny MCAR}}$&0.003&0.090&0.374&0.828&&0.003&0.067&0.329&0.902\\
$\muhat_{\mbox{\tiny SS-Lasso}}$&0.008&0.100&0.470&0.906&&0.004&0.073&0.391&0.934\\
$\muhat_{\mbox{\tiny SS-RF}}$&-0.026&0.151&0.619&0.832&&-0.023&0.097&0.422&0.848\\
$\muhat_{\mbox{\tiny BRSS}}$&0.004&0.102&0.503&0.908&&0.008&0.073&0.409&0.942\\
\hline
&\multicolumn{4}{c}{\cellcolor{gray!50} DGP (a) $N\gamma_N=500$, $d=201$, $\gamma_N=0.1$}&&\multicolumn{4}{c}{\cellcolor{gray!50} DGP (a) $N\gamma_N=1000$, $d=201$, $\gamma_N=0.1$}\\
\cline{2-5}\cline{7-10}
$\muhat_{\mbox{\tiny oracle}}$&-0.014&0.098&0.597&0.954&&-0.013&0.080&0.424&0.964\\
$\muhat_{\mbox{\tiny MCAR}}$&-0.011&0.091&0.466&0.878&&-0.010&0.069&0.329&0.900\\
$\muhat_{\mbox{\tiny SS-Lasso}}$&-0.011&0.095&0.522&0.936&&-0.012&0.074&0.379&0.946\\
$\muhat_{\mbox{\tiny BRSS}}$&-0.012&0.099&0.555&0.924&&-0.014&0.079&0.402&0.958\\
\bottomrule
\end{tabular}
}
\end{center}
\end{table}
We then consider
\begin{align}
T_i\mid\bX_i\sim\mathrm{Bernoulli}(\pi(\bX_i))\;\; \mbox{and} \;\; R_i\mid(\bX_i,T_i=j)\sim\mathrm{Bernoulli}(p_N(j,\bX_i)).\label{DGP_logistic_PS3}
\end{align}
Finally, we set linear OR models as follows:
\begin{align}
Y_i(j)=\bX_i^T\balpha(j)+\delta_i,\;\; \mbox{and} \;\; R_iY_i=R_iY_i(T_i), \;\;\;\mbox{where}\;\;\delta_i\sim\mathcal N(0,1).\label{DGP_linear_OR}
\end{align}

\paragraph*{(b) Linear OR models, non-logistic PS models}
 We use $\pi(\bx)=0.3\sin(\bx^T\bomega)+0.5$ and $p_N(j,\bx)=g(\bx^T\bbeta(j))$. The treatment and missingness indicators $(T_i,R_i)$ follow \eqref{DGP_logistic_PS3}, and the outcomes $Y_i(j)$ are determined as in \eqref{DGP_linear_OR}.

\paragraph*{(c) Non-linear OR models, logistic PS models}
We consider \eqref{DGP_logistic_PS1}-\eqref{DGP_logistic_PS3} as in part (a) but set quadratic OR models with $\bX_i^2:=(\bX_{i1}^2,\dots,\bX_{i d}^2)^T$ and
\begin{align}
Y_i(j)=\bX_i^T\balpha(j)+(\bX_i^2)^T\etabold(j)+\delta_i,\;\;R_iY_i=R_iY_i(T_i).\label{DGP_quad_OR}
\end{align}

\par\smallskip
The parameter values across all the DGPs above are chosen as follows:
$\balpha(1):=3(1,1,\boldsymbol{1}_{s_{\balpha}-1}^T/\sqrt{s_{\balpha}-1},0,\dots,0)^T\in\R^d,\;\;\balpha(0):=-\balpha(1), $
 $ \bbeta(1):=(\beta_N(1),1,\boldsymbol{1}_{s_{\bbeta}-1}^T/(s_{\bbeta}-1),0,\dots,0)^T\in\R^d, $
$\bbeta(0):=(\beta_N(0),-1,-\boldsymbol{1}_{s_{\bbeta}-1}^T/(s_{\bbeta}-1),0,\dots,0)^T\in\R^d,$
$\bomega:=(0,1,$ $ \boldsymbol{1}_{s_{\bbeta}-1}^T/(s_{\bbeta}-1),0,\dots,0)^T\in\R^d, $
$\etabold(1):=(0,1,\boldsymbol{1}_{s_{\balpha}-1}^T/\sqrt{s_{\balpha}-1},0,\dots,0)^T\in\R^d,\;\;\etabold(0):=-\etabold(1),
$
where for any positive integer $s\geq1$, $\boldsymbol{1}_s:=\{1,\dots,1\}\in\R^s$ and $\boldsymbol{1}_0:=\emptyset$, and $\beta_N(1),\beta_N(0)$ are chosen such that $\E(RT)=\E\{R(1-T)\}=\gamma_{N}$.

We consider the following estimators:
(1) Oracle estimator \(\muhat_{\text{oracle}}\): R-DR estimator with nuisances set as the true values;
(2) \(\muhat_{\text{MCAR}}\): SS estimator that assumes missingness is MCAR (ignoring labeling bias), with OR and \emph{treatment PS} models estimated via \(\ell_1\)-regularized linear/logistic regression;
(3) \(\muhat_{\text{SS-Lasso}}\): R-DR estimator with OR and \emph{product PS} models estimated using \(\ell_1\)-regularized linear/logistic regression;
(4) \(\muhat_{\text{SS-RF}}\): R-DR estimator with random forest nuisance estimates (up to \(d=51\) dimensions); {and}
(5) \(\muhat_{\mbox{\tiny BRSS}}\): BRSS estimator of \eqref{def:BRSS-ATE}, with the feature map \(\varphi(\cdot)\) set to the identity function. The tuning parameters are chosen using 5-fold cross-validation and results, repeated $500$ times, are presented in Tables \ref{table:DGP-a}-\ref{table:DGP-c}.

\begin{table}[t]
\begin{center}
\caption{Simulation results for DGP (b) with sparsity levels $s_{\balpha}=2$ and $s_{\bbeta}=6$. The rest of the caption details remain the same as those in Table \ref{table:DGP-a}.} \label{table:DGP-b}
\resizebox{\columnwidth}{!}{%
\begin{tabular}{>{\centering\arraybackslash}p{3.8em}>{\centering\arraybackslash}p{3.4em}>{\centering\arraybackslash}p{3.4em}>{\centering\arraybackslash}p{3.4em}>{\centering\arraybackslash}p{3.4em}c>{\centering\arraybackslash}p{3.4em}>{\centering\arraybackslash}p{3.4em}>{\centering\arraybackslash}p{3.4em}>{\centering\arraybackslash}p{3.4em}}
\hline
\toprule
Estimator&Bias&RMSE&Length&Coverage&&Bias&RMSE&Length&Coverage\\
\hline
&\multicolumn{4}{c}{\cellcolor{gray!50} DGP (b) $N\gamma_N=500$, $d=51$, $\gamma_N=0.05$}&&\multicolumn{4}{c}{\cellcolor{gray!50} DGP (b) $N\gamma_N=1000$, $d=51$, $\gamma_N=0.1$}\\
\cline{2-5}\cline{7-10}
$\muhat_{\mbox{\tiny oracle}}$&-0.012&0.084&0.565&0.964&&-0.007&0.075&0.412&0.954\\
$\muhat_{\mbox{\tiny MCAR}}$&-0.005&0.067&0.286&0.812&&-0.001&0.061&0.250&0.856\\
$\muhat_{\mbox{\tiny SS-Lasso}}$&-0.004&0.079&0.390&0.912&&-0.005&0.065&0.312&0.916\\
$\muhat_{\mbox{\tiny SS-RF}}$&-0.023&0.101&0.403&0.824&&-0.015&0.070&0.297&0.852\\
$\muhat_{\mbox{\tiny BRSS}}$&0.001&0.075&0.470&0.946&&-0.005&0.068&0.366&0.942\\
\hline
&\multicolumn{4}{c}{\cellcolor{gray!50} DGP (b) $N\gamma_N=500$, $d=201$, $\gamma_N=0.1$}&&\multicolumn{4}{c}{\cellcolor{gray!50} DGP (b) $N\gamma_N=1000$, $d=201$, $\gamma_N=0.1$}\\
\cline{2-5}\cline{7-10}
$\muhat_{\mbox{\tiny oracle}}$&-0.008&0.106&0.572&0.936&&0.001&0.078&0.412&0.952\\
$\muhat_{\mbox{\tiny MCAR}}$&-0.008&0.078&0.354&0.872&&-0.004&0.062&0.270&0.880\\
$\muhat_{\mbox{\tiny SS-Lasso}}$&-0.006&0.089&0.418&0.884&&-0.001&0.066&0.300&0.920\\
$\muhat_{\mbox{\tiny BRSS}}$&-0.010&0.089&0.480&0.918&&0.001&0.074&0.357&0.952\\
\bottomrule
\end{tabular}
}
\end{center}
\end{table}

\begin{table}[t]
\begin{center}
\caption{Simulation results for DGP (c) with sparsity levels $s_{\balpha}=6$ and $s_{\bbeta}=2$. The rest of the caption details remain the same as those in Table \ref{table:DGP-a}.} \label{table:DGP-c}
\resizebox{\columnwidth}{!}{%
\begin{tabular}{>{\centering\arraybackslash}p{3.8em}>{\centering\arraybackslash}p{3.4em}>{\centering\arraybackslash}p{3.4em}>{\centering\arraybackslash}p{3.4em}>{\centering\arraybackslash}p{3.4em}c>{\centering\arraybackslash}p{3.4em}>{\centering\arraybackslash}p{3.4em}>{\centering\arraybackslash}p{3.4em}>{\centering\arraybackslash}p{3.4em}}
\hline
\toprule
Estimator&Bias&RMSE&Length&Coverage&&Bias&RMSE&Length&Coverage\\
\hline
&\multicolumn{4}{c}{\cellcolor{gray!50} DGP (c) $N\gamma_N=500$, $d=51$, $\gamma_N=0.05$}&&\multicolumn{4}{c}{\cellcolor{gray!50} DGP (c) $N\gamma_N=1000$, $d=51$, $\gamma_N=0.1$}\\
\cline{2-5}\cline{7-10}
$\muhat_{\mbox{\tiny oracle}}$&-0.013&0.077&0.463&0.948&&-0.010&0.063&0.390&0.954\\
$\muhat_{\mbox{\tiny MCAR}}$&-0.722&0.722&0.484&0.002&&-0.642&0.642&0.394&0.000\\
$\muhat_{\mbox{\tiny SS-Lasso}}$&-0.255&0.263&0.752&0.712&&-0.181&0.182&0.576&0.740\\
$\muhat_{\mbox{\tiny SS-RF}}$&-0.282&0.288&0.749&0.668&&-0.186&0.188&0.525&0.674\\
$\muhat_{\mbox{\tiny BRSS}}$&-0.116&0.139&0.637&0.884&&-0.044&0.095&0.493&0.930\\
\hline
&\multicolumn{4}{c}{\cellcolor{gray!50} DGP (c) $N\gamma_N=1000$, $d=201$, $\gamma_N=0.1$}&&\multicolumn{4}{c}{\cellcolor{gray!50} DGP (c) $N\gamma_N=2000$, $d=201$, $\gamma_N=0.1$}\\
\cline{2-5}\cline{7-10}
$\muhat_{\mbox{\tiny oracle}}$&-0.009&0.064&0.391&0.964&&-0.010&0.050&0.276&0.948\\
$\muhat_{\mbox{\tiny MCAR}}$&-0.637&0.637&0.395&0.000&&-0.628&0.628&0.278&0.000\\
$\muhat_{\mbox{\tiny SS-Lasso}}$&-0.223&0.223&0.556&0.660&&-0.168&0.168&0.407&0.622\\
$\muhat_{\mbox{\tiny BRSS}}$&-0.065&0.091&0.488&0.894&&-0.033&0.063&0.349&0.930\\
\bottomrule
\end{tabular}
}
\end{center}
\end{table}

Among the four estimators, $\muhat_{\mbox{\tiny SS-RF}}$ exhibits larger bias and RMSE due to slower convergence rates, while the remaining estimators demonstrate smaller biases compared to RMSE in Tables \ref{table:DGP-a} and \ref{table:DGP-b}. In contrast, in Table \ref{table:DGP-c}, notable biases and larger RMSEs are observed for $\muhat_{\mbox{\tiny SS-RF}}$, $\muhat_{\mbox{\tiny SS-Lasso}}$, and $\muhat_{\mbox{\tiny MCAR}}$, while $\muhat_{\mbox{\tiny BRSS}}$ outperforms them significantly. The poor performance of $\muhat_{\mbox{\tiny MCAR}}$ in all DGPs is attributed to its incorrect treatment of the labeling indicator's PS as a constant, resulting in significant undercoverage, particularly in Table \ref{table:DGP-c}. On the other hand, $\muhat_{\mbox{\tiny SS-Lasso}}$ provides valid inference with correct nuisance model specification (as per Theorem \ref{thm:gen-rate-DR}), but its reliability diminishes when model misspecification occurs, leading to underestimation of variance and large bias in Tables \ref{table:DGP-b} and \ref{table:DGP-c}. In contrast, the $\muhat_{\mbox{\tiny SS-RF}}$ estimator, relying on non-parametric nuisance estimators, fails to satisfy the `product-rate' condition, resulting in undercoverage in all DGPs. The coverage results in Tables \ref{table:DGP-a}-\ref{table:DGP-c} support the inference quality of $\muhat_{\mbox{\tiny BRSS}}$ when one nuisance is misspecified, as per Theorem \ref{cor:para}. Notably, Table \ref{table:DGP-b} exhibits excellent coverage with minor deviations for smaller effective sample size and higher dimension, while in Table \ref{table:DGP-c}, $\muhat_{\mbox{\tiny BRSS}}$ consistently delivers strong performance for higher sample sizes where $\muhat_{\mbox{\tiny MCAR}}$, $\muhat_{\mbox{\tiny SS-Lasso}}$, and $\muhat_{\mbox{\tiny SS-RF}}$ fail, respectively.

\subsection{A degenerate setting with outcomes fully observed}\label{subsec:sim_ful}

In the setting of fully observed outcomes ($R_i=1$ for all $i\in{1,\dots,N}$), we examine the cases where one of the nuisance models is misspecified and highlight what is different from Section \ref{subsec:sim_DMAR}.
\vskip 2pt
\paragraph*{(d) Linear OR models, non-logistic PS model}

$$\pi(\bx)=\exp(\bx^T\bbeta_d)/\{1+\exp(\bx^T\bbeta_d)\}\{0.3\sin(\bx^T\bbeta_d)+0.7\}\;\;\mbox{and}\;\;Y_i(j)=\bS_i^T\balpha_d(j)+\delta_i.$$

\paragraph*{(e) Non-linear OR models, logistic PS model}

$$\pi(\bx)=\exp(\bx^T\bbeta_e)/\{1+\exp(\bx^T\bbeta_e)\}\;\;\mbox{and}\;\;Y_i(j)=\bX_i^T\balpha_e(j)+(\bX_i^2)^T\etabold_e(j)+\delta_i.$$

\begin{table}[t]
\begin{center}
\caption{Simulation results for DGPs (d)-(e). The rest of the caption details remain the same as those in Table \ref{table:DGP-a}.} \label{table:DGP-d_e}
\resizebox{\columnwidth}{!}{%
\begin{tabular}{>{\centering\arraybackslash}p{3.8em}>{\centering\arraybackslash}p{3.4em}>{\centering\arraybackslash}p{3.4em}>{\centering\arraybackslash}p{3.4em}>{\centering\arraybackslash}p{3.4em}c>{\centering\arraybackslash}p{3.4em}>{\centering\arraybackslash}p{3.4em}>{\centering\arraybackslash}p{3.4em}>{\centering\arraybackslash}p{3.4em}}
\hline
\toprule
Estimator&Bias&RMSE&Length&Coverage&&Bias&RMSE&Length&Coverage\\
\hline
&\multicolumn{4}{c}{\cellcolor{gray!50} DGP (d) $N=300$, $d=51$}&&\multicolumn{4}{c}{\cellcolor{gray!50} DGP (e) $N=400$, $d=51$}\\
\cline{2-5}\cline{7-10}
$\muhat_{\mbox{\tiny oracle}}$&0.002&0.411&2.518&0.948&&-0.012&0.375&2.305&0.958\\
$\muhat_{\mbox{\tiny Smucler}}$&-0.560&0.637&2.626&0.870&&0.484&0.584&2.479&0.896\\
$\muhat_{\mbox{\tiny BRSS}}$&-0.208&0.460&2.471&0.928&&0.104&0.471&2.315&0.942\\
\bottomrule
\end{tabular}
}
\end{center}
\end{table}

\begin{table}[t]
\begin{center}
\caption{Simulation results for DGP (f). The rest of the caption details remain the same as those in Table \ref{table:DGP-a}.} \label{table:DGP-f}
\resizebox{\columnwidth}{!}{%
\begin{tabular}{>{\centering\arraybackslash}p{3.8em}>{\centering\arraybackslash}p{3.4em}>{\centering\arraybackslash}p{3.4em}>{\centering\arraybackslash}p{3.4em}>{\centering\arraybackslash}p{3.4em}c>{\centering\arraybackslash}p{3.4em}>{\centering\arraybackslash}p{3.4em}>{\centering\arraybackslash}p{3.4em}>{\centering\arraybackslash}p{3.4em}}
\hline
\toprule
Estimator&Bias&RMSE&Length&Coverage&&Bias&RMSE&Length&Coverage\\
\hline
&\multicolumn{4}{c}{\cellcolor{gray!50} DGP (f) $N\gamma_N=500$, $d=31$, $\gamma_N=0.05$}&&\multicolumn{4}{c}{\cellcolor{gray!50} DGP (f) $N\gamma_N=1000$, $d=51$, $\gamma_N=0.1$}\\
\cline{2-5}\cline{7-10}
$\muhat_{\mbox{\tiny oracle}}$&-0.007&0.083&0.500&0.954&&-0.018&0.074&0.405&0.944\\
$\muhat_{\mbox{\tiny MCAR}}$&-0.724&0.724&0.493&0.004&&-0.614&0.614&0.402&0.002\\
$\muhat_{\mbox{\tiny SS-Lasso}}$&-0.224&0.233&0.774&0.736&&-0.255&0.255&0.521&0.534\\
$\muhat_{\mbox{\tiny SS-RF}}$&-0.172&0.198&0.736&0.774&&-0.151&0.161&0.498&0.732\\
$\muhat_{\mbox{\tiny BRSS}}$&-0.170&0.180&0.640&0.800&&-0.135&0.143&0.478&0.760\\
$\muhat_{\mbox{\tiny DC-BRSS}}$&-0.127&0.158&1.144&0.988&&-0.084&0.109&0.799&0.992\\
\bottomrule
\end{tabular}
}
\end{center}
\end{table}

The parameters for DGPs (d) and (e) are $\balpha_d(1):=3(1,0.9^1,\dots,0.9^{d-1})^T\in\R^d,\;\;\balpha_d(0)=-\balpha_d(1),\;\;\balpha_e(j)=\balpha_d(j)$, $\etabold_e(1):=(0,0.9^1,\dots,0.9^{d-1})^T\in\R^d,\;\;\etabold_e(0)=-\etabold_e(1)$, $\bbeta_d:=(0.99,0.5\cdot0.7^1,\dots,0.5\cdot0.7^{d-1})^T\in\R^d,\;\;\bbeta_e:=(0.2247,0.7^1,\dots,0.7^{d-1})^T\in\R^d$. Here, we compare the numerical performance of our proposed bias-reduced estimator $\muhat_{\mbox{\tiny BRSS}}$ with $\muhat_{\mbox{\tiny Smucler}}$ by \cite{smucler2019unifying}; as per Table \ref{table:sparsity} their estimator is the most competitive in the existing literature. The nuisance parameters in DGPs (d) and (e) exhibit a `weakly sparse' nature, with bounded $\ell_1$-norms but $\ell_0$-norms equal to the dimension. In Table \ref{table:DGP-d_e}, it is observed that the estimator $\muhat_{\mbox{\tiny Smucler}}$ suffers from substantial biases. Consequently, the coverages based on $\muhat_{\mbox{\tiny Smucler}}$ are relatively poor. In contrast, our proposed estimator $\muhat_{\mbox{\tiny BRSS}}$ exhibits significantly smaller biases, leading to improved coverages. Additionally, $\muhat_{\mbox{\tiny BRSS}}$ achieves smaller RMSEs compared to $\muhat_{\mbox{\tiny Smucler}}$ for both DGPs (d) and (e).

\subsection{Results based on the de-coupled approach}

In this section, we further examine the behavior of the de-coupled bias-reduced SS estimator, $\muhat_{\tiny\mbox{DC-BRSS}}$, proposed in Section \ref{subsec:SP}. The following case is considered.

\paragraph*{(f) Non-linear OR models, non-logistic treatment PS model, logistic labeling PS model} Generate $\pi(\bx)=0.3\cos(\bx^T\bbeta)+0.5$, $p_N(j,\bx)=\exp\{\bx^T\bbeta'(j)\}/[1+\exp\{\bx^T\bbeta'(j)\}]$, and \eqref{DGP_quad_OR}. Choose $\balpha(j)$ and $\etabold(j)$ as in Section \ref{subsec:sim_DMAR} with $s_{\balpha}=5$, $\bbeta:=2(0,0.9^1,\dots,0.9^{d-1})^T$, $\bbeta'(1):=(\bbeta_N'(1),1,\boldsymbol{1}_{4}^T/4,0,\dots,0)^T\in\R^d$, and $\bbeta'(0):=(\bbeta_N'(0),-1,-\boldsymbol{1}_{4}^T/4,0,\dots,0)^T\in\R^d$, where $\bbeta_N'(1)$ and $\bbeta_N'(0)$ are chosen such that $\E(RT)=\E\{R(1-T)\}=\gamma_{N}$.

We compare the numerical performance of the estimators considered in Section \ref{subsec:sim_DMAR} with $\muhat_{\tiny\mbox{DC-BRSS}}$, where the treatment PS function $\pi(\cdot)$ is estimated using random forests. As shown in Table \ref{table:DGP-f}, $\muhat_{\tiny\mbox{MCAR}}$ provides large biases and very poor coverages since the true labeling mechanism is not MCAR. The performance of $\muhat_{\tiny\mbox{SS-Lasso}}$ is also relatively poor since both nuisance models are misspecified -- the OR models are non-linear and the product PS models are non-logistic. The BRSS estimator $\muhat_{\tiny\mbox{BRSS}}$ provides slightly smaller biases and RMSEs, although the working models are still misspecified. The fully non-parametric estimator $\muhat_{\tiny\mbox{SS-RF}}$ performs slightly worse than $\muhat_{\tiny\mbox{BRSS}}$. Although the non-parametric nuisance estimates are consistent, the convergence rates are relatively slow as the `effective sample size' for the OR and product PS estimation is only $500$ when $d=31$ and $1000$ when $d=51$, resulting in relatively large biases for the final ATE estimation. Lastly, by fully leveraging the large-sized \((\bX_i, T_i)\) samples and estimating \(\pi(\cdot)\) non-parametrically while keeping other working models parametric, the proposed DC-BRSS estimator \(\muhat_{\tiny\mbox{DC-BRSS}}\) achieves lower bias and RMSE than all other ATE estimators, even when the OR models are misspecified. We observe that confidence intervals based on the DC-BRSS method tend to exhibit slight over-coverage, likely due to the instability of the treatment PS estimate, where random forests were used as the initial estimator.

\section{Applications to a pseudo-random dataset}\label{sec:data_analysis}

We compare the performance of the proposed estimators using a synthetic dataset obtained from the Atlantic Causal Inference Conference (ACIC) 2019 Data Challenge.\footnote{The high-dimensional datasets (with continuous outcomes) provided by ACIC 2019 are available at: \url{https://sites.google.com/view/acic2019datachallenge/data-challenge}, and these are constructed based on 16 scenarios.} We focus on 14 scenarios from the ACIC 2019 dataset, as two scenarios share the same covariate matrix. To examine the performance under model misspecification, we construct $T_i$ and $R_i$ as in \eqref{DGP_logistic_PS1}-\eqref{DGP_logistic_PS3} and generate the outcome variable $Y_i$ as in \eqref{DGP_quad_OR} -- that is, we consider a correctly specified (logistic) PS model and a misspecified (quadratic) OR model. We set
$\balpha(1):=(2,2,1,1,1,1,0,\dots,0)^T\in\R^d,\;\;\balpha(0):=-\balpha(1), $
$\bbeta(1):=(-1.5,0.5,0,\dots,0)^T\in\R^d,\;\;\bbeta(0):=(-1.5,-0.5,0,\dots,0)^T\in\R^d, $ and
$\etabold(1):=0.35(0,2,1,1,1,1,0,\dots,0)^T\in\R^d,\;\;\etabold(0):=-\etabold(1).
$ We generate 100 sets for each scenario resulting in
 $14*100=1400$ pseudo-random datasets in total. The covariate matrices have dimensions of $(N,d)=(1000,201)$ in Scenarios 1, 2, 3, 6, 8, and 14, and $(N,d)=(2000,201)$ in the other scenarios.
We consider $\muhat_{\mbox{\tiny MCAR}}$, $\muhat_{\mbox{\tiny SS-Lasso}}$, and $\muhat_{\mbox{\tiny BRSS}}$. The results are reported in Table \ref{table:pseudo}.

\begin{table}[b]
\begin{center}
\caption{Results for the pseudo-random dataset. Bias: empirical bias; RMSE: root mean square error; Length: average length of the $95\%$ confidence intervals; Coverage: average coverage of the $95\%$ confidence intervals.} \label{table:pseudo}
\begin{tabular}{cccccccccc}
\hline
\toprule
Estimator&Bias&RMSE&Length&Coverage&&Bias&RMSE&Length&Coverage\\
\hline
&\multicolumn{4}{c}{\cellcolor{gray!50} Scenario 1 $N=1000$, $d=201$}&&\multicolumn{4}{c}{\cellcolor{gray!50} Scenario 2 $N=1000$, $d=201$}\\
\cline{2-5}\cline{7-10}
$\muhat_{\mbox{\tiny MCAR}}$&-0.341&0.403&1.209&0.880&&-0.231&0.307&0.994&0.900\\
$\muhat_{\mbox{\tiny SS-Lasso}}$&-0.257&0.398&1.248&0.900&&-0.157&0.260&1.057&0.960\\
$\muhat_{\mbox{\tiny BRSS}}$&-0.263&0.322&1.142&0.940&&-0.164&0.244&1.004&0.940\\
\hline
&\multicolumn{4}{c}{\cellcolor{gray!50} Scenario 3 $N=1000$, $d=201$}&&\multicolumn{4}{c}{\cellcolor{gray!50} Scenario 4 $N=2000$, $d=201$}\\
\cline{2-5}\cline{7-10}
$\muhat_{\mbox{\tiny MCAR}}$&-0.080&0.400&1.879&0.960&&-0.237&0.273&0.652&0.760\\
$\muhat_{\mbox{\tiny SS-Lasso}}$&-0.102&0.408&1.958&0.920&&-0.144&0.200&0.696&0.930\\
$\muhat_{\mbox{\tiny BRSS}}$&-0.114&0.411&1.906&0.930&&-0.095&0.161&0.672&0.960\\
\hline
&\multicolumn{4}{c}{\cellcolor{gray!50} Scenario 5 $N=2000$, $d=201$}&&\multicolumn{4}{c}{\cellcolor{gray!50} Scenario 6 $N=1000$, $d=201$}\\
\cline{2-5}\cline{7-10}
$\muhat_{\mbox{\tiny MCAR}}$&-0.290&0.316&0.665&0.650&&-0.247&0.313&0.997&0.860\\
$\muhat_{\mbox{\tiny SS-Lasso}}$&-0.187&0.231&0.717&0.870&&-0.168&0.266&1.058&0.920\\
$\muhat_{\mbox{\tiny BRSS}}$&-0.118&0.167&0.686&0.980&&-0.171&0.245&1.006&0.950\\
\hline
&\multicolumn{4}{c}{\cellcolor{gray!50} Scenario 7 $N=2000$, $d=201$}&&\multicolumn{4}{c}{\cellcolor{gray!50} Scenario 8 $N=1000$, $d=201$}\\
\cline{2-5}\cline{7-10}
$\muhat_{\mbox{\tiny MCAR}}$&-0.254&0.290&0.656&0.670&&-0.272&0.337&0.956&0.810\\
$\muhat_{\mbox{\tiny SS-Lasso}}$&-0.152&0.217&0.709&0.880&&-0.180&0.285&1.019&0.910\\
$\muhat_{\mbox{\tiny BRSS}}$&-0.095&0.158&0.679&0.980&&-0.186&0.270&0.975&0.940\\
\hline
&\multicolumn{4}{c}{\cellcolor{gray!50} Scenario 9 $N=2000$, $d=201$}&&\multicolumn{4}{c}{\cellcolor{gray!50} Scenario 12 $N=2000$, $d=201$}\\
\cline{2-5}\cline{7-10}
$\muhat_{\mbox{\tiny MCAR}}$&-0.247&0.289&0.666&0.710&&-0.250&0.283&0.662&0.710\\
$\muhat_{\mbox{\tiny SS-Lasso}}$&-0.140&0.209&0.718&0.900&&-0.147&0.198&0.711&0.890\\
$\muhat_{\mbox{\tiny BRSS}}$&-0.082&0.156&0.687&0.960&&-0.093&0.159&0.678&0.990\\
\hline
&\multicolumn{4}{c}{\cellcolor{gray!50} Scenario 13 $N=2000$, $d=201$}&&\multicolumn{4}{c}{\cellcolor{gray!50} Scenario 14 $N=1000$, $d=201$}\\
\cline{2-5}\cline{7-10}
$\muhat_{\mbox{\tiny MCAR}}$&-0.236&0.275&0.665&0.720&&-0.211&0.297&0.956&0.910\\
$\muhat_{\mbox{\tiny SS-Lasso}}$&-0.132&0.199&0.714&0.860&&-0.122&0.259&1.020&0.950\\
$\muhat_{\mbox{\tiny BRSS}}$&-0.084&0.155&0.682&0.990&&-0.134&0.242&0.970&0.960\\
\hline
&\multicolumn{4}{c}{\cellcolor{gray!50} Scenario 15 $N=2000$, $d=201$}&&\multicolumn{4}{c}{\cellcolor{gray!50} Scenario 16 $N=2000$, $d=201$}\\
\cline{2-5}\cline{7-10}
$\muhat_{\mbox{\tiny MCAR}}$&-0.245&0.278&0.661&0.730&&-0.356&0.388&0.832&0.590\\
$\muhat_{\mbox{\tiny SS-Lasso}}$&-0.148&0.202&0.708&0.900&&-0.228&0.321&0.911&0.820\\
$\muhat_{\mbox{\tiny BRSS}}$&-0.087&0.149&0.679&0.990&&-0.159&0.204&0.812&0.980\\
\bottomrule
\end{tabular}
\end{center}
\end{table}

Except for Scenario 3, the MCAR estimator exhibits poor performance with large biases and inadequate coverage. In Scenarios 2, 3, 4, 6, 8, and 14, both the SS-Lasso and BRSS estimators perform well, showing similar biases, RMSEs, and coverages close to 95\%. However, in Scenarios 5, 7, 9, 12, 13, 15, and 16, BRSS outperforms SS-Lasso with smaller biases, RMSEs, and improved coverages. Additionally, in Scenario 1, BRSS achieves a smaller RMSE and a coverage closer to 95\%, while its bias remains comparable to SS-Lasso. Overall, the MCAR estimator's poor performance is attributed to the mischaracterization of the labeling PS function, while the BRSS estimator exhibits greater stability compared to SS-Lasso due to the misspecified OR model.

\section{Discussion}\label{sec:disc}

{This paper studies ATE estimation in SS settings with extreme outcome missingness, where labeling bias may arise. We introduce a {\it `decaying' MAR setting} for causal inference, unifying regular labeling bias and missing outcome scenarios into a broader framework. As explored in Section \ref{subsec:misT} of the \hyperref[supp_mat]{Supplement}, this framework also extends to settings with missing treatment variables. We propose bias-reduced estimators using two distinct nuisance estimation strategies, with the de-coupled approach particularly suited for SS ATE estimation. By integrating IPW with bias-reducing techniques, it enhances robustness of inference and therefore leading to more reliable causal conclusions. Our results underscore the critical role of nuisance estimation design in achieving reliable ATE estimation and inference.}

In addition to the ATE estimation problem, our framework opens up possibilities for studying estimation and inference of other parameters of interest in the decaying MAR setting. One such parameter is the decaying PS function, which serves as a vital intermediate step for ATE estimation. Further research is needed to explore the validity of non-parametric PS estimators in the decaying PS setup. Additionally, alternative methods like inverse probability weighting and residual balancing can be employed in degenerate supervised settings, but their validity and theoretical properties in this context require further investigation for future research.

\section*{Acknowledgement}
This work was supported in part by the National Natural Science Foundation of China (NSFC) grant 12301390 (YZ) and the National Science Foundation grants NSF DMS-2113768 (AC) and NSF DMS-1712481 (JB).


\section*{Supplementary Material}\label{supp_mat}

\noindent \textbf{Supplement to `The Decaying Missing-at-Random Framework: {Model} Doubly Robust Causal Inference with Partially Labeled Data'}.
In the \hyperref[supp_mat]{Supplement}, we present additional theoretical results, discussions, and proofs that support our main findings. Section \ref{subsec:misT} introduces {settings with missing treatments and discusses extensions of our BRSS and DC-BRSS methodology therein,} while Sections \ref{sec:lemmas}-\ref{sec:proof_BRDR-SP} provide detailed proofs for the main results on bias-reduced estimators. {{The general theoretical properties} for the R-DR estimator {of Section \ref{sec:gen-DR-DMAR}} are offered in Section \ref{sec:add-S1}, with corresponding proofs found in Section \ref{sec:proof_sec:gen}.}

\appendix
\bibliographystyle{imsart-nameyear} 
\bibliography{mar}

\renewcommand{\thelemma}{S.\arabic{lemma}}
\renewcommand{\thetable}{S.\arabic{table}}


\clearpage\newpage  
\par\bigskip
\begin{center}
\textbf{\uppercase{Supplement to `The Decaying Missing-at-Random Framework: Model Doubly Robust Causal Inference with Partially Labeled Data'}}
\end{center}

\par\medskip
This supplementary document contains further discussions, additional theoretical results, and proofs of the main results that could not be accommodated in the main paper. All results and notations are numbered and used as in the main text unless stated otherwise.

\begin{table}[b]
\caption{Table of notations (we let $j\in\{0,1\}$ and $k\in\{1,2\}$, wherever applicable, in the following table)} \label{table:notations}
\begin{tabular}{| l | l |}
\hline
Notation & Description\\
\hline
$\bX_i,\bX$ & The vectors of covariates\\
$\varphi(\cdot)$ & The feature map\\
$\bS_i=\varphi(\bX_i),\bS=\varphi(\bX)$ & The feature vectors \\
$T_i,T$ & The treatment indicators\\
$R_i,R$ & The labeling indicators\\
$\Gamma_i=T_iR_i,\Gamma=TR$ & The product indicators\\
$Y_i,Y$ & The {observable} outcome of interest\\
$Y_i(j),Y(j)$ & The potential outcomes \\
$N, n=\sum_{i=1}^NR_i$ & The total and labeled sample sizes\\
$p,d$ & The dimensions of covariates $\bX$ and feature vector $\varphi(\bX)$\\
$\K$ & Number of folds\\
$M=N/\K$ & Number of samples in each fold\\
$p_N$ & The labeling probability: $\P(R=1\mid T=1)$\\
$\gamma_N$ & The product {indicator's} probability: $\P(\Gamma=1)$\\
$a_{N}$ & The inverse of the inverse product PS function's expectation\\
$Na_N$ & The `effective sample size'\\
$\theta_j$ & The counterfactual means $\theta_j:=\E\{Y(j)\}$ \\
$\mu_0$ & The ATE parameter $\mu_0:=\theta_1-\theta_0$\\
$m(\cdot)$ & The true OR function\\
$\pi(\cdot),p_N(\cdot),\gamma_N(\cdot)$ & The true treatment, labeling, and product PS functions\\
$\mhat^{(-k)}(\cdot),\gammahat_N^{(-k)}(\cdot)$ & The cross-fitted OR and product PS estimators\\
$m^*(\cdot),\gamma_N^*(\cdot)$ & The limiting OR and product PS functions\\
$\thetahat_{j,\mbox{\tiny R-DR}}$ & The R-DR estimator for the counterfactual mean\\
$\muhat_{\mbox{\tiny R-DR}}$ & The R-DR estimator for the ATE\\
$g(\cdot)$ & The logistic function\\
$\balpha^*,\bbeta^*$ & The nuisance parameters for the BRSS estimator\\
$\balphatil^*,\bbeta_{p}^*$ & The nuisance parameters for the DC-BRSS estimator\\
$s_{\balpha},s_{\bbeta},s_{\balphatil},s_{p}$ & The sparsity levels of $\balpha^*,\bbeta^*,\balphatil^*,\bbeta_{p}^*$\\
$\gammahat_N^{(k)},\gammahat_N:=\gammahat_N^{(1)},\gammabar_N$ & The estimates of $\gamma_N$\\
$\phat_{N}^{(k)},\phat_{N}:=\phat_{N}^{(1)},\pbar_{N}$ & The estimates of $p_{N}$\\
$\balphahat^{(k)},\balphahat:=\balphahat^{(1)}$ & The OR estimators for the BRSS approach\\
$\bbetahat^{(k)},\bbetahat:=\bbetahat^{(1)}$ & The PS estimators for the BRSS approach\\
$\balphatil^{(k)},\balphatil:=\balphatil^{(1)}$ & The OR estimators for the DC-BRSS approach\\
$\bbetahat_{p}^{(k)},\bbetahat_{p}:=\bbetahat_{p}^{(1)}$ & The PS estimators for the DC-BRSS approach\\
$\thetahat_{\mbox{\tiny j,BRSS}}^{(k)},\thetahat_{\mbox{\tiny j,BRSS}},\thetahat_{\mbox{\tiny j,DC-BRSS}}^{(k)},\thetahat_{\mbox{\tiny j,DC-BRSS}}$&The bias-reduced SS estimators for $\theta_j$\\
$\muhat_{\mbox{\tiny BRSS}},\muhat_{\mbox{\tiny DC-BRSS}}$&The bias-reduced SS estimators for the ATE\\
\hline
\end{tabular}
\end{table}

\paragraph*{Organization}

{The structure of the document is as follows. In Section \ref{subsec:misT}, we introduce {settings with missing treatments and discusses extensions of our BRSS and DC-BRSS methodology therein,} Before establishing the results for bias-reduced estimators in Section \ref{sec:MT}, we outline several key preliminary lemmas in Section \ref{sec:lemmas}. Since the DC-BRSS estimator extends the BRSS estimator, we focus primarily on the de-coupled approach, detailing the theoretical properties of the PS and OR nuisance estimators in Section \ref{sec:proof_nuisance_SP}. Using these nuisance estimation results, we establish the asymptotic properties of the DC-BRSS estimator in Section \ref{sec:proof_BRDR-SP}. {The general theoretical properties for the R-DR estimator of Section \ref{sec:gen-DR-DMAR} are provided in Section \ref{sec:add-S1},} with corresponding proofs in Section \ref{sec:proof_sec:gen}. The key notations used throughout the main paper and the supplement are summarized in Table \ref{table:notations}.}

\section{Missing treatment settings}\label{subsec:misT}

{}
In the main paper, we focus on semi-supervised (SS) settings where the outcome \( Y \) is subject to missingness. In this section, we extend our analysis to settings where the treatment variable \( T \) may also be missing and consider the following four scenarios:

\begin{itemize}
\item\textbf{Setting a (missing outcome).} Observe \( (R,T,RY,\bX) \), where \( R=1 \) indicates that \( Y \) is observed.
\item\textbf{Setting b (missing treatment).} Observe \( (R,RT,Y,\bX) \), where \( R=1 \) indicates that \( T \) is observed.
\item\textbf{Setting c (simultaneously missing treatment and outcome).} Observe \( (R,RT,RY,\bX) \), where \( Y \) and \( T \) are jointly missing, and \( R=1 \) indicates that both are observed.
\item\textbf{Setting d (non-simultaneously missing treatment and outcome).} Observe \( (R_T,R_Y,\) \(R_TT,R_YY,\bX) \), where \( R_Y=1 \) if \( Y \) is observed, \( R_T=1 \) if \( T \) is observed, and \( R:=R_TR_Y=1 \) if both are observed.
\end{itemize}

Setting a above corresponds to the SS setup discussed in the main {paper.} Since Settings a-c are special cases of Setting d, we now concentrate on Setting d and provide a unified approach for identifying $\theta_1=\E\{Y(1)\}$ across all these settings in the following corollary.

\begin{corollary}\label{thm:DR_d}
Let Assumption \ref{cond:basic} hold with $R:=R_TR_Y$.
Define $\Gamma:=TR=TR_TR_Y$, $\gamma_N(\bx):=\P(\Gamma=1\mid\bX=\bx)$, and $m(\bx):=\E\{Y(1)\mid\bX\}$. Then, $m(\bx)=\E(Y\mid\bX=\bx,\Gamma=1)$. Additionally, for any arbitrary functions $m^*(\cdot)$ and $\gamma_N^*(\cdot)$, we have
\[\E\left\{\psi^*_{N} (\bZ)\right\} =0,\;\;\mbox{with}\;\;\psi^*_{N}(\bZ)=m^*(\bX) - \theta_1+\frac{\Gamma}{\gamma_N^*(\bX)} \left\{Y - m^*(\bX)\right\},\]
as long as either $m^*(\cdot)=m(\cdot)$ or $\gamma_N^*(\cdot)=\gamma_N(\cdot)$.
\end{corollary}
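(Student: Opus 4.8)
The plan is to recognize that Corollary \ref{thm:DR_d} is the exact analogue of Lemma \ref{lemma:rep} once we pass to the effective labeling indicator $R := R_TR_Y$ and the associated product indicator $\Gamma := TR = TR_TR_Y$. The key observation is that the identification argument underlying Lemma \ref{lemma:rep} is purely \emph{distributional}: it invokes only the consistency relation $Y = TY(1)+(1-T)Y(0)$, the R-MAR condition $R\ind Y\mid(T,\bX)$, the T-NUC condition, and the overlap/DOC conditions guaranteeing $\gamma_N(\bx)>0$; at no point does it use that $T$ itself is always observed. Since the corollary assumes Assumption \ref{cond:basic} with this very $R$, and since $\Gamma$ remains computable from the observed data (indeed $\Gamma=(R_TT)R_Y$ with both $R_TT$ and $R_Y$ observed, so $\gamma_N(\cdot)$ is a bona fide estimable nuisance), I would simply transcribe the proof of Lemma \ref{lemma:rep} with this new $\Gamma$, checking each step remains valid.

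Concretely, I would first establish the identification $m(\bx)=\E(Y\mid\bX=\bx,\Gamma=1)$. Conditioning on $\{\Gamma=1\}$ forces $T=1$ and $R=1$, so consistency gives $Y=Y(1)$ on this event; R-MAR then removes the conditioning on $\{R=1\}$, and T-NUC removes the conditioning on $\{T=1\}$, leaving $\E(Y\mid\bX=\bx,\Gamma=1)=\E\{Y(1)\mid\bX=\bx\}=m(\bx)$. This chain simultaneously yields the Reg. representation $\theta_1=\E\{m(\bX)\}$, and, after smoothing, the IPW representation: since $\E(\Gamma Y\mid\bX)=\gamma_N(\bX)\,\E(Y\mid\bX,\Gamma=1)=\gamma_N(\bX)m(\bX)$, one gets $\E\{\Gamma Y/\gamma_N(\bX)\}=\E\{m(\bX)\}=\theta_1$.

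I would then prove \eqref{eq:DR2-rep} by the usual two-case dichotomy. When the PS is correct ($\gamma_N^*=\gamma_N$), smoothing the weight collapses the augmentation's mean term via $\E(\Gamma\mid\bX)=\gamma_N(\bX)$, giving $\E\{\Gamma m^*(\bX)/\gamma_N(\bX)\}=\E\{m^*(\bX)\}$, while the IPW identity supplies $\E\{\Gamma Y/\gamma_N(\bX)\}=\theta_1$; the two $\E\{m^*(\bX)\}$ terms cancel and leave $\theta_1$. When the OR is correct ($m^*=m$), it suffices to show the augmentation term has conditional mean zero, i.e. $\E[\Gamma\{Y-m(\bX)\}\mid\bX]=0$.

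The step I expect to be the main obstacle is precisely this last conditional-mean identity, because $\Gamma=TR_TR_Y$ is a triple product and Setting d furnishes R-MAR only for the \emph{combined} indicator $R=R_TR_Y$, not separately for $R_T$ and $R_Y$. I would handle it by conditioning on $(T,\bX)$: the factor $T$ annihilates the $T=0$ contribution, while on $\{T=1\}$ consistency gives $Y=Y(1)$, so R-MAR delivers $R\ind Y(1)\mid(T=1,\bX)$ and hence the factorization $\E[R\{Y(1)-m(\bX)\}\mid T=1,\bX]=p_N(1,\bX)\,\E\{Y(1)-m(\bX)\mid T=1,\bX\}$; a final appeal to T-NUC makes the inner expectation vanish. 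The essential point is that only the product-level independence $R\ind Y\mid(T,\bX)$ is needed, so the conclusion is robust to arbitrary dependence between $R_T$ and $R_Y$ given $(T,\bX)$ — which is exactly what makes Setting d tractable without any extra structural assumptions beyond Assumption \ref{cond:basic}.
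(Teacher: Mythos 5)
Your proposal is correct and is essentially the paper's own argument: the paper treats Corollary \ref{thm:DR_d} precisely as you do, by observing that the proof of Lemma \ref{lemma:rep} -- consistency, then R-MAR with $R=R_TR_Y$, then T-NUC, yielding the key identity $\E(\Gamma Y\mid\bX)=\gamma_N(\bX)m(\bX)$ -- is purely distributional and never uses observability of $T$, so it applies verbatim to $\Gamma=TR_TR_Y$. The only cosmetic difference is organizational: you verify the DR representation via a two-case dichotomy (PS-correct vs.\ OR-correct, conditioning on $(T,\bX)$ in the latter), whereas the paper's computation handles both cases at once by reducing the bias to $\E[\{\gamma_N(\bX)/\gamma_N^*(\bX)-1\}\{m(\bX)-m^*(\bX)\}]=0$.
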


Corollary \ref{thm:DR_d} assumes an R-MAR condition, \( R_T R_Y \perp Y \mid (T, \bX) \). Unlike prior studies that impose restrictive monotonicity conditions \citep{manski1997monotone,manski2000monotone,molinari2010missing,mebane2013causal}, our results avoid such restrictions between treatment and potential outcomes. Additionally, \cite{zhang2016causal} consider another MAR condition, $R\independent T\mid(\bX,Y)$. However, such an approach is generally inappropriate as $Y$ is typically evaluated after the treatment assignment.

{Under Setting d (and thus Settings b and c), the problem can be framed as a {\it three-occasion} missing data problem, where \( \Gamma = TR_T R_Y \) serves as the \emph{effective labeling indicator}. Redefining \( R = R_T R_Y \) as the labeling indicator for simultaneously observing \( T \) and \( Y \), the R-DR and BRSS estimators from Sections \ref{sec:gen-DR-DMAR} and \ref{subsec:nuisance} remain valid, and Theorems \ref{thm:gen-rate-DR} and \ref{cor:para} continue to hold under this updated definition. Therefore, as discussed in Remark \ref{remark:MDR}, the BRSS estimator offers enhanced robustness in both model and sparsity structures compared to the preliminary R-DR estimator under the missing treatment setup.

When applying the de-coupled approach from Section \ref{subsec:SP}, the additional missingness in \( T \) necessitates a modified factorization of the product propensity score function as below:
\[\gamma_N(\bx) ~=~ \P(TR_T = 1 \mid \bX = \bx) \cdot \P(R_Y = 1 \mid TR_T = 1, \bX = \bx).\]
The DC-BRSS estimator remains valid after replacing \( (T, R) \) with \( (TR_T, R_Y) \), and the results in Theorem \ref{t4-SP} still hold under this modification. However, its performance now also depends on the missingness mechanism of \( T \). If only a small or moderate proportion of \( T \) is missing, the groups \( TR_T = 1 \) and \( TR_T = 0 \) remain relatively balanced, enabling accurate estimation of \( \pi(\bx) := \P(TR_T = 1 \mid \bX = \bx) \) when \( N \gg n \). When this initial estimation is accurate, the DC-BRSS estimator offers improved robustness, as highlighted in Remark \ref{remark:MDR1}. In contrast, if \( T \) suffers from extreme missingness, leading to severe imbalance between these groups, estimating \( \pi(\bx) \) becomes challenging, as its convergence rate depends on the size of the smaller group. In such cases, the BRSS method from Section \ref{subsec:nuisance} may be a more suitable alternative.}

\section{Preliminary lemmas}\label{sec:lemmas}
We first provide some preliminary lemmas before we analyze the biased reduced SS estimators and the considered nuisance estimators.

\begin{lemma}\label{lemma:subG}
The following are some useful properties regarding the $\psi_\alpha$-norms.

(a) If $|X|\leq|Y|$ almost surely, then $\|X\|_{\psi_2}\leq\|Y\|_{\psi_2}$. If $|X|\leq C$ almost surely for some constant $C>0$, then $\|X\|_{\psi_2}\leq\{\log(2)\}^{-1/2}C$.

(b) If $\|X\|_{\psi_2}\leq\sigma$, then $\P(|X|>t)\leq2\exp(-t^2/\sigma^2)$ for all $t\geq0$.

(c) If $\|X\|_{\psi_\alpha}\leq\sigma$ for some $(\alpha,\sigma)>0$, then $\E(|X|^m)\leq C_{\alpha}^m\sigma^mm^{m/\alpha}$ for all $m\geq1$, for some constant $C_\alpha$ depending only on $\alpha$. In particular, if $\|X\|_{\psi_2}\leq\sigma$, $\E(|X|^m)\leq2\sigma^m\Gamma(m/2+1)$, for all $m\geq1$, where $\Gamma(a):=\int_0^\infty x^{a-1}\exp(-x)dx$ denotes the Gamma function. Hence, $\E(|X|)\leq\sigma\sqrt\pi$ and $\E(|X|^m)\leq2\sigma^m(m/2)^{m/2}$ for $m\geq2$. 

(d) For any $\alpha,\beta>0$, let $\gamma:=(\alpha^{-1}+\beta^{-1})^{-1}$. Then, for any $X,Y$ with $\|X\|_{\psi_\alpha}<\infty$ and $\|Y\|_{\psi_\beta}<\infty$, $\|XY\|_{\psi_\gamma}<\infty$ and $\|XY\|_{\psi_\gamma}<\|X\|_{\psi_\alpha}\|Y\|_{\psi_\beta}$.

(e) Let $\bX\in\R^p$ be a random vector with $\sup_{1\leq j\leq p}\|\bX(j)\|_{\psi_\alpha}\leq\sigma$. Then, $\|\|\bX\|_\infty\|_{\psi_\alpha}\leq\sigma\{\log{p}+2\}^{1/\alpha}$, where we denote $\bX(j)$ as the $j$-th element of a vector $\bX$.
\end{lemma}
Lemma \ref{lemma:subG} follows from Lemma D.1 of \cite{chakrabortty2019high}.

\begin{lemma}\label{integrating}
If $X\in\R$ is a random variable and there exists constants $a_1,a_2,a_3,a_4>0$, such that
$$\P(|X|>a_1u^2+a_2u+a_3)\leq a_4\exp(-u^2),\quad\forall u>0.$$
Then,
$$\E(|X|)\leq a_3+a_4(4a_1+\sqrt\pi a_2).$$
\end{lemma}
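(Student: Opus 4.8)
The plan is to reduce the claim to the elementary tail-integral (layer-cake) identity $\E(|X|) = \int_0^\infty \P(|X|>t)\,dt$, which is valid since $|X| \ge 0$, and then to exploit the hypothesized tail bound after a change of variables. First I would split the integral at the threshold $t = a_3$, writing $\E(|X|) = \int_0^{a_3} \P(|X|>t)\,dt + \int_{a_3}^\infty \P(|X|>t)\,dt$. The first piece is bounded trivially by $a_3$ using $\P(|X|>t) \le 1$.

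For the second piece, I would substitute $t = g(u) := a_1 u^2 + a_2 u + a_3$. Since $a_1,a_2 > 0$, the map $g:[0,\infty) \to [a_3,\infty)$ is a strictly increasing bijection with $g'(u) = 2a_1 u + a_2$, so the substitution is legitimate and converts the integral into $\int_0^\infty \P(|X| > g(u))\,(2a_1 u + a_2)\,du$. Applying the hypothesis $\P(|X| > g(u)) \le a_4 e^{-u^2}$ pointwise yields the upper bound $a_4 \int_0^\infty (2a_1 u + a_2)\, e^{-u^2}\,du$. The two resulting integrals are standard: $\int_0^\infty 2u\, e^{-u^2}\,du = 1$ and $\int_0^\infty e^{-u^2}\,du = \sqrt\pi/2$, giving $a_4\bigl(a_1 + \tfrac{\sqrt\pi}{2}a_2\bigr)$ for this piece (and simultaneously establishing that $\E|X| < \infty$).

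Combining the two pieces gives $\E(|X|) \le a_3 + a_4\bigl(a_1 + \tfrac{\sqrt\pi}{2}a_2\bigr)$, which is in fact sharper than the stated bound; the final step is simply to relax the constants via $a_1 \le 4a_1$ and $\tfrac{\sqrt\pi}{2}a_2 \le \sqrt\pi a_2$ (both using positivity) to arrive at $a_3 + a_4(4a_1 + \sqrt\pi a_2)$. There is no genuine obstacle here: the only points requiring minimal care are verifying the monotonicity of $g$ so that the change of variables is valid, and correctly evaluating the two elementary Gaussian-type integrals; the generous slack built into the target constants leaves ample room.
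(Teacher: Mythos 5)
Your proof is correct, and it takes a cleaner route than the paper for the key step. Both arguments begin identically: the layer-cake identity $\E(|X|)=\int_0^\infty\P(|X|>t)\,dt$, the split at $t=a_3$, and the trivial bound $a_3$ on the first piece. Where you differ is in handling the tail integral. The paper does \emph{not} change variables: for each $t$ it solves $a_1u^2+a_2u=t$ for $u(t)$, lower-bounds $u(t)\geq 2t/(\sqrt{4a_1t}+2a_2)$, plugs this into the hypothesis to get the integrand $\exp\{-t^2/(2a_1t+2a_2^2)\}$, and then splits the $t$-range at $a_2^2/a_1$ to bound the two regimes separately by a Gaussian integral (giving $\sqrt\pi a_2$) and an exponential integral (giving $4a_1$). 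You instead perform an honest substitution $t=g(u)=a_1u^2+a_2u+a_3$ with Jacobian $g'(u)=2a_1u+a_2$, which reduces everything to the exact evaluations $\int_0^\infty 2ue^{-u^2}\,du=1$ and $\int_0^\infty e^{-u^2}\,du=\sqrt\pi/2$. Your route yields the sharper constant $a_3+a_4\bigl(a_1+\tfrac{\sqrt\pi}{2}a_2\bigr)$, and it makes transparent that the paper's constants $4a_1$ and $\sqrt\pi a_2$ are artifacts of its cruder pointwise bounds rather than anything intrinsic; relaxing your constants by positivity recovers the stated bound exactly as you say. The only care points — monotonicity of $g$ so the substitution is valid, and the fact that the hypothesis is only assumed for $u>0$ (a null set at $u=0$ is harmless) — are handled or harmless, so there is no gap.
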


\begin{proof}[Proof of Lemma \ref{integrating}]
Observe that
\begin{align*}
\E(|X|)&=\int_0^\infty\P(|X|>t)dt\\
&=\int_0^{a_3}\P(|X|>t)dt+\int_{a_3}^\infty\P(|X|>t)dt\\
&\leq a_3+\int_{0}^\infty\P(|X|>t+a_3)dt.
\end{align*}
For any $t>0$, let $u>0$ satisfies $a_1u^2+a_2u=t$. Then,
$$u=\frac{\sqrt{a_2^2+4a_1t}-a_2}{2a_1}=\frac{2t}{\sqrt{a_2^2+4a_1t}+a_2}\geq\frac{2t}{\sqrt{4a_1t}+2a_2}.$$
Hence,
\begin{align*}
&\int_0^\infty\P(|X|>t+a_3)dt=\int_0^\infty\P(|X|>a_1u^2+a_2u+a_3)dt\leq\int_0^\infty a_4\exp(-u^2)dt\\
&\qquad\leq a_4\int_0^\infty\exp\left\{-\frac{4t^2}{(\sqrt{4a_1t}+2a_2)^2}\right\}\leq a_4\int_0^\infty\exp\left(-\frac{t^2}{2a_1t+2a_2^2}\right)\\
&\qquad=a_4\int_0^{a_2^2/a_1}\exp\left(-\frac{t^2}{2a_1t+2a_2^2}\right)+a_4\int_{a_2^2/a_1}^\infty\exp\left(-\frac{t^2}{2a_1t+2a_2^2}\right)\\
&\qquad\leq a_4\int_0^{a_2^2/a_1}\exp\left(-\frac{t^2}{4a_2^2}\right)+a_4\int_{a_2^2/a_1}^\infty\exp\left(-\frac{t}{4a_1}\right).
\end{align*}
Notice that
$$\int_0^{a_2^2/a_1}\exp\left(-\frac{t^2}{4a_2^2}\right)\leq2\sqrt\pi a_2\frac{1}{2}\int_{-\infty}^\infty\frac{1}{\sqrt{2\pi}\sqrt2a_2}\exp\left(-\frac{t^2}{4a_2^2}\right)=\sqrt\pi a_2$$
and
$$\int_{a_2^2/a_1}^\infty\exp\left(-\frac{t}{4a_1}\right)=4a_1\exp\left(-\frac{a_2^2}{4a_1^2}\right)\leq4a_1.$$
Therefore,
$$\E(|X|)\leq a_3+a_4(4a_1+\sqrt\pi a_2).$$
\end{proof}

We establish some empirical process results in Lemmas \ref{lemma:convex}, \ref{lemma:l1-l2}, and \ref{lemma:emp}.

\begin{lemma}\label{lemma:convex}
Let $\bOmega$ be a $d$ by $d$ matrix, $k_0>0$, and $s\geq1$. Then, for any $\delta\in(0,1)$, there exists some $k\asymp s$ such that
\begin{align}
\sup_{\bDelta_1,\bDelta_2\in\mathcal C(s,k_0)\cap\S^{d-1}}|\bDelta_1^T\bOmega\bDelta_2|&\leq(1-\delta)^{-1}\sup_{\bDelta_1,\bDelta_2\in\Theta_k}|\bDelta_1^T\bOmega\bDelta_2|,\label{bound:Thetak}
\end{align}
where $\S^{d-1}:=\{\bDelta\in\R^d:\|\bDelta\|_2=1\}$, $\mathcal C(s ,k_0):=\{\bDelta\in\R^d:\exists S\subset\{1,\dots,d\},\;|S|\leq s ,\;\text{s.t.}\;\|\bDelta_{S^c}\|_1\leq k_0\|\bDelta_S\|_1\}$, and $\Theta_k:=\{\bDelta\in\R^d:\|\bDelta\|_0\leq k,\|\bDelta\|_2=1\}$.
\end{lemma}
Lemma \ref{lemma:convex} holds by repeating the proof of Lemma 16 of \cite{bradic2019sparsity}.

\begin{lemma}\label{lemma:l1-l2}
Let $\bOmega$ be a $d$ by $d$ matrix. Then, for any $\bDelta\in\R^d$ and $k_0>0$,
\begin{align}
|\bDelta^T\bOmega\bDelta|\leq\inf_{s\geq1}\left\{\left(\frac{6\|\bDelta\|_1^2}{k_0^2s}+4\|\bDelta\|_2^2\right)\sup_{\bdelta\in\mathcal C(s,k_0)\cap\S^{d-1}}|\bdelta^T\bOmega\bdelta|\right\}.\label{bound:DeltaRd}
\end{align}
\end{lemma}

\begin{proof}[Proof of Lemma \ref{lemma:l1-l2}]
For any $s\leq d$, choose some $S\subset\{1,2,\dots,d\}$ satisfying $|S|=s$. For any $\bDelta\in\R^d$, define $\bdeltatil=(\bdeltatil_S^T,\bdeltatil_{S^c}^T)^T\in\R^d$ as
\begin{align}\label{def:deltatil}
\bdeltatil_{S}=(k_0s)^{-1}\|\bDelta\|_1(1,\dots,1)^T\in\R^{s},\quad\bdeltatil_{S^c}=\bDelta_{S^c}\in\R^{d-s}.
\end{align}
Then, $\|\bdeltatil_{S}\|_1=\|\bDelta\|_1/k_0$ and $\|\bdeltatil_{S}\|_2=\|\bDelta\|_1/(k_0\sqrt s)$. Hence,
$$\|\bdeltatil_{S^c}\|_1=\|\bDelta_{S^c}\|_1\leq\|\bDelta\|_1=k_0\|\bdeltatil_{S}\|_1.$$
That is, $\bdeltatil\in\mathcal C(s,k_0)$. Also, note that $\bDelta-\bdeltatil\in\mathcal C(s,k_0)$ since $(\bDelta-\bdeltatil)_{S^c}=\bzero$. Hence,
\begin{align*}
&|\bDelta^T\bOmega\bDelta|\leq|\bdeltatil^T\bOmega\bdeltatil|+|(\bDelta-\bdeltatil)^T\bOmega(\bDelta-\bdeltatil)|+2|\bdeltatil^T\bOmega(\bDelta-\bdeltatil)|\\
&\qquad\leq(\|\bdeltatil\|_2^2+\|\bDelta-\bdeltatil\|_2^2+2\|\bdeltatil\|_2\|\bDelta-\bdeltatil\|_2)\sup_{\bdelta\in\mathcal C(s,k_0)\cap\S^{d-1}}\bdelta^T\bOmega\bdelta\\
&\qquad\leq\left(\frac{6\|\bDelta\|_1^2}{k_0^2s}+4\|\bDelta\|_2^2\right)\sup_{\bdelta\in\mathcal C(s,k_0)\cap\S^{d-1}}\sum_{i\in\mathcal I_2}(\bU_i^T\bdelta)^2v_i,
\end{align*}
since
\begin{align*}
&\|\bdeltatil\|_2^2+\|\bDelta-\bdeltatil\|_2^2+2\|\bdeltatil\|_2\|\bDelta-\bdeltatil\|_2\leq2\|\bdeltatil\|_2^2+2\|\bDelta-\bdeltatil\|_2^2\\
&\qquad=2\|\bdeltatil_{S}\|_2^2+2\|\bdeltatil_{S^c}\|_2^2+2\|\bDelta_S-\bdeltatil_{S}\|_2^2\leq2\|\bdeltatil_{S}\|_2^2+2\|\bDelta_{S^c}\|_2^2+4\|\bDelta_S\|_2^2+4\|\bdeltatil_{S}\|_2^2\\
&\qquad\leq6\|\bdeltatil_{S}\|_2^2+4\|\bDelta\|_2^2=6\|\bDelta\|_1^2/(k_0^2s)+4\|\bDelta\|_2^2.
\end{align*}

When $s>d$, since $\bDelta\in\R^d=\mathcal C(d,k_0)=\mathcal C(s,k_0)$, we also have
\begin{align*}
|\bDelta^T\bOmega\bDelta|\leq\|\bDelta\|_2^2\sup_{\bdelta\in\mathcal C(s,k_0)\cap\S^{d-1}}|\bdelta^T\bOmega\bdelta|.
\end{align*}
To sum up, we conclude that \eqref{bound:DeltaRd} holds.
\end{proof}

\begin{lemma}\label{lemma:emp}
Let Assumptions \ref{cond:subG} and \ref{cond:tail} hold. Let $j\in\{1,2\}$.

\par\smallskip
(a) For any $s=o(M/\log{d})$, we have
\begin{align*}
\sup_{\bDelta\in\R^d\setminus\{\bzero\}}\frac{M^{-1}\sum_{i\in\mathcal I_j}(\bS_i^T\bDelta)^2}{\|\bDelta\|_1^2/s+\|\bDelta\|_2^2}=O_p(1).
\end{align*}

\par\smallskip
(b) Let $s_0:=\lceil N\gamma_N/\{\log{d}\log{N}\}\rceil$, then
\begin{align}\label{bound:Deltahat}
\sup_{\bDelta\in\R^d\setminus\{\bzero\}}\frac{(M\gamma_N)^{-1}\sum_{i\in\mathcal I_j}\Gamma_i(\bS_i^T\bDelta)^2}{\|\bDelta\|_1^2/s_0+\|\bDelta\|_2^2}\leq c\left\{1+\sqrt\frac{t}{N\gamma_N}+\frac{t\log{N}}{N\gamma_N}\right\}
\end{align}
with probability at least $1-3\exp(-t)$ and some constant $c>0$.

\par\smallskip
(c) Let $f(\cdot):\mathcal X\mapsto\R$ and $|f(\bx)|\leq c$ for all $\bx\in\mathcal X$ with some constant $c>0$. Then, for any $s\geq1$, as $N,d\to\infty$,
\begin{align*}
&\sup_{\bDelta_1,\bDelta_2\in\R^d\setminus\{\bzero\}}\frac{(M\gamma_N)^{-1}\left|\bDelta_1^T\left[\sum_{i\in\mathcal I_1}\Gamma_if(\bX_i)\bS_i\bS_i^T-\E\{\Gamma f(\bX)\bS\bS^T\}\right]\bDelta_2\right|}{\|\bDelta_1\|_2\|\bDelta_2\|_2\{\|\bDelta_1\|_1^2/(s\|\bDelta_1\|_2^2)+\|\bDelta_2\|_1^2/(s\|\bDelta_2\|_2^2)+1\}}\\
&\qquad=O_p\left(\sqrt\frac{s\log{d}}{N\gamma_N}+\frac{s\log{d}\log{N}}{N\gamma_N}\right).
\end{align*}
(d) Let $r_0\geq2$ be a constant. Then, for any $s\geq1$,
\begin{align*}
\sup_{\bDelta\in\R^d\setminus\{\bzero\}}\frac{(M\gamma_N)^{-1}\sum_{i\in\mathcal I_j}\Gamma_i|\bS_i^T\bDelta|^{r_0}}{\|\bDelta\|_1^{r_0}/s^{\frac{r_0}{2}}+\|\bDelta\|_2^{r_0}}\leq c\left(1+\sqrt\frac{s\log{d}}{M\gamma_N}+\frac{(s\log{M}\log{d})^{r_0/2}}{M\gamma_N}\right)
\end{align*}
with probability at least $1-2\exp(-t)$ and some constant $c>0$. Hence, as $N,d\to\infty$,
\begin{align*}
\sup_{\bDelta\in\R^d\setminus\{\bzero\}}\frac{(M\gamma_N)^{-1}\sum_{i\in\mathcal I_j}\Gamma_i|\bS_i^T\bDelta|^{r_0}}{\|\bDelta\|_1^{r_0}/s^{\frac{r_0}{2}}+\|\bDelta\|_2^{r_0}}=O_p\left(1+\sqrt\frac{s\log{d}}{M\gamma_N}+\frac{(s\log{M}\log{d})^{r_0/2}}{M\gamma_N}\right).
\end{align*}

\par\smallskip
(e) Let $(\epsilon_i)_{i\in\mathcal I_j}\in\R$ be i.i.d. random variables satisfying $\E(|\epsilon_i|\mid\bX_i,\Gamma_i=1)=\E(|\epsilon_i|\mid\bX_i)$ and $\|\epsilon\|_{\psi_\alpha}\leq\sigma_{\epsilon}$ with some constants $\alpha,\epsilon>0$. Then, for any $s\geq1$,
\begin{align*}
&(M\gamma_N)^{-1}\sum_{i\in\mathcal I_j}\Gamma_i|\epsilon_i|(\bS_i^T\bDelta)^2\\
&\;\;\leq c\|\bDelta\|_2^2+c\left[\sqrt\frac{t+s\log{d}}{N\gamma_N}+\frac{\log^{1+1/\alpha}{M}(t+s\log{d})^{1+1/\alpha}}{N\gamma_N}\right]\left(\frac{\|\bDelta\|_1^2}{s}+\|\bDelta\|_2^2\right),
\end{align*}
uniformly for all $\bDelta\in\R^d$ with probability at least $1-3\exp(-t)$ and some constant $c>0$.

(f) Let $(\epsilon_i)_{i\in\mathcal I_j}\in\R$ be i.i.d. random variables satisfying $\E\{\epsilon_i^2\mid(\Gamma_i,\bX_i)\}<C$. Then,
\begin{align*}
\sup_{\bDelta\in\R^d\setminus\{\bzero\}}\frac{(M\gamma_N)^{-1}\sum_{i\in\mathcal I_j}\Gamma_i\epsilon_i^2(\bS_i^T\bDelta)^2}{\|\bDelta\|_1^2/s_0+\|\bDelta\|_2^2}\leq c\left\{1+\sqrt\frac{t}{N\gamma_N}+\frac{t\log{N}}{N\gamma_N}\right\}
\end{align*}
with probability at least $1-3\exp(-t)-t^{-1}$ and some constant $c>0$, where $s_0$ is defined as in part (b).
\end{lemma}

\begin{proof}[Proof of Lemma \ref{lemma:emp}]
(a) By Theorem 15 of \cite{rudelson2012reconstruction}, for any $s=o(M/\log{d})$, we have
\begin{align*}
\sup_{\bDelta\in\mathcal C(s,3)\cap\S^{d-1}}\frac{\sum_{i\in\mathcal I_j}(\bS_i^T\bDelta)^2}{M\E(\bS^T\bDelta)^2}=O_p(1),
\end{align*}
where $\sup_{\bDelta\in\mathcal C(s,3)\cap\S^{d-1}}\E(\bS^T\bDelta)^2=O(1)$ under Assumption \ref{cond:subG}. Therefore,
\begin{align*}\sup_{\bDelta\in\mathcal C(s,3)\cap\S^{d-1}}M^{-1}\sum_{i\in\mathcal I_j}(\bS_i^T\bDelta)^2=O_p(1).
\end{align*}
Together with Lemma \ref{lemma:l1-l2},
\begin{align*}
\sup_{\bDelta\in\R^d\setminus\{\bzero\}}\frac{M^{-1}\sum_{i\in\mathcal I_j}(\bS_i^T\bDelta)^2}{\|\bDelta\|_1^2/s+\|\bDelta\|_2^2}=O_p(1).
\end{align*}

(b) Fix $\delta\in(0,1)$ and let $\bOmega=M^{-1}\sum_{i\in\mathcal I_j}\Gamma_i\bS_i\bS_i^T-\E(\Gamma\bS\bS^T)$. For any $s\geq1$, by Lemma \ref{lemma:convex}, \eqref{bound:Thetak} holds with some $k\asymp s$. Since $\bS$ is sub-Gaussian, under Assumption \ref{cond:tail}, we have
$$\sup_{\bDelta\in\Theta_k}\Var\left\{(\Gamma\bS^T\bDelta)^2\right\}\leq\sup_{\bDelta\in\Theta_k}\E\left\{\Gamma(\bS^T\bDelta)^4\right\}=\sup_{\bDelta\in\Theta_k}\E\left\{\gamma_N(\bX)(\bS^T\bDelta)^4\right\}=O(\gamma_N),$$
where $\Theta_k=\{\bDelta\in\R^d:\|\bDelta\|_0\leq k,\|\bDelta\|_2=1\}$. By Theorem 4.3 of \cite{kuchibhotla2022moving},
\begin{align*}
\P_{\mathcal D_N'}\left(\sup_{\bDelta\in\Theta_k}\left|\bDelta^T\bOmega\bDelta\right|>c\left[\sqrt\frac{\gamma_N\{t+k\log{d}\}}{M}+\frac{\log{N}\{t+k\log{d}\}}{M}\right]\right)\geq3\exp(-t),
\end{align*}
with some constant $c>0$. Together with \eqref{bound:Thetak}, Lemma \ref{lemma:l1-l2} and note that $k\asymp s$ and $M\asymp N$, we have for any $s\geq1$,
\begin{align}
\sup_{\bDelta\in\R^d}\frac{\left|\bDelta^T\bOmega\bDelta\right|}{\|\bDelta\|_1^2/s+\|\bDelta\|_2^2}\leq c\left[\sqrt\frac{\gamma_N(t+s\log{d})}{N}+\frac{\log{N}(t+s\log{d})}{N}\right]\label{bound:Csk0-Omega}
\end{align}
with probability at least $1-3\exp(-t)$ and some constant $c>0$. Note that
\begin{align*}
\gamma_N^{-1}\E\{\Gamma(\bS^T\bDelta)^2\}\leq\gamma_N^{-1}\|\gamma_N(\bX)\|_{\P,q}\|\bS^T\bDelta\|_{\P,2r}^2=O(1).
\end{align*}
Hence, we conclude that uniformly for all $\bDelta\in\R^d$,
\begin{align*}
\frac{\sum_{i\in\mathcal I_j}\Gamma_i(\bS_i^T\bDelta)^2}{M\gamma_N}\leq c\left[1+\sqrt\frac{t+s\log{d}}{N\gamma_N}+\frac{\log{N}(t+s\log{d})}{N\gamma_N}\right]\left(\frac{\|\bDelta\|_1^2}{s}+\|\bDelta\|_2^2\right)
\end{align*}
with probability at least $1-3\exp(-t)$ and some constant $c>0$. Choose $s=s_0:=\lceil N\gamma_N/\{\log{N}\log{d}\}\rceil$, then it follows that \eqref{bound:Deltahat} holds with probability at least $1-3\exp(-t)$ and some constant $c>0$.

(c) Fix $\delta\in(0,1)$ and let $\bOmega=M^{-1}\sum_{i\in\mathcal I_j}\Gamma_if(\bX_i)\bS_i\bS_i^T-\E\{\Gamma f(\bX)\bS\bS^T\}$. For any $s\geq1$, by Lemma \ref{lemma:convex}, \eqref{bound:Thetak} holds with some $k\asymp s$. Since $\bX$ is sub-Gaussian and the function $f(\cdot)$ is bounded, under Assumption \ref{cond:tail}, we also have
\begin{align*}
&\sup_{\bDelta\in\Theta_k}\Var\left[\{\Gamma f(\bX)\bS^T\bDelta)^2\}\right]\leq\sup_{\bDelta\in\Theta_k}\E\left\{\Gamma f(\bX)(\bS^T\bDelta)^4\right\}\\
&\qquad=\sup_{\bDelta\in\Theta_k}\E\left\{\gamma_N(\bX)f^2(\bX)(\bS^T\bDelta)^4\right\}=O(\gamma_N).
\end{align*}
By Theorem 4.3 of \cite{kuchibhotla2022moving},
\begin{align*}
\P_{\mathcal D_N'}\left(\sup_{\bDelta\in\Theta_k}\left|\bDelta^T\bOmega\bDelta\right|>c\left[\sqrt\frac{\gamma_N\{t+k\log{d}\}}{M}+\frac{\log{N}\{t+k\log{d}\}}{M}\right]\right)\geq3\exp(-t),
\end{align*}
with some constant $c>0$. Together with \eqref{bound:Thetak}, Lemma \ref{lemma:l1-l2} and note that $k\asymp s$ and $M\asymp N$, we have for any $s\geq1$,
\begin{align*}
\sup_{\bDelta\in\R^d}\frac{\left|\bDelta^T\bOmega\bDelta\right|}{\|\bDelta\|_1^2/s+\|\bDelta\|_2^2}=O_p\left(\sqrt\frac{\gamma_N(t+s\log{d})}{N}+\frac{\log{N}(t+s\log{d})}{N}\right).
\end{align*}
For any $\bDelta_1,\bDelta_2\in\R^d\setminus\{\bzero\}$, note that
\begin{align*}
\left\|\frac{\bDelta_1}{\|\bDelta_1\|_2}+\frac{\bDelta_2}{\|\bDelta_2\|_2}\right\|_1&\leq\frac{\|\bDelta_1\|_1}{\|\bDelta_1\|_2}+\frac{\|\bDelta_2\|_1}{\|\bDelta_2\|_2},\\
\left\|\frac{\bDelta_1}{\|\bDelta_1\|_2}-\frac{\bDelta_2}{\|\bDelta_2\|_2}\right\|_1&\leq\frac{\|\bDelta_1\|_1}{\|\bDelta_1\|_2}+\frac{\|\bDelta_2\|_1}{\|\bDelta_2\|_2},\\
\left\|\frac{\bDelta_1}{\|\bDelta_1\|_2}+\frac{\bDelta_2}{\|\bDelta_2\|_2}\right\|_2&\leq\frac{\|\bDelta_1\|_2}{\|\bDelta_1\|_2}+\frac{\|\bDelta_2\|_2}{\|\bDelta_2\|_2}=2,\\
\left\|\frac{\bDelta_1}{\|\bDelta_1\|_2}-\frac{\bDelta_2}{\|\bDelta_2\|_2}\right\|_2&\leq\frac{\|\bDelta_1\|_2}{\|\bDelta_1\|_2}+\frac{\|\bDelta_2\|_2}{\|\bDelta_2\|_2}=2.
\end{align*}
Hence, uniformly for all $\bDelta_1,\bDelta_2\in\R^d$,
\begin{align*}
&\gamma_N^{-1}|\bDelta_1^T\bOmega\bDelta_2|=\gamma_N^{-1}\|\bDelta_1\|_2\|\bDelta_2\|_2\left|\frac{\bDelta_1^T}{\|\bDelta_1\|_2}\bOmega\frac{\bDelta_2}{\|\bDelta_2\|_2}\right|\\
&\quad\leq(2\gamma_N)^{-1}\|\bDelta_1\|_2\|\bDelta_2\|_2\left|\left(\frac{\bDelta_1}{\|\bDelta_1\|_2}+\frac{\bDelta_2}{\|\bDelta_2\|_2}\right)^T\bOmega\left(\frac{\bDelta_1}{\|\bDelta_1\|_2}+\frac{\bDelta_2}{\|\bDelta_2\|_2}\right)\right|\\
&\quad\quad+(2\gamma_N)^{-1}\|\bDelta_1\|_2\|\bDelta_2\|_2\left|\left(\frac{\bDelta_1}{\|\bDelta_1\|_2}-\frac{\bDelta_2}{\|\bDelta_2\|_2}\right)^T\bOmega\left(\frac{\bDelta_1}{\|\bDelta_1\|_2}-\frac{\bDelta_2}{\|\bDelta_2\|_2}\right)\right|\\
&\quad=O_p\left(\left(\sqrt\frac{s\log{d}}{N\gamma_N}+\frac{s\log{d}\log{N}}{N\gamma_N}\right)\|\bDelta_1\|_2\|\bDelta_2\|_2\left(\frac{\|\bDelta_1\|_1^2}{s\|\bDelta_1\|_2^2}+\frac{\|\bDelta_2\|_1^2}{s\|\bDelta_2\|_2^2}+1\right)\right).
\end{align*}

(d) For any $r_0\geq2$, we have $\sup_{\bDelta\in\S^{d-1}}\|\Gamma|\bS^T\bDelta|^{r_0}\|_{\psi_{2/r_0}}\leq\sigma^{r_0}$ using part (d) of Lemma \ref{lemma:subG}. By the H\"older inequality and part (c) of Lemma \ref{lemma:subG},
\begin{align}
&\sup_{\bDelta\in\S^{d-1}}\E\{\Gamma|\bS^T\bDelta|^{r_0}\}=\sup_{\bDelta\in\S^{d-1}}\E\{\gamma_N(\bX)|\bS^T\bDelta|^{r_0}\}\nonumber\\
&\qquad\leq\|\gamma_N(\bX)\|_{\P,q}\sup_{\bDelta\in\S^{d-1}}\|\bS^T\bDelta\|_{\P,r_0r}^{r_0}=O(\gamma_N),\label{eq:E4th}
\end{align}
with $r>0$ satisfying $1/r+1/q=1$. Let $W=W(\bDelta)=\Gamma|\bS^T\bDelta|^{r_0}-\E\{\Gamma|\bS^T\bDelta|^{r_0}\}$. By part (d) of Lemma \ref{lemma:subG} and note that $\gamma_N\leq1$, we have $\sup_{\bDelta\in\S^{d-1}}\|W\|_{\psi_{2/r_0}}=O(1)$. Additionally, we have
$$\sup_{\bDelta\in\S^{d-1}}\E(W^2)\leq\sup_{\bDelta\in\S^{d-1}}\E\left\{\Gamma|\bS^T\bDelta|^{2r_0}\right\}\leq\|\gamma(\bX)\|_{\P,q}\sup_{\bDelta\in\S^{d-1}}\|\bS^T\bDelta\|_{\P,2r_0r}^{2r_0}=O(\gamma_N).$$
By Theorem 3.2 and Proposition A.3 of \cite{kuchibhotla2022moving}, for any $\bDelta\in\S^{d-1}$ and $u>0$,
\begin{align}
\P_{\mathcal D_N'}\biggr(&\left|(M\gamma_N)^{-1}\sum_{i\in\mathcal I_j}\Gamma_i|\bS_i^T\bDelta|^{r_0}-\gamma_N^{-1}\E\{\Gamma|\bS^T\bDelta|^{r_0}\}\right|\nonumber\\
&\qquad>c\left[\sqrt\frac{u}{M\gamma_N}+\frac{\{\log{M}\}^{r_0/2}u^{r_0/2}}{M\gamma_N}\right]\biggr)\leq2\exp(-u),\label{eq:En4th}
\end{align}
with some constant $c>0$ independent of $\bDelta$. For any $s\geq1$, repeating Step 2 of proof of Lemma 17 in \cite{bradic2019sparsity} (with $\|\cdot\|_4$ replaced by $\|\cdot\|_{r_0}$), we obtain
\begin{align}\label{eq:bradic17}
\sup_{\bDelta\in\mathcal C(s,3)\cap\S^{d-1}}\left\{\sum_{i\in\mathcal I_j}\Gamma_i(\bS_i^T\bDelta)^{r_0}\right\}^{1/r_0}\leq(1-\delta)^{-1}\max_{\bDelta\in\mathcal T}\left\{\sum_{i\in\mathcal I_j}\Gamma_i(\bS_i^T\bDelta)^{r_0}\right\}^{1/r_0},
\end{align}
with some constant $\delta\in(0,1)$ and a set $\mathcal T\subset\S^{d-1}$ satisfying
$$|\mathcal T|\leq(c\delta^{-1}d)^{cs},$$
where $c>0$ is a constant. By the union bound and \eqref{eq:En4th}, choosing $u=cs\log(c\delta^{-1}d)+t$ with $t>0$, we have
\begin{align}
&\P_{\mathcal D_N'}\biggr(\max_{\bDelta\in\mathcal T}\left|(M\gamma_N)^{-1}\sum_{i\in\mathcal I_j}\Gamma_i|\bS_i^T\bDelta|^{r_0}-\gamma_N^{-1}\E\{\Gamma|\bS^T\bDelta|^{r_0}\}\right|\nonumber\\
&\qquad\qquad>c'\left[\sqrt\frac{s\log{d}+t}{M\gamma_N}+\frac{\{\log{M}\}^{r_0/2}\{s\log{d}+t\}^{r_0/2}}{M\gamma_N}\right]\biggr)\nonumber\\
&\qquad\leq2(c\delta^{-1}d)^{cs}\exp(-u)=2\exp(-t),\label{bound:net}
\end{align}
with some constant $c'>0$. Together with \eqref{eq:E4th} and \eqref{eq:bradic17}, we have
\begin{align}
&\sup_{\bDelta\in\mathcal C(s,3)\cap\S^{d-1}}(M\gamma_N)^{-1}\sum_{i\in\mathcal I_j}\Gamma_i|\bS_i^T\bDelta|^{r_0}\leq c\left(1+\sqrt\frac{s\log{d}}{N\gamma_N}+\frac{\{s\log{N}\log{d}\}^{r_0/2}}{N\gamma_N}\right),\label{rate:sup-r0}
\end{align}
with probability at least $1-2\exp(-t)$ and some constant $c>0$. Now, for any $\bDelta\in\R^d$, define $\bdeltatil$ as in \eqref{def:deltatil} with $k_0=3$. Then, we also have $\|\bdeltatil_{S}\|_1=\|\bDelta\|_1/3$, $\|\bdeltatil_{S}\|_2=\|\bDelta\|_1/(3\sqrt s)$, and $\bdeltatil,\bDelta-\bdeltatil\in\mathcal C(s,3)$. Therefore,
\begin{align}
&\sum_{i\in\mathcal I_j}\Gamma_i|\bS_i^T\bDelta|^{r_0}\leq2^{r_0-1}\left(\sum_{i\in\mathcal I_j}\Gamma_i|\bS_i^T\bdeltatil|^{r_0}+\sum_{i\in\mathcal I_j}\Gamma_i|\bS_i^T(\bDelta-\bdeltatil)|^{r_0}\right)\nonumber\\
&\qquad\leq2^{r_0-1}\left(\|\bdeltatil\|_2^{r_0}+\|\bDelta-\bdeltatil\|_2^{r_0}\right)\sup_{\bdelta\in\mathcal C(s,3)\cap\S^{d-1}}\sum_{i\in\mathcal I_j}\Gamma_i|\bS_i^T\bdelta|^{r_0}.\label{bound:Delta-r0}
\end{align}
Note that
\begin{align*}
&\|\bdeltatil\|_2^{r_0}+\|\bDelta-\bdeltatil\|_2^{r_0}=\|\bdeltatil\|_2^{r_0}+(\|\bDelta_S-\bdeltatil_S\|_2)^{r_0}\\
&\qquad\leq(\|\bdeltatil_S\|_2+\|\bdeltatil_{S^c}\|_2)^{r_0}+(\|\bDelta_S\|_2+\|\bdeltatil_S\|_2)^{r_0}\\
&\qquad=\left\{\|\bDelta\|_1/(3\sqrt s)+\|\bDelta_{S^c}\|_2\right\}^{r_0}+\left\{\|\bDelta_S\|_2+\|\bDelta\|_1/(3\sqrt s)\right\}^{r_0}\\
&\qquad\leq2^{r_0-1}\left\{\|\bDelta\|_1^{r_0}/(3\sqrt s)^{r_0}+\|\bDelta_{S^c}\|_2^{r_0}+\|\bDelta_S\|_2^{r_0}+\|\bDelta\|_1^{r_0}/(3\sqrt s)^{r_0}\right\}\\
&\qquad\leq2^{r_0}\left\{\|\bDelta\|_1^{r_0}/(3\sqrt s)^{r_0}+\|\bDelta\|_2^{r_0}\right\}.
\end{align*}
Together with \eqref{rate:sup-r0} and \eqref{bound:Delta-r0}, we have
\begin{align*}
\sup_{\bDelta\in\R^d\setminus\{\bzero\}}\frac{(M\gamma_N)^{-1}\sum_{i\in\mathcal I_j}\Gamma_i|\bS_i^T\bDelta|^{r_0}}{\|\bDelta\|_1^{r_0}/s^{\frac{r_0}{2}}+\|\bDelta\|_2^{r_0}}\leq c\left(1+\sqrt\frac{s\log{d}}{M\gamma_N}+\frac{(s\log{M}\log{d})^{r_0/2}}{M\gamma_N}\right),
\end{align*}
with probability at least $1-2\exp(-t)$ and some constant $c>0$.

(e) By the tower rule, for any $\bDelta\in\R^d$,
\begin{align}\label{bound:outsampleE}
&\E\{\Gamma|\epsilon|(\bS^T\bDelta)^2\}=\E[\E\{\Gamma|\epsilon|(\bS^T\bDelta)^2\mid\bX\}]=\E\{\gamma_N(\bX)|\epsilon|(\bS^T\bDelta)^2\}\leq c\gamma_N\|\bDelta\|_2^2,
\end{align}
with some constant $c>0$ under Assumption \ref{cond:tail}, as $\bS$ is sub-Gaussian and $\epsilon$ has a bounded $\psi_\alpha$-norm. Let $\bOmega=M^{-1}\sum_{i\in\mathcal I_j}\Gamma_i|\epsilon_i|\bS_i\bS_i^T-\E(\Gamma|\epsilon|\bS\bS^T)$. For any $\delta\in(0,1)$ and $s\geq1$, by Lemma \ref{lemma:convex}, we have \eqref{bound:Thetak} holds with $k_0=3$ and some $k\asymp s$. Similar to \eqref{bound:outsampleE}, we also have
$$\sup_{\bDelta\in\Theta_k}\Var\left\{\Gamma|\epsilon|(\bS^T\bDelta)^2\right\}\leq\sup_{\bDelta\in\Theta_k}\E\left\{\Gamma\epsilon^2(\bS^T\bDelta)^4\right\}=\sup_{\bDelta\in\Theta_k}\E\left\{\gamma_N(\bX)\epsilon^2(\bS^T\bDelta)^4\right\}\leq c\gamma_N,$$
with some constant $c>0$. Note that $|\epsilon|=\max(\epsilon,-\epsilon)$. By part (e) of Markov's inequality, $\||\epsilon|\|_{\psi_\alpha}\leq\sigma_\epsilon\{\log(2)+2\}^{1/\alpha}$. Together with the definition of $\psi_\alpha$-norms, we further have $\|\sqrt{|\epsilon|}\|_{\psi_{2\alpha}}\leq\sqrt{\sigma_\epsilon}\{\log(2)+2\}^{1/(2\alpha)}$. By part (d) of Lemma \ref{lemma:subG}, $\|\sqrt{|\epsilon|}\bS^T\bv\|_{\psi_{\frac{2\alpha}{1+\alpha}}}\leq \sigma\sqrt{\sigma_\epsilon}\{\log(2)+2\}^{1/(2\alpha)}$ for any $\|\bv\|_2\leq1$. By Theorem 4.3 of \cite{kuchibhotla2022moving} and note that $k\asymp s$ and $M\asymp N$, we have
\begin{align}
\sup_{\bDelta\in\Theta_k}\left|\bDelta^T\bOmega\bDelta\right|\leq c\left[\sqrt\frac{\gamma_N(t+s\log{d})}{N}+\frac{\log^{1+1/\alpha}{M}(t+s\log{d})^{1+1/\alpha}}{N}\right]\label{bound:Omega'}
\end{align}
with probability at least $1-3\exp(-t)$ and some constant $c>0$. Together with \eqref{bound:Thetak}, \eqref{bound:outsampleE}, and Lemma \ref{lemma:l1-l2},
\begin{align*}
&(M\gamma_N)^{-1}\sum_{i\in\mathcal I_j}\Gamma_i|\epsilon_i|(\bS_i^T\bDelta)^2\\
&\;\;\leq c\|\bDelta\|_2^2+c\left[\sqrt\frac{t+s\log{d}}{N\gamma_N}+\frac{\log^{1+1/\alpha}{M}(t+s\log{d})^{1+1/\alpha}}{N\gamma_N}\right]\left(\frac{\|\bDelta\|_1^2}{s}+\|\bDelta\|_2^2\right),
\end{align*}
uniformly for all $\bDelta\in\R^d$ with probability at least $1-3\exp(-t)$ and some constant $c>0$.

(f) For any $\bDelta\in\R^d$,
\begin{align*}
&\E_{\mathcal D_N'}\left\{M^{-1}\sum_{i\in\mathcal I_j}\Gamma_i\epsilon_i^2(\bS_i^T\bDelta)^2\mid(\Gamma_i,\bX_i)_{i\in\mathcal I_j}\right\}\\
&\qquad=M^{-1}\sum_{i\in\mathcal I_j}\Gamma_i(\bS_i^T\bDelta)^2\E_{\mathcal D_N'}\left\{\epsilon_i^2\mid(\Gamma_i,\bX_i)_{i\in\mathcal I_j}\right\}\leq CM^{-1}\sum_{i\in\mathcal I_j}\Gamma_i(\bS_i^T\bDelta)^2.
\end{align*}
By Markov's inequality,
\begin{align}
\P_{\mathcal D_N'}\biggr(&M^{-1}\sum_{i\in\mathcal I_j}\Gamma_i\epsilon_i^2(\bS_i^T\bDelta)^2\geq tCM^{-1}\sum_{i\in\mathcal I_j}\Gamma_i(\bS_i^T\bDelta)^2\mid(\Gamma_i,\bX_i)_{i\in\mathcal I_j}\biggr)\leq t^{-1}.\label{bound:prob_eps}
\end{align}
Together with \eqref{bound:Deltahat}, we conclude that
\begin{align*}
\sup_{\bDelta\in\R^d\setminus\{\bzero\}}\frac{(M\gamma_N)^{-1}\sum_{i\in\mathcal I_j}\Gamma_i\epsilon_i^2(\bS_i^T\bDelta)^2}{\|\bDelta\|_1^2/s_0+\|\bDelta\|_2^2}\leq c\left\{1+\sqrt\frac{t}{N\gamma_N}+\frac{t\log{N}}{N\gamma_N}\right\},
\end{align*}
with probability at least $1-3\exp(-t)-t^{-1}$, $s_0=\lceil N\gamma_N/\{\log{d}\log{N}\}\rceil$, and some constant $c>0$.
\end{proof}

\section{Proof of the properties of the nuisance estimator for the DC-BRSS estimator}\label{sec:proof_nuisance_SP}

In this section, we analyze the properties of $\phat_{N}=\phat_{N}^{(1)}$, $\bbetahat_{p}=\bbetahat_{p}^{(1)}$, $\balphatil=\balphatil^{(1)}$, with analogous results applicable to $\phat_{N}^{(2)}$, $\bbetahat_{p}^{(2)}$, and $\balphatil^{(2)}$. We also denote $\pihat(\cdot)=\pihat^{(1)}$, $\mathcal D_N':=\mathcal D_N^{(1)}$, $\mathcal J:=\mathcal I_1$, and $M:=|\mathcal J|=N/2$ throughout.

\subsection{Preliminary analysis of the de-coupled PS estimator}\label{sec:nuisance-PS-SP}

We first study a general restricted strong convexity (RSC) property. For any $v\in[0,1]$, $\bbeta^*,\bDelta\in\R^d $, and $\phi:\R\to(0,\infty)$, we define
$$f(\bDelta,a,v,\bbeta^*,\phi(\cdot)):=(a M)^{-1}\sum_{i\in\mathcal J}\Gamma_i\phi\left(\bS_i^T(\bbeta^*+v\bDelta)\right)(\bS_i^T\bDelta)^2.$$

\begin{lemma}\label{lemma:RSC_gen}
Let Assumptions \ref{cond:subG}, \ref{cond:tail}, and \ref{cond:bound-SP} hold and $\phi(\cdot)$ is a continuous function. Let $\kappa_0>0$ be any fixed number, let $\kappa_1,\kappa_2,C_1,C_2,c_1,c_2>0$ be some constants depending only on $(\sigma,\kappa_0,\kappa_l,v,\phi(\cdot))$. For any (possibly random) $a\in(0,1]$, when $ M\gamma_N>\max\{C_2,C_1\log{M}\log{d}\}$ and $k\leq2$,
\begin{align*}
&\P_{\mathcal D_N'}\left(f(\bDelta,a,v,\bbeta^*,\phi(\cdot))\geq\frac{\gamma_N}{a}\left\{\kappa_1\|\bDelta\|_2^2-\kappa_2\frac{\log{d}}{M\gamma_N}\|\bDelta\|_1^2\right\},\;\;\forall\|\bDelta\|_2\leq\kappa_0\right)\\
&\qquad\geq1-c_1\exp(-c_2 M\gamma_N).
\end{align*}
\end{lemma}

The following lemma demonstrates the properties of $\gammahat_N$.
\begin{lemma}\label{lemma:gammaN-est}
For any $t>0$, define
\begin{align}\mathcal E_\gamma:=\left\{|\gammahat_N-\gamma_N|\leq2\sqrt\frac{t\gamma_N}{M}+\frac{t}{M}\right\}.\label{def:Egamma}
\end{align}
Then,
$$\P_{\mathcal D_N'}\left(\mathcal E_\gamma\right)\leq1-2\exp(-t).$$
On $\mathcal E_\gamma$, when $0<t<0.01M\gamma_N$, we have
\begin{align*}
\left|\frac{\gamma_N}{\gammahat_N}-1\right|\leq2.66\sqrt\frac{t}{M\gamma_N},\quad
0.79\gamma_N\leq\gammahat_N\leq1.21\gamma_N.
\end{align*}
\end{lemma}

The following lemma demonstrates the properties of $\phat_{N}$.
\begin{lemma}\label{lemma:phatN-est}
For any $t>0$, define
\begin{align}
\mathcal E_p:=\left\{|\phat_{N}-p_{N}|\leq\frac{5.32}{\E(T)}\sqrt\frac{t\gamma_N}{M}\right\}.\label{def:Ep}
\end{align}
Then,
$$\P_{\mathcal D_N'}\left(\mathcal E_p\right)\leq1-4\exp(-t).$$
On $\mathcal E_p$, when $0<t<0.01M\gamma_N$, we have
\begin{align*}
\left|\frac{p_{N}}{\phat_{N}}-1\right|\leq12\sqrt\frac{t}{M\gamma_N},\quad
0.46p_{N}\leq\phat_{N}\leq1.54p_{N}.
\end{align*}
\end{lemma}

For any $\bbeta\in\R^d$, $a\in(0,1]$, and $\pitil(\cdot):\mathcal X\mapsto[0,1]$, define the following loss function
\begin{align*}
\elltil_{\bbeta}(\bbeta;a;\pitil):=M^{-1}\sum_{i\in\mathcal J}\left[\left\{1-\frac{\Gamma_i}{\pitil(\bX_i)}\right\}\bS_i^T\bbeta+\frac{\Gamma_i\exp(\bS_i^T\bbeta)}{a\pitil(\bX_i)}\right].
\end{align*}
In addition, for any $\bbeta,\bDelta\in\R^d$, $a\in(0,1]$, and $\pitil(\cdot):\mathcal X\mapsto[0,1]$, further define
\begin{align*}
\delta\elltil_{\bbeta}(\bDelta;a;\bbeta;\pitil)&~:=~\elltil_{\bbeta}(\bbeta+\bDelta;a;\pitil)-\elltil_{\bbeta}(\bbeta;a;\pitil)-\nabla_{\bbeta}\elltil_{\bbeta}(\bbeta;a;\pitil)^T\bDelta.
\end{align*}

In the following, we show the RSC property required for the labeling PS estimation and control the gradient $\|\nabla_{\bbeta}\elltil_{\bbeta}(\bbeta_{p}^*;\phat_{N};\pi^*)\|_\infty$.

\begin{lemma}[The PS model's RSC property]\label{lemma:RSC-SP}
Let Assumption \ref{cond:basic} hold. Then, $p_{N}\asymp\gamma_N$. Further let Assumptions \ref{cond:subG}, \ref{cond:tail}, and \ref{cond:bound-SP} hold. For any constant $\kappa_0>1$ and some $\kappa_1,\kappa_2>0$, define
\begin{align}
\Bcaltil_1:=\left\{\delta\elltil_{\bbeta}(\bDelta;\phat_{N};\bbeta_{p}^*;\pihat)\geq\frac{p_{N}}{\phat_{N}}\left\{\kappa_1\|\bDelta\|_2^2-\kappa_2\frac{\log{d}}{M\gamma_N}\|\bDelta\|_1^2\right\},\quad\forall\|\bDelta\|_2\leq\kappa_0\right\}.\label{def:B1}
\end{align}
Then, with some constants $C_1,C_2>0$, when $ M\gamma_N>\max\{C_2,C_1\log{M}\log{d}\}$,
$$\P_{\mathcal D_N'}(\Bcaltil_1\mid\mathcal E_\pi)\geq1-c_1\exp(-c_2 M\gamma_N),$$
with some constants $c_1,c_2>0$.
\end{lemma}

\begin{remark}[Technical challenges of showing Lemma \ref{lemma:RSC-SP}]\label{remark:RSC}
The proof of Lemma \ref{lemma:RSC-SP} is an analog of showing the RSC property of a generalized linear model (GLM) \citep{negahban2010unified,wainwright2019high}. However, we have additional challenges because of the non-standard loss function and also the decaying PS. By Taylor's theorem, we have
$$\delta\elltil_{\bbeta}(\bDelta;\phat_{N};\bbeta_{p}^*;\pihat)=(M\phat_{N})^{-1}\sum_{i\in\mathcal J}\frac{\Gamma_i}{\pihat(\bX_i)}\exp\{-\bS_i^T(\bbeta_{p}^*+v\bDelta)\}(\bS_i^T\bDelta)^2,$$
with some $v\in(0,1)$. Unlike a generalized linear model (GLM), $\delta\elltil_{\bbeta}(\bDelta;\phat_{N};\bbeta_{p}^*;\pihat)$ is also a function of $\Gamma_i$. Without the presence of $\Gamma_i$ (or when $\Gamma_i\equiv1$), $\delta\elltil_{\bbeta}(\bDelta;\phat_{N};\bbeta_{p}^*;\pihat)$ can be directly lower bounded by, e.g., Proposition 2 of \cite{negahban2010unified} or Theorem 9.36 of \cite{wainwright2019high}, as $\pihat(\bX_i)$ is bounded below under Assumption \ref{cond:NP-est}. However, with the presence of $\Gamma_i$ and the decaying MAR setting mechanism, we need to carefully track on the impact of $\Gamma_i$ since only a few of them are non-zero that $\E(\Gamma)=\gamma_N\to0$ as $N,d\to\infty$. The usual Bernstein inequality with the usage of $\psi_\alpha$-norms is suboptimal here. To construct a tight lower bound for $\delta\elltil_{\bbeta}(\bDelta;\phat_{N};\bbeta_{p}^*;\pihat)$, we utilize concentration inequalities that involve random variables' $\psi_2$-norm and also second moment \citep{kuchibhotla2022moving}. Here, the second moment helps us to capture the decaying value $\gamma_N$; see more details in Section \ref{sec:proof_nuisance_SP}. In addition, the same challenges also arise in the proofs of Lemmas \ref{lemma:gradient-SP}-\ref{lemma:gradient3-SP} below.
\end{remark}

\begin{lemma}[Upper bound for $\|\nabla_{\bbeta}\elltil_{\bbeta}(\bbeta_{p}^*;\phat_{N};\pi^*)\|_\infty$]\label{lemma:gradient-SP}
Let Assumptions \ref{cond:basic}, \ref{cond:NP-est}, and \ref{cond:bound-SP} hold. For any $0<t<M\gamma_N/\{100+\log{M}\log{d}\}$, define
\begin{align}
\Bcaltil_2:=\left\{\|\nabla_{\bbeta}\elltil_{\bbeta}(\bbeta_{p}^*;\phat_{N};\pi^*)\|_\infty\leq\kappa_3\sqrt\frac{t+\log{d}}{M\gamma_N}\right\},\label{def:B2}
\end{align}
with some constant $\kappa_3>0$ and $\gammahat_N$ is defined as \eqref{def:betahat}. Then, when $M\gamma_N>C_1\log{M}\log{d}$,
\begin{align*}
\P_{\mathcal D_N'}(\Bcaltil_2)\geq1-10\exp(-t).
\end{align*}
\end{lemma}

Define
\begin{align}
\br_\pi:=&\nabla_{\bbeta}\elltil_{\bbeta}(\bbeta_{p}^*;\phat_{N};\pihat)-\nabla_{\bbeta}\elltil_{\bbeta}(\bbeta_{p}^*;\phat_{N};\pi^*)\nonumber\\
=&-M^{-1}\sum_{i\in\mathcal J}\left\{\frac{1}{\pihat(\bX_i)}-\frac{1}{\pi^*(\bX_i)}\right\}\Gamma_i\left\{1+\phat_{N}^{-1}\exp(-\bS_i^T\bbeta_{p}^*)\right\}\bS_i.\label{def:rpi}
\end{align}

The following lemma controls the treatment PS's estimation error's effect on the labeling PS's estimation.

\begin{lemma}\label{lemma:rpi}
Let Assumptions \ref{cond:subG}, \ref{cond:tail}, and \ref{cond:NP-est} hold. Then, on the event $\mathcal E_\zeta$,
\begin{align}
\Bcaltil_3:=\left\{|\br_\pi^T\bDelta|\leq c\zeta_N\left(\|\bDelta\|_1\sqrt\frac{\log{d}\log{N}}{M\gamma_N}+\|\bDelta\|_2\right),\;\;\forall\bDelta\in\R^d\right\}\label{def:B3}
\end{align}
occurs with some constant $c>0$ with probability at least $1-3\exp\{N\gamma_N/\log{N}\}$.
\end{lemma}

In order to utilize the RSC property of Lemma \ref{lemma:RSC-SP}, we need to first show that the error $\|\bbetahat_{p}-\bbeta_{p}^*\|_2$ is bounded. For any $\bDelta\in\R^d$, define
\begin{align}
\mathcal F(\bDelta):=&\delta\elltil_{\bbeta}(\bDelta;\phat_{N};\bbeta_{p}^*;\pihat)+\lambda_{\bbeta}\|\bbeta_{p}^*+\bDelta\|_1+\nabla_{\bbeta}\elltil_{\bbeta}(\bbeta_{p}^*;\phat_{N};\pihat)^T\bDelta-\lambda_{\bbeta}\|\bbeta_{p}^*\|_1\nonumber\\
=&\delta\elltil_{\bbeta}(\bDelta;\phat_{N};\bbeta_{p}^*;\pihat)+\lambda_{\bbeta}\|\bbeta_{p}^*+\bDelta\|_1+\nabla_{\bbeta}\elltil_{\bbeta}(\bbeta_{p}^*;\phat_{N};\pi^*)^T\bDelta\nonumber\\
&\quad+\br_\pi^T\bDelta-\lambda_{\bbeta}\|\bbeta_{p}^*\|_1.\label{def:FDelta}
\end{align}
By construction, we have $\mathcal F(\bbetahat_{p}-\bbeta_{p}^*)\leq0$. That is,
\begin{align}
&\delta\elltil_{\bbeta}(\bbetahat_{p}-\bbeta_{p}^*;\phat_{N};\bbeta_{p}^*;\pihat)+\lambda_{\bbeta}\|\bbetahat_{p}\|_1\nonumber\\
&\qquad\leq\lambda_{\bbeta}\|\bbeta_{p}^*\|_1-\nabla_{\bbeta}\elltil_{\bbeta}(\bbeta_{p}^*;\phat_{N};\pi^*)^T(\bbetahat_{p}-\bbeta_{p}^*)-\br_\pi^T(\bbetahat_{p}-\bbeta_{p}^*).\label{eq:basic-PS}
\end{align}
For any $r_N>0$, define $\mathcal K(r_N,1):=\{\bDelta\in\R^d:\|\bDelta\|_1\leq r_N\|\bDelta\|_2,\|\bDelta\|_2=1\}$. The following Lemma shows that the function $\mathcal F(\bDelta)$ is strictly positive with high probability for any $\bDelta\in\mathcal K(r_N,1)$. As we will show that $\|\bDelta\|_1\leq r_N\|\bDelta\|_2$ with some $r_N>0$ (see \eqref{bound:rN}), the following Lemma indicates that $\|\bbetahat_{p}-\bbeta_{p}^*\|_2\neq1$. In fact, we can further show that $\|\bbetahat_{p}-\bbeta_{p}^*\|_2\leq1$ using the convexity of the function $\mathcal F(\cdot)$; see details in the proof of Theorem \ref{thm:PS-SP}.

\begin{lemma}\label{lemma:Fdelta}
Let $s_{p}=o(M\gamma_N/\log{d})$. For $0<t<M\gamma_N/\{100+\log{M}\log{d}\}$, choose any $\lambda_{\bbeta}\asymp\sqrt{\log{d}/(M\gamma_N)}$ satisfying $\lambda_{\bbeta}>2\kappa_3\sqrt{\{t+\log{d}\}/(M\gamma_N)}$. Then, on the event $\Bcaltil_2\cap\Bcaltil_3$, when $N$ is large enough,
\begin{align*}
\mathcal F(\bDelta)\geq\delta\elltil_{\bbeta}(\bDelta;\phat_{N};\bbeta_{p}^*;\pihat)+\lambda_{\bbeta}\|\bDelta\|_1/4-(2\sqrt{s_{p}}\lambda_{\bbeta}+c\zeta_N)\|\bDelta\|_2,\;\;\forall\bDelta\in\R^d.
\end{align*}
Further condition on the event $\Bcaltil_1\cap\mathcal E_p$. Let $r_N=o(\sqrt{M\gamma_N/\log{d}})$, when $N$ is large enough, we have $\mathcal F(\bDelta)>0$, $\forall\bDelta\in\mathcal K(r_N,1)$.
\end{lemma}

\subsection{Preliminary analysis of the OR estimator}\label{sec:nuisance-OR-SP}

For any $\balpha,\bbeta,\bDelta\in\R^d $, $a\in(0,1]$, and function $\pitil(\cdot):\mathcal X\mapsto[0,1]$, define the loss function for the corresponding OR model as
\begin{align*}
\elltil_{\balpha}(\balpha;a,\bbeta;\pitil)&:=M^{-1}\sum_{i\in\mathcal J}\frac{\Gamma_i}{a\pitil(\bX_i)}\exp(-\bS_i^T\bbeta)(Y_i-\bS_i^T\balpha)^2.
\end{align*}
In addition, let
\begin{align*}
&\delta\elltil_{\balpha}(\bDelta;a,\bbeta;\balpha;\pitil):=\elltil_{\balpha}(\balpha+\bDelta;a,\bbeta;\pitil)-\elltil_{\balpha}(\balpha;a,\bbeta;\pitil)-\nabla_{\balpha}\elltil_{\balpha}(\balpha;a,\bbeta;\pitil)^T\bDelta\\
&\qquad=M^{-1}\sum_{i\in\mathcal J}\frac{\Gamma_i}{a\pitil(\bX_i)}\exp(-\bS_i^T\bbeta)(\bS_i^T\bDelta)^2.
\end{align*}
We first characterize the RSC property for the OR model in the following Lemma.

\begin{lemma}[The OR model's RSC property]\label{lemma:RSC2-SP}
Let Assumptions \ref{cond:basic}, \ref{cond:subG}, \ref{cond:tail}, and \ref{cond:bound-SP} hold. For some constants $\kappa_1,\kappa_2>0$, define
\begin{align*}
\Acaltil_1:=\left\{\delta\elltil_{\balpha}(\bDelta;\phat_{N},\bbetahat_{p};\balphatil^*;\pihat)\geq\frac{p_{N}}{\phat_{N}}\left\{\kappa_1\|\bDelta\|_2^2-\kappa_2\frac{\log{d}}{M\gamma_N}\|\bDelta\|_1^2\right\},\quad\forall\bDelta\in\R^d\right\}.
\end{align*}
Then, with some constants $C_1,C_2>0$, when $ M\gamma_N>\max\{C_2,C_1\log{M}\log{d}\}$,
$$\P_{\mathcal D_N'}(\Acaltil_1\mid\mathcal E_\pi)\geq1-c_1\exp(-c_2 M\gamma_N),$$
with some constants $c_1,c_2>0$.
\end{lemma}

{In the following, we control the effect of the gradient $\nabla_{\balpha}\elltil_{\balpha}(\balphatil^*;\phat_{N},\bbetahat_{p};\pihat)$. Define $\varepsilon_{i,1}:=Y_i(1)-m(\bX_i)$ and $\varepsilon_{i,2}:=m(\bX_i)-m^*(\bX_i)$, representing respectively the stochastic and approximation errors. Then $\varepsilon_i=\varepsilon_{i,1}+\varepsilon_{i,2}$. Consider the following decomposition:
\begin{align}\label{rep:gradient}
\nabla_{\balpha}\elltil_{\balpha}(\balphatil^*;\phat_{N},\bbetahat_{p};\pihat)=(M\phat_{N})^{-1}\sum_{i\in\mathcal J}(\bV_{i,1}+\bV_{i,2})+\br_{p}+\br_{p,2},
\end{align}
where
\begin{align*}
\bV_{i,1}:=&-2\Gamma_i\exp(-\bS_i^T\bbetahat_{p})\bS_i\varepsilon_{i,1}/\pihat(\bX_i),\\
\bV_{i,2}:=&-2\Gamma_i\exp(-\bS_i^T\bbeta_{p}^*)\bS_i\varepsilon_{2,1}/\pi^*(\bX_i)\\
\br_{p}:=&-2(M\phat_{N})^{-1}\sum_{i\in\mathcal J}\left\{\frac{1}{\pihat(\bX_i)}-\frac{1}{\pi^*(\bX_i)}\right\}\exp(-\bS_i^T\bbeta_{p}^*)\Gamma_i\bS_i\varepsilon_{i,2},\\
\br_{p,2}:=&-2(M\phat_{N})^{-1}\sum_{i\in\mathcal J}\left\{\exp(-\bS_i^T\bbetahat_{p})-\exp(-\bS_i^T\bbeta_{p}^*)\right\}\Gamma_i\bS_i\varepsilon_{i,2}/\pihat(\bX_i).
\end{align*}}

\vspace{-1.5em}

\begin{lemma}[Controlling the effect of stochastic error]\label{lemma:gradient2-SP}
Let Assumptions \ref{cond:basic}, \ref{cond:bound-SP}, and \ref{cond:subG'} hold. For any $0<t<M\gamma_N/\{100+\log{M}\log{d}\}$, define
$$\Acaltil_2:=\left\{\left\|(M\phat_{N})^{-1}\sum_{i\in\mathcal J}\bV_{i,1}\right\|_\infty\leq\kappa_4\sqrt\frac{t+\log{d}}{M\gamma_N}\right\},$$
with some constant $\kappa_4>0$. Then, when $\lambda_{\bbeta}>2\kappa_3\sqrt{\{t+\log{d}\}(M\gamma_N)}$,
$$\P_{\mathcal D_N'}(\Acaltil_2\mid\mathcal E_\pi)\geq1-2\exp\{-t\log{d}\}-4\exp(-t)-t^{-1}.$$
\end{lemma}

Now, we further consider the case that the OR model is misspecified. We control the gradient $\|\nabla_{\balpha}\elltil_{\balpha}(\balphatil^*;\phat_{N},\bbeta_{p}^*;\pi^*)\|_\infty$ in the following lemma.

\begin{lemma}[Controlling the effect of approximation error]\label{lemma:gradient3-SP}
Let Assumptions \ref{cond:tail}, \ref{cond:NP-est}, \ref{cond:bound-SP}, and \ref{cond:approx} hold. For any $0<t<M\gamma_N/\{100+\log{M}\log{d}\}$, define
$$\Acaltil_3:=\left\{\left\|(M\phat_{N})^{-1}\sum_{i\in\mathcal J}\bV_{i,2}\right\|_\infty\leq\kappa_5\sqrt\frac{t+\log{d}}{M\gamma_N}\right\},$$
with some constant $\kappa_5>0$. Let $M\gamma_N>C_1\log{M}\log{d}$ and $\lambda_{\bbeta}>2\kappa_3\sqrt{\{t+\log{d}\}(M\gamma_N)}$, then
$$\P_{\mathcal D_N'}(\mathcal A_3)\geq1-7\exp(-t).$$
\end{lemma}

\subsection{Proof of the results in Section \ref{sec:nuisance-PS-SP}}

\begin{proof}[Proof of Lemma \ref{lemma:RSC_gen}]
For any $\bDelta\in\R^d$, define
$$A_i=\Gamma_i\phi\left(\bS_i^T(\bbeta^*+v\bDelta)\right)(\bS_i^T\bDelta)^2.$$
Then,
$$f(\bDelta,\gamma_N,v,\bbeta^*,\phi(\cdot))=(a M)^{-1}\sum_{i\in\mathcal J}A_i.$$
For truncation levels $T\geq\tau>0$, we define the following truncation functions
$$\varphi_\tau(u)=u^2\mathbbm1_{|u|\leq\tau/2}+(\tau-u)^2\mathbbm1_{\tau/2<|u|\leq\tau},\qquad\alpha_\tau(u)=u\mathbbm1_{|u|\leq\tau}.$$
Now, we show that, for each $i\in\mathcal J$,
\begin{equation}\label{lower:Ai}
A_i\geq\Gamma_i\phi\left(\alpha_T(\bS_i^T\bbeta^*)+v\alpha_\tau(\bS_i^T\bDelta)\right)\varphi_\tau\left(\bS_i^T\bDelta\mathbbm1_{|\bS_i^T\bbeta^*|\leq T}\right).
\end{equation}

\textbf{Case 1:} $|\bS_i^T\bbeta^*|>T$ or $|\bS_i^T\bDelta|>\tau$. We can see that $\varphi_\tau\left(\bS_i^T\bDelta\mathbbm1_{|\bS_i^T\bbeta^*|\leq T}\right)=0$ and hence \eqref{lower:Ai} follows.

\textbf{Case 2:} $|\bS_i^T\bbeta^*|\leq T$, and $|\bS_i^T\bDelta|\leq\tau$. Then,
$$\alpha_T(\bS_i^T\bbeta^*)=\bS_i^T\bbeta^*,\quad\alpha_\tau(\bS_i^T\bDelta)=\bS_i^T\bDelta,\quad(\bS_i^T\bDelta)^2\geq\varphi_\tau\left(\bS_i^T\bDelta\mathbbm1_{|\bS_i^T\bbeta^*|\leq T}\right),$$
and hence \eqref{lower:Ai} follows. Define
$$K_3:=8r\sigma^2\log\left\{2^{4+1/(2r)}C^{1/q}r\sigma^2\kappa_l^{-1}\right\},\qquad L_\phi:=\min_{|u|\leq \sqrt{K_3}(1+\kappa_0)}\phi(u).$$
For some $\delta\in(0,\kappa_0]$, we choose the truncation levels as
$$T^2=K_3,\qquad\tau^2=\tau^2(\delta)=K_3\delta^2.$$
Then, for each $i\in\mathcal J$,
$$\phi\left(\alpha_T(\bS_i^T\bbeta^*)+v\alpha_\tau(\bS_i^T\bDelta)\right)\geq L_\phi.$$
Note that, for any $\kappa_1,\kappa_2>0$ and $\bDelta\in\R^d $,
$$\kappa_1\|\bDelta\|_2^2-\kappa_2\frac{\log{d}}{M\gamma_N}\|\bDelta\|_1^2\leq2\kappa_1\|\bDelta\|_2^2-2\sqrt\frac{\kappa_2\log{d}}{M\gamma_N}\|\bDelta\|_1^2.$$
Hence, it suffices to show that
\begin{align*}
&M^{-1}\sum_{i\in\mathcal J}\varphi_{\tau(\delta)}\left(\bS_i^T\bDelta\mathbbm1_{|\bS_i^T\bbeta^*|\leq T}\right)\\
&\qquad\geq2L_\phi^{-1}\gamma_N\left\{\kappa_1\|\bDelta\|_2^2-\sqrt\frac{\kappa_2\log{d}}{M\gamma_N}\|\bDelta\|_1^2\right\},\quad\forall\delta\in(0,\kappa_0],\;\;\|\bDelta\|_2=\delta,
\end{align*}
with high probability. By rescaling the vector that $\widetilde{\bDelta}=\bDelta/\|\bDelta\|_2$ and notice that $\varphi_{\tau(\delta)}(\delta^2u)=\delta^2\varphi_{\tau(1)}(u)$, it suffices to show
\begin{align}
& M^{-1}\sum_{i\in\mathcal J}\varphi_{\tau(1)}\left(\bS_i^T\bDelta\mathbbm1_{|\bS_i^T\bbeta^*|\leq T}\right)\geq2L_\phi^{-1}\gamma_N\left\{\kappa_1-\sqrt\frac{\kappa_2\log{d}}{M\gamma_N}\|\bDelta\|_1\right\},\;\;\forall\|\bDelta\|_2=1.\label{ineq:sufficient}
\end{align}
Hence, we will restrict $\delta=1$ and $\tau=\tau(1)=\sqrt{K_3}$. Define $g_{\bDelta}(\bx)=\varphi_\tau\left(\varphi(\bx)^T\bDelta\mathbbm1_{|\varphi(\bx)^T\bbeta^*|\leq T}\right)$, $\kappa_1=\kappa_l/(2L_\phi)$, and $\kappa_2=2c_1'^2/L_\phi$.

\textbf{Step 1.} We first demonstrate that, for any $\|\bDelta\|_2=1$,
\begin{equation}\label{step1}
\E\left\{\Gamma g_{\bDelta}(\bX)\right\}\geq\gamma_N\kappa_l/2.
\end{equation}
Under Assumption \ref{cond:subG}, we have
$$\E\left\{\Gamma(\bS^T\bDelta)^2\right\}=\gamma_N\E\left\{(\bS^T\bDelta)^2\mid\Gamma=1\right\}\geq\gamma_N\Var(\bS^T\bDelta\mid\Gamma=1)\geq\gamma_N\kappa_l.$$
Besides, observe that, for $r>0$ satisfying $1/r+1/q=1$,
\begin{align*}
&\E\left[\Gamma\left\{(\bS^T\bDelta)^2-g_{\bDelta}(\bX)\right\}\right]=\E\left[\gamma_N(\bX)\left\{(\bS^T\bDelta)^2-g_{\bDelta}(\bX)\right\}\right]\\
&\qquad\leq\|\gamma_N(\cdot)\|_{\P,q}\left\|(\bS^T\bDelta)^2-g_{\bDelta}(\bX)\right\|_{\P,r}\\
&\qquad\leq C^{1/q}\gamma_N\left\|(\bS^T\bDelta)^2\mathbbm1_{|\bS^T\bbeta^*|\geq T}\right\|_{\P,r}+c\gamma_N\left\|(\bS^T\bDelta)^2\mathbbm1_{|\bS^T\bDelta|\geq\tau/2}\right\|_{\P,r}\\
&\qquad\leq C^{1/q}\gamma_N\|(\bS^T\bDelta)^2\|_{\P,2r}\left[\left\{\P(|\bS^T\bbeta^*|\geq T)\right\}^{1/(2r)}+\left\{\P(|\bS^T\bDelta|\geq \tau/2)\right\}^{1/(2r)}\right].
\end{align*}
Under Assumptions \ref{cond:subG} and \ref{cond:bound-SP}, $\|\bS^T\bDelta\|_{\psi_2}\leq\sigma$ and $\|\bS^T\bbeta^*\|_{\psi_2}\leq\sigma$. By part (b) of Lemma \ref{lemma:subG},
$$\|(\bS^T\bDelta)^2\|_{\P,2r}\leq2^{1+1/(2r)}r\sigma^2.$$
By part (b) of Lemma \ref{lemma:subG},
\begin{align*}
\P(|\bS^T\bbeta^*|\geq T)&\leq2\exp\left(-\frac{K_3}{\sigma^2}\right)\leq2\exp\left(-\frac{K_3}{4\sigma^2}\right),\\
\P(|\bS^T\bDelta|\geq\tau/2)&\leq2\exp\left(-\frac{K_3}{4\sigma^2}\right).
\end{align*}
Hence, by the construction of $K_3$,
$$\E\left[\Gamma\left\{(\bS^T\bDelta)^2-g_{\bDelta}(\bX)\right\}\right]\leq2^{3+1/(2r)}C^{1/q}r\sigma^2\exp\left(-\frac{K_3}{8r\sigma^2}\right)\gamma_N\leq\gamma_N\kappa_l/2.$$
Therefore,
$$\E\left\{\Gamma g_{\bDelta}(\bX)\right\}=\E\left\{\Gamma(\bS^T\bDelta)^2\right\}-\E\left[\Gamma\left\{(\bS^T\bDelta)^2-g_{\bDelta}(\bX)\right\}\right]\geq\gamma_N\kappa_l/2.$$
\textbf{Step 2.} Define
$$f_N(\bDelta)=\left| M^{-1}\sum_{i\in\mathcal J}\Gamma_ig_{\bDelta}(\bX_i)-\E\{\Gamma g_{\bDelta}(\bX)\}\right|,\quad Z(t)=\sup_{\|\bDelta\|_2=1,\|\bDelta\|_1\leq t}f_N(\bDelta).$$
We prove that, with some constant $C>0$, when $ M\gamma_N$ is large enough,
\begin{equation}\label{EZt}
\E_{\mathcal D_N'}\{Z(t)\}\leq1016K_3r\sigma t\sqrt\frac{C^{1/q}\gamma_N\log{d}}{M}.
\end{equation}
Let $(\epsilon_i)_{i\in\mathcal I}$ be i.i.d. Rademacher variables. With a slight abuse of notation, we let $\bZ=(R,T,Y(0),Y(1),\bX,\epsilon)$ and $\mathcal D_N'=\{\bZ_i\}_{i\in\mathcal J}$. Then,
\begin{align*}
\E_{\mathcal D_N'}\{Z(t)\}&\leq 2\E_{\mathcal D_N'}\left\{\sup_{\|\bDelta\|_2=1,\|\bDelta\|_1\leq t}\left| M^{-1}\sum_{i\in\mathcal J}\epsilon_i\Gamma_ig_{\bDelta}(\bX_i)\right|\right\}\\
&=2\E_{\mathcal D_N'}\left\{\sup_{\|\bDelta\|_2=1,\|\bDelta\|_1\leq t}\left| M^{-1}\sum_{i\in\mathcal J}\epsilon_i\Gamma_i\varphi_\tau\left(\bS_i^T\bDelta\mathbbm1_{|\bS_i^T\bbeta^*|\leq T}\right)\right|\right\}\\
&\overset{(i)}{\leq}8K_3\E_{\mathcal D_N'}\left\{\sup_{\|\bDelta\|_2=1,\|\bDelta\|_1\leq t}\left| M^{-1}\sum_{i\in\mathcal J}\epsilon_i\Gamma_i\bS_i^T\bDelta\mathbbm1_{|\bS_i^T\bbeta^*|\leq T}\right|\right\}\\
&\leq8K_3t\E_{\mathcal D_N'}\left\| M^{-1}\sum_{i\in\mathcal J}\epsilon_i\Gamma_i\bS_i\right\|_\infty,
\end{align*}
where (i) holds by utilizing Ledoux-Talagrand contraction inequality in \cite{ledoux2013probability}, notice that $\varphi_\tau(\cdot)$ is $(2K_3)$-Lipschitz continuous and $\varphi_\tau(0)=0$. For any vector $\bx$, denote $\bx(j)$ as its $j$-th element throughout. Here, $\E(\epsilon\Gamma\bS)=\boldsymbol0$, $\sup_{1\leq j\leq d}\|\epsilon\Gamma\bS(j)\|_{\psi_2}\leq\sigma$ as $|\epsilon\Gamma|\leq1$, and with $r>0$ satisfying $1/r+1/q=1$,
\begin{align*}
&\sup_{1\leq j\leq d}\E\left\{\epsilon\Gamma\bS(j)\right\}^2=\sup_{1\leq j\leq d}\E\left\{\gamma_N(\bX)\bS^2(j)\right\}\\
&\qquad\leq\|\gamma_N(\cdot)\|_{\P,q}\|\bS(j)\|_{\P,2r}^2\overset{(i)}{\leq} C^{1/q}2^{1/r}r^2\sigma^2\gamma_N\leq2C^{1/q}r^2\sigma^2\gamma_N,
\end{align*}
where (i) holds by Assumption \ref{cond:tail} and $\|\bS(j)\|_{\P,2r}\leq2^{1/(2r)}r\sigma$ using part (c) of Lemma \ref{lemma:subG}. By Theorem 3.4 of \cite{kuchibhotla2022moving}, for any $u\geq0$, with probability at least $1-3\exp(-u)$,
\begin{align*}
\left\| M^{-1}\sum_{i\in\mathcal J}\epsilon_i\Gamma_i\bS_i\right\|_\infty&\leq7r\sigma\sqrt\frac{2C^{1/q}\gamma_N\{u+\log{d}\}}{M}+\frac{c_1'\sigma\sqrt{\log{M}}\{u+\log{d}\}}{M}\\
&\leq a_1u+a_2\sqrt u+a_3,
\end{align*}
where $c_1'>0$ is a constant and
\begin{align*}
a_1&=\frac{c_1'\sigma\sqrt{\log{M}}}{M},\\
a_2&=7r\sigma\sqrt\frac{2C^{1/q}\gamma_N}{M},\\
a_3&=7r\sigma\sqrt\frac{2C^{1/q}\gamma_N\log{d}}{M}+\frac{c_1'\sigma\sqrt{\log{M}}\log{d}}{M}.
\end{align*}
By Lemma \ref{integrating},
\begin{align*}
&\E_{\mathcal D_N'}\left\| M^{-1}\sum_{i\in\mathcal J}\epsilon_i\Gamma_i\bS_i\right\|_\infty\leq12a_1+3\sqrt\pi a_2+a_3\\
&\qquad=7r\sigma\sqrt\frac{2C^{1/q}\gamma_N}{M}\{\sqrt{\log{d}}+3\sqrt\pi\}+\frac{c_1'\sigma\sqrt{\log{M}}\{\log{d}+12\}}{M}\\
&\qquad\leq14r\sigma\sqrt\frac{2C^{1/q}\gamma_N}{M}\{\sqrt{\log{d}}+3\sqrt\pi\}\leq127r\sigma\sqrt\frac{C^{1/q}\gamma_N\log{d}}{M},
\end{align*}
when $ M\gamma_N>C_1\log{M}\log{d}$, where $C_1=c_1'^2/(4C^{1/q}r^2)$. Hence, \eqref{EZt} follows.

\textbf{Step 3.} We show that, with some constants $c_2',c_1,c_2>0$,
\begin{equation}\label{step3}
\P_{\mathcal D_N'}\left(Z(t)\geq\frac{\gamma_N\kappa_l}{4}+2c_2'\|\bDelta\|_1\sqrt\frac{\gamma_N\log{d}}{M},\forall t>0\right)\leq c_1\exp\left(-c_2 M\gamma_N\right).
\end{equation}
Define $z^*(t)=\gamma_N\{\kappa_l/8+2K_3\sigma t\sqrt{\log{d}/( M\gamma_N)}\}$ and $\mathcal F=\{\pm f(\cdot):f(\bx)=\Gamma g_{\bDelta}(\bx)-\E\{\Gamma g_{\bDelta}(\bX)\},\|\bDelta\|_2=1,\|\bDelta\|_1=t\}$. Notice that $0\leq g_{\bDelta}(\bx)\leq K_3$ for all $\bx\in\R^p$, it follows that $|f(\bx)|\leq K_3$ for all $f\in\mathcal F$. Besides, notice that
$$\sup_{f\in\mathcal F}\E\{f^2(\bX)\}\leq\sup_{\|\bDelta\|_2=1,\|\bDelta\|_1=t}\E\left\{\Gamma g_{\bDelta}^2(\bX)\right\}\leq K_3^2\gamma_N.$$
By Theorem 3.27 of \cite{wainwright2019high} and \eqref{EZt}, we have
\begin{align*}
&\P_{\mathcal D_N'}\left(Z(t)\geq\E_{\mathcal D_N'}\{Z(t)\}+z^*(t)\right)\\
&\qquad\leq\exp\left(-\frac{M\{z^*(t)\}^2}{8e[K_3^2\gamma_N+2K_3\E\{Z(t)\}]+4K_3z^*(t)}\right)\\
&\qquad\leq\exp\left\{-\frac{M\gamma_N\kappa_l^2/64+4K_3^2\sigma^2t^2\log{d}}{22K_3^2+2032K_3^2rt\sigma C^{1/(2q)}\sqrt\frac{\log{d}}{M\gamma_N}+K_3\kappa_l/2+8K_3^2\sigma t\sqrt\frac{\log{d}}{M\gamma_N}}\right\}\\
&\qquad=\exp\left\{-\frac{M\gamma_N\kappa_l^2/64+K_3^2\sigma^2t^2\log{d}}{22K_3^2+K_3\kappa_l/2+(8+2032rC^{1/(2q)})K_3^2\sigma t\sqrt\frac{\log{d}}{M\gamma_N}}\right\}.
\end{align*}
It follows that
$$\P_{\mathcal D_N'}\left(\sup_{\|\bDelta\|_2=1,\|\bDelta\|_1\leq t}f_N(\bDelta)\geq g(t)\right)\leq h(t),$$
where $a=K_3^2\sigma^2$, $b=\kappa_l^2/64$, $c=(8+2032rC^{1/(2q)})K_3^2\sigma$, $d=22K_3^2+K_3\kappa_l/2$, $c_2'=(2+1016rC^{1/(2q)})K_3\sigma$, $c_3'=\kappa_l/8$,
$$g(t)=c_2't\sqrt\frac{\gamma_N\log{d}}{M}+c_3'\gamma_N,\;\;h(t)=\exp\left\{-\frac{at^2\log{d}+b M\gamma_N}{ct\sqrt\frac{\log{d}}{M\gamma_N}+d}\right\}.$$
Now we apply a peeling argument to extend the radii $\|\bDelta\|_1$. For each $m\geq1$, define
$$\mathcal A_m=\left\{\bDelta\in\R^d :\|\bDelta\|_2=1,2^{m-1}c_3'\gamma_N\leq g(\|\bDelta\|_1)\leq2^mc_3'\gamma_N\right\}.$$
Notice that $g(t)$ is a strictly increasing function, $g^{-1}(t)=(c_2')^{-1}(t-c_3'\gamma_N)\sqrt{M/\{\gamma_N\log{d}\}}$, and $g(t)\geq c_3'\gamma_N$ for all $t>0$. Let $c_4'=\log_2\{c_2'/(\sigma c_3')+1\}$ and $t_m=c_3'(2^m-1)/c_2'\geq c_3'm/c_2'$ for all $m\geq1$. Then,
\begin{align*}
&\P_{\mathcal D_N'}\left(\exists\bDelta\in\R^d \;\;\text{s.t.}\;\;\|\bDelta\|_2=1,\;f_N(\bDelta)\geq2g(\|\bDelta\|_1)\right)\\
&\qquad\overset{(i)}{\leq}\sum_{m=1}^\infty\P_{\mathcal D_N'}\left(\exists\bDelta\in\mathcal A_m\;\;\text{s.t.}\;\;f_N(\bDelta)\geq2g(\|\bDelta\|_1)\right)\\
&\qquad\overset{(ii)}{\leq}\sum_{m=1}^\infty\P_{\mathcal D_N'}\left(\sup_{\|\bDelta\|_2=1,\|\bDelta\|_1\leq g^{-1}(2^m\gamma_Nc_2)}f_N(\bDelta)\geq2g(\|\bDelta\|_1)\right)\\
&\qquad\leq\sum_{m=1}^\infty h(g^{-1}(2^m\gamma_N\kappa_l/4))=\sum_{m=1}^\infty h\left(t_m\sqrt\frac{M}{\gamma_N\log{d}}\right)\\
&\qquad=\sum_{m=1}^\infty\exp\left(- M\gamma_N\frac{at_m^2+b}{ct_m+d}\right)\\
&\qquad\leq\sum_{t_m\leq\sigma^{-1}}\exp\left(-M\gamma_N\frac{b}{c\sigma^{-1}+d}\right)+\sum_{t_m>\sigma^{-1}}\exp\left\{- M\gamma_N\frac{at_m}{c+d\sigma}\right\}\\
&\qquad=\sum_{m\leq c_4'}\exp\left(- M\gamma_N\frac{b}{c\sigma^{-1}+d}\right)+\sum_{m>c_4'}\exp\left\{- M\gamma_N\frac{at_m}{c+d\sigma}\right\}\\
&\qquad\leq c_4'\exp\left(- M\gamma_N\frac{b}{c\sigma^{-1}+d}\right)+\sum_{m=2}^\infty\exp\left\{- M\gamma_N\frac{ac_3'm}{c_2'(c+d\sigma)}\right\}\\
&\qquad=c_4'\exp\left(- M\gamma_N\frac{b}{c\sigma^{-1}+d}\right)+\frac{\exp\left\{- M\gamma_N\frac{2ac_3'}{c_2'(c+d\sigma)}\right\}}{1-\exp\left\{- M\gamma_N\frac{ac_3'}{c_2'(c+d\sigma)}\right\}}\\
&\qquad\leq c_1\exp\left(-c_2 M\gamma_N\right),
\end{align*}
when $M\gamma_N>c_2'(c+d\sigma)\log(2)/(ac_3')$, where $c_1=c_4'+2$ and $c_2=\max[b/(c\sigma^{-1}+d),2ac_3'/\{c_2'(c+d\sigma)\}]$. Here, (i) holds since $\P_{\mathcal D_N'}(\cup_{m=1}^\infty\mathcal A_m)=1$, (ii) holds since $\mathcal A_m\subseteq\{\bDelta:\|\bDelta\|_2=1,\|\bDelta\|_1\leq g^{-1}(2^m\gamma_Nc_3')\}$. Hence, \eqref{step3} holds.

Combining \eqref{step1} and \eqref{step3}, with probability at least $1-c_1\exp\left(-c_2M\gamma_N\right)$,
$$ M^{-1}\sum_{i\in\mathcal I_k}\Gamma_ig_{\bDelta}(\bX_i)\geq\gamma_N\left\{\kappa_l/4-2c_2'\sqrt\frac{\log{d}}{M\gamma_N}\|\bDelta\|_1\right\},\quad\forall\|\bDelta\|_2=1.$$
It follows that
$$\P_{\mathcal D_N'}\left(\frac{1}{M}\sum_{i\in\mathcal I_k}\Gamma_ig_{\bDelta}(\bX_i)\geq\gamma_N\left\{\frac{\kappa_l}{4}-2c_2'\sqrt\frac{\log{d}}{M\gamma_N}\|\bDelta\|_1\right\}\right)\geq1-c_1\exp\left(-c_2 M\gamma_N\right).$$
Choose $a=\gamma_N$, $\kappa_1=\kappa_l/8$ and $\kappa_2=c_2'^2$, when $ M\gamma_N>\max\{C_2,c_1'^2\log{M}\log{d}/(4C^{1/q}r^2)\}$, where $C_2=\frac{c_2'(c+d\sigma)}{ac_3'}\log(2)$, we have \eqref{ineq:sufficient} holds with probability at least $1-c_1\exp(-c_2 M\gamma_N)$.
\end{proof}

\begin{proof}[Proof of Lemma \ref{lemma:gammaN-est}]
By Theorem 1 of \cite{van2013bernstein}, we have
$$\P_{\mathcal D_N'}\left(\mathcal E_\gamma\right)\leq1-2\exp(-t).$$
Moreover, On $\mathcal E_\gamma$, when $0<t< 0.01M\gamma_N$,
\begin{align*}
|\gammahat_N-\gamma_N|\leq2\sqrt\frac{0.01M\gamma_N^2}{M}+\frac{0.01M\gamma_N}{M}=0.21\gamma_N.
\end{align*}
Hence,
\begin{align*}
0.79\gamma_N\leq\gammahat_N\leq1.21\gamma_N.
\end{align*}
In addition,
\begin{align*}
\left|\frac{\gamma_N}{\gammahat_N}-1\right|&=\frac{|\gammahat_N-\gamma_N|}{\gammahat_N}\leq\frac{2\sqrt{t\gamma_N/M}+\sqrt{t\cdot0.01M\gamma_N}/M}{0.79\gamma_N}\leq2.66\sqrt\frac{t}{M\gamma_N}.
\end{align*}
\end{proof}

\begin{proof}[Proof of Lemma \ref{lemma:phatN-est}]
For any $0<t<0.01M\gamma_N\leq0.01M\E(T)$, similar to Lemma \ref{lemma:gammaN-est}, we also have $\mathcal E_T:=\{|M^{-1}\sum_{i\in\mathcal J}T_i-\E(T)|\leq2\sqrt{t\E(T)/M}+t/M\}$ occurs with probability at least $1-2\exp(-t)$. On the event $\mathcal E_T$,
\begin{align*}
\left|\frac{\E(T)}{M^{-1}\sum_{i\in\mathcal J}T_i}-1\right|\leq2.66\sqrt\frac{t}{M\E(T)},\quad0.79\E(T)\leq M^{-1}\sum_{i\in\mathcal J}T_i\leq1.21\E(T).
\end{align*}
Hence, on the event $\mathcal E_T\cap\mathcal E_\gamma$,
\begin{align*}
&|\phat_{N}-p_{N}|=\left|\frac{M^{-1}\sum_{i\in\mathcal J}\Gamma_i}{M^{-1}\sum_{i\in\mathcal J}T_i}-\frac{\E(\Gamma)}{\E(T)}\right|\leq\frac{|\gammahat_N-\gamma_N|}{M^{-1}\sum_{i\in\mathcal J}T_i}+p_{N}\left|\frac{\E(T)}{M^{-1}\sum_{i\in\mathcal J}T_i}-1\right|\\
&\qquad\leq\frac{1}{0.79\E(T)}\left(2\sqrt\frac{t\gamma_N}{M}+\frac{t}{M}\right)+2.66p_{N}\sqrt\frac{t}{M\E(T)}\\
&\qquad\leq\frac{2.1}{0.79\E(T)}\sqrt\frac{t\gamma_N}{M}+\frac{2.66}{\E(T)}\sqrt \frac{t\gamma_Np_{N}}{M}\leq\frac{5.32}{\E(T)}\sqrt\frac{t\gamma_N}{M},
\end{align*}
since $0<t<0.01M\gamma_N$, $\gamma_N=p_{N}E(T)$, and $p_{N}\leq1$. It follows that
\begin{align*}
|\phat_{N}-p_{N}|\leq\frac{5.32}{\E(T)}\sqrt\frac{0.01M\gamma_N^2}{M}\leq0.54p_{N}.
\end{align*}
Hence,
\begin{align*}
0.46p_{N}\leq\phat_{N}\leq1.54p_{N}.
\end{align*}
Moreover,
\begin{align*}
\left|\frac{p_{N}}{\phat_{N}}-1\right|&=\frac{|\phat_{N}-p_{N}|}{\phat_{N}}\leq\frac{5.32\{\E(T)\}^{-1}\sqrt{t\gamma_N/M}}{0.46p_{N}}\leq12\sqrt\frac{t}{M\gamma_N}.
\end{align*}
\end{proof}

\begin{proof}[Proof of Lemma \ref{lemma:RSC-SP}]
Under Assumption \ref{cond:basic}, we have
\begin{align*}
p_{N}&=\E\{p_N(\bX)\}=\E\{\gamma_N(\bX)/\pi(\bX)\}\geq\E\{\gamma_N(\bX)\}=\gamma_N\;\;\mbox{and}\\
p_{N}&=\E\{\gamma_N(\bX)/\pi(\bX)\}\leq\E\{\gamma_N(\bX)\}/c_0=\gamma_N/c_0.
\end{align*}
Therefore, $p_{N}\asymp\gamma_N$. By Taylor's theorem, we have
\begin{align*}
\delta\elltil_{\bbeta}(\bDelta;\phat_{N};\bbeta_{p}^*;\pihat)&=(M\phat_{N})^{-1}\sum_{i\in\mathcal J}\frac{\Gamma_i}{\pihat(\bX_i)}\exp\{-\bS_i^T(\bbeta_{p}^*+v\bDelta)\}(\bS_i^T\bDelta)^2\\
&\geq (c_0M\phat_{N})^{-1}\sum_{i\in\mathcal J}\Gamma_i\exp\{-\bS_i^T(\bbeta_{p}^*+v\bDelta)\}(\bS_i^T\bDelta)^2,
\end{align*}
with some $v\in(0,1)$ on the event $\mathcal E_\pi$. Hence, Lemma \ref{lemma:RSC-SP} follows directly from Lemma \ref{lemma:RSC_gen}, with $a=\phat_{N}$, $v=v$, and $\phi(u)\equiv\exp(-u)$ using the fact that $p_{N}\asymp\gamma_N$.
\end{proof}

\begin{proof}[Proof of Lemma \ref{lemma:gradient-SP}]
For any $a\in(0,1]$,
$$\nabla_{\bbeta}\elltil_{\bbeta}(\bbeta_{p}^*;a;\pi^*)= M^{-1}\sum_{i\in\mathcal J}\left\{1-\frac{\Gamma_i}{\pi^*(\bX_i)}-\frac{\Gamma_i}{a\pi^*(\bX_i)}\exp(-\bS_i^T\bbeta_{p}^*)\right\}\bS_i.$$
Let $\bU_i=[1-\Gamma_i/\pi^*(\bX_i)-\Gamma_i\exp(-\bS_i^T\bbeta_{p}^*)/\{p_{N}\pi^*(\bX_i)\}]\bS_i$ and set $\bU$ as an independent copy of $\bU_i$. By the construction of $\bbeta_{p}^*$, $\E(\bU)=\boldsymbol0$. For each $1\leq j\leq d$,
\begin{align*}
\sup_{1\leq j\leq d}\|\bU(j)\|_{\psi_2}&\leq\{\log(2)\}^{-1/2}(1+c_0^{-1})+(cp_{N})^{-1}\exp(C_0)\sigma=O(\gamma_N^{-1}),
\end{align*}
since $p_{N}\asymp\gamma_N$ as in Lemma \ref{lemma:RSC-SP}. In addition,
\begin{align*}
&\sup_{1\leq j\leq d}\E\{\bU^2(j)\}\\
&\qquad\leq2\E\left\{1+\Gamma/\pi^*(\bX)\right\}^2+2\sup_{1\leq j\leq d}\E\left\{(c_0p_{N})^{-2}\gamma_N(\bX)\exp(-2\bS^T\bbeta^*)\bS^2(j)\right\}\\
&\qquad\leq(1+c_0^{-1})^2+2(c_0p_{N})^{-2}\exp(2C_0)\|\gamma_N(\cdot)\|_{\P,q}\sup_{1\leq j\leq d}\|\bS(j)\|_{\P,2q/(q-1)}^2=O(\gamma_N^{-1}).
\end{align*}
By Theorem 3.4 of \cite{kuchibhotla2022moving},
$$\P_{\mathcal D_N'}\left(\left\|M^{-1}\sum_{i\in\mathcal J}\bU_i\right\|_\infty>c\sqrt\frac{t+\log{d}}{M\gamma_N}+\frac{c\sqrt{\log{M}}\{t+\log{d}\}}{M\gamma_N}\right)\leq3\exp(-t),$$
with some constant $c>0$. When $M\gamma_N>C_1\log{M}\log{d}$, we have
\begin{align*}
\P_{\mathcal D_N'}\left(\left\|\nabla_{\bbeta}\ell_{\bbeta}(\bbeta^*;\gamma_N)\right\|_\infty>c\sqrt\frac{t+t^2+\log{d}}{M\gamma_N}\right)\leq3\exp(-t),
\end{align*}
with some constant $c>0$. Besides, observe that
\begin{align*}
&\|\nabla_{\bbeta}\elltil_{\bbeta}(\bbeta_{p}^*;\phat_{N};\pi^*)-\nabla_{\bbeta}\elltil_{\bbeta}(\bbeta^*;p_{N};\pi^*)\|_\infty\\
&\qquad=\left|\frac{\phat_{N}}{p_{N}}-1\right| \left\|(Mp_{N})^{-1}\sum_{i\in\mathcal I_k}\frac{\Gamma_i}{\pi^*(\bX_i)}\exp(-\bS_i^T\bbeta^*)\bS_i\right\|_\infty.
\end{align*}
Let $\bW_i=\Gamma_i\exp(-\bS_i^T\bbeta^*)\bS_i/\pi^*(\bX_i)$ set $\bW$ as an independent copy of $\bW_i$. Note that
\begin{align*}
\sup_{1\leq j\leq d}\left\|\bW(j)\right\|_{\psi_2}&\leq c_0^{-1}\exp(C_0)\sup_{1\leq j\leq d}\left\|\bS(j)\right\|_{\psi_2}=O(1),\\
\sup_{1\leq j\leq d}\E\{\bW^2(j)\}&\leq c_0^{-1}\exp(2C_0)\|\gamma_N(\cdot)\|_{\P,q}\sup_{1\leq j\leq d}\|\bS(j)\|_{\P,2q/(q-1)}^2=O(\gamma_N).
\end{align*}
By Theorem 3.4 of \cite{kuchibhotla2022moving} and note that $p_{N}\asymp\gamma_N$, we have
\begin{align*}
\P_{\mathcal D_N'}\biggr(&\left\|(Mp_{N})^{-1}\sum_{i\in\mathcal J}\bW_i-p_{N}^{-1}\E(\bW)\right\|_\infty\\
&\qquad>c\sqrt\frac{t+\log{d}}{M\gamma_N}+\frac{c\sqrt{\log{M}}\{t+\log{d}\}}{M\gamma_N}\biggr)\leq3\exp(-t),
\end{align*}
with some constant $c>0$. Notice that, we also have
$$\|p_{N}^{-1}\E(\bW)\|_\infty\leq c_0^{-1}\exp(C_0)\sup_{1\leq j\leq d}\E|\bS(j)|=O(1).$$
Hence, when $0<t<M\gamma_N/\{\log{M}\log{d}\}$ and $M\gamma_N>C_1\log{M}\log{d}$ with some constant $C_1>0$, we have
\begin{align*}
&\P_{\mathcal D_N'}\left(\left\|(M\gamma_N)^{-1}\sum_{i\in\mathcal J}\bW_i\right\|_\infty>c\left\{1+\sqrt\frac{\log{d}}{M\gamma_N}\right\}\right)\leq3\exp(-t),
\end{align*}
with some constant $c>0$. For any $0<t<0.01M\gamma_N$, by Lemma \ref{lemma:phatN-est}, we have
$$\left|\frac{p_{N}}{\phat_{N}}-1\right|\leq12\sqrt\frac{t}{Mp_{N}}\asymp\sqrt\frac{t}{M\gamma_N},$$
with probability at least $1-4\exp(-t)$. It follows that, when $0<t<M\gamma_N/\{100+\log{M}\log{d}\}$ and $M\gamma_N>C_1\log{M}\log{d}$,
\begin{align*}
\P_{\mathcal D_N'}\left(\|\nabla_{\bbeta}\elltil_{\bbeta}(\bbeta_{p}^*;\phat_{N};\pi^*)-\nabla_{\bbeta}\elltil_{\bbeta}(\bbeta_{p}^*;p_{N};\pi^*)\|_\infty>c\sqrt\frac{t+\log{d}}{M\gamma_N}\right)\leq7\exp(-t),
\end{align*}
with some constant $c>0$. Therefore, we have
\begin{align*}
\P_{\mathcal D_N'}\left(\|\nabla_{\bbeta}\elltil_{\bbeta}(\bbeta_{p}^*;\phat_{N};\pi^*)\|_\infty>\kappa_3\sqrt\frac{t+\log{d}}{M\gamma_N}\right)\leq10\exp(-t),
\end{align*}
with some constant $\kappa_3>0$.
\end{proof}

\begin{proof}[Proof of Lemma \ref{lemma:rpi}]
By the Cauchy-Schwarz inequality, for any $\bDelta\in\R^d$,
\begin{align*}
|\br_\pi^T\bDelta|^2&\leq M^{-2}\sum_{i\in\mathcal J}\Gamma_i\left\{\frac{1}{\pihat(\bX_i)}-\frac{1}{\pi^*(\bX_i)}\right\}^2\sum_{i\in\mathcal J}\Gamma_i\left\{1+\phat_{N}^{-1}\exp(-\bS_i^T\bbeta_{p}^*)\right\}^2|\bS_i^T\bDelta|^2\\
&\leq 2c_0^{-4}M^{-2}\{1+\phat_{N}^{-2}\exp(2C_{\bX}C_{\bbeta})\}\sum_{i\in\mathcal J}\Gamma_i\left\{\pihat(\bX_i)-\pi^*(\bX_i)\right\}^2\sum_{i\in\mathcal J}\Gamma_i|\bS_i^T\bDelta|^2.
\end{align*}
By Lemma \ref{lemma:phatN-est} and note that $p_{N}\asymp\gamma_N$, we have $\phat_{N}^{-2}=O_p(\gamma_N^{-2})$. Under Assumptions \ref{cond:NP-est}, we have
 $(M\gamma_N)^{-1}\sum_{i\in\mathcal J}\Gamma_i\left\{\pihat(\bX_i)-\pi^*(\bX_i)\right\}^2\leq\zeta_N^2$ on the event $\mathcal E_\zeta$, with $\P_{\mathcal D_N'}(\mathcal E_\zeta)=1-o(1)$. In addition, by part (b) of Lemma \ref{lemma:emp}, we also have
\begin{align*}
\frac{\sum_{i\in\mathcal I_j}\Gamma_i|\bS_i^T\bDelta|^2}{M\gamma_N}\leq c\left\{1+\sqrt\frac{t}{N\gamma_N}+\frac{t\log{N}}{N\gamma_N}\right\}\left(\frac{\|\bDelta\|_1^2\log{d}\log{N}}{N\gamma_N}+\|\bDelta\|_2^2\right),
\end{align*}
uniformly for all $\bDelta\in\R^d$ with probability at least $1-3\exp(-t)$ and some constant $c>0$. Set $t=N\gamma_N/\log{N}$ and note that $M\asymp N$, we have
\begin{align*}
\P_{\mathcal D_N'}\left(\sup_{\bDelta\in\R^d\setminus\{\bzero\}}\frac{(M\gamma_N)^{-1}\sum_{i\in\mathcal I_j}\Gamma_i|\bS_i^T\bDelta|^2}{\|\bDelta\|_1^2\log{d}\log{N}/(M\gamma_N)+\|\bDelta\|_2^2}\leq c\right)\geq1-3\exp\{N\gamma_N/\log{N}\}.
\end{align*}
Combining the results above, on the event $\mathcal E_\zeta$, we have uniformly for all $\bDelta\in\R^d$,
\begin{align*}
|\br_\pi^T\bDelta|\leq c\zeta_N\left(\|\bDelta\|_1\sqrt\frac{\log{d}\log{N}}{M\gamma_N}+\|\bDelta\|_2\right)
\end{align*}
with probability at least $1-3\exp\{N\gamma_N/\log{N}\}$ and some constant $c>0$.
\end{proof}

\begin{proof}[Proof of Lemma \ref{lemma:Fdelta}]
Let $S\subset\{1,\dots,d\}$ be the support set of $\bbeta_{p}^*$. For any $\bDelta\in\R^d$, we have $\|\bbeta_{p}^*+\bDelta\|_1=\|\bbeta_{p,1,S}^*+\bDelta_S\|_1+\|\bDelta_{S^c}\|_1\geq\|\bbeta_{p,1,S}^*\|_1-\|\bDelta_S\|_1+\|\bDelta_{S^c}\|_1=\|\bbeta_{p}^*\|_1-\|\bDelta_S\|_1+\|\bDelta_{S^c}\|_1$. On the event $\Bcaltil_2$, we also have
$\|\nabla_{\bbeta}\elltil_{\bbeta}(\bbeta_{p}^*;\phat_{N};\pi^*)\|_\infty\leq\kappa_3\sqrt{\{t+\log{d}\}/(M\gamma_N)}$. Choose $\lambda_{\bbeta}>2\kappa_3\sqrt{\{t+\log{d}\}/(M\gamma_N)}$, then it follows that $|\nabla_{\bbeta}\elltil_{\bbeta}(\bbeta_{p}^*;\phat_{N};\pi^*)^T\bDelta|\leq\lambda_{\bbeta}\|\bDelta\|_1/2\leq\lambda_{\bbeta}(\|\bDelta_{S^c}\|_1+\|\bDelta_S\|_1)/2$. Together with \eqref{def:FDelta}, we have
\begin{align}
\mathcal F(\bDelta)&\geq\delta\elltil_{\bbeta}(\bDelta;\phat_{N};\bbeta_{p}^*;\pihat)+\lambda_{\bbeta}(\|\bDelta_{S^c}\|_1-\|\bDelta_S\|_1-\|\bDelta_{S^c}\|_1/2-\|\bDelta_S\|_1/2)+\br_\pi^T\bDelta\nonumber\\
&=\delta\elltil_{\bbeta}(\bDelta;\phat_{N};\bbeta_{p}^*;\pihat)+\lambda_{\bbeta}(\|\bDelta\|_1-4\|\bDelta_S\|_1)/2+\br_\pi^T\bDelta.\label{eq:basic-SP}
\end{align}
Moreover, on the event $\Bcaltil_3$,
\begin{align}
|\br_\pi^T\bDelta|\leq c\zeta_N\left(\|\bDelta\|_1\sqrt\frac{\log{d}\log{N}}{M\gamma_N}+\|\bDelta\|_2\right)\leq\lambda_{\bbeta}\|\bDelta\|_1/4+c\zeta_N\|\bDelta\|_2,\label{bound:rpi-Delta}
\end{align}
since $\lambda_{\bbeta}>2\kappa_3\sqrt{\log{d}/(M\gamma_N)}$ and $\zeta_N\leq\kappa_3/\{2c\sqrt{\log{N}}\}$ under Assumption \ref{cond:NP-est} when $N$ is large enough. Together with \eqref{eq:basic-SP} and \eqref{bound:deltaell-SP}, we have
\begin{align}
\mathcal F(\bDelta)&\geq \delta\elltil_{\bbeta}(\bDelta;\phat_{N};\bbeta_{p}^*;\pihat)+\lambda_{\bbeta}(\|\bDelta\|_1-4\|\bDelta_S\|_1)/2-\lambda_{\bbeta}\|\bDelta\|_1/4-c\zeta_N\|\bDelta\|_2\nonumber\\
&\geq\delta\elltil_{\bbeta}(\bDelta;\phat_{N};\bbeta_{p}^*;\pihat)+\lambda_{\bbeta}\|\bDelta\|_1/4-(2\sqrt{s_{p}}\lambda_{\bbeta}+c\zeta_N)\|\bDelta\|_2,\label{bound:FDelta}
\end{align}
since $\|\bDelta_S\|_1\leq\sqrt{s_{p}}\|\bDelta_S\|_2\leq\sqrt{s_{p}}\|\bDelta\|_2$.

Additionally, if $\bDelta\in\mathcal K(r_N,1)$, we have $\|\bDelta\|_1\leq r_N\|\bDelta\|_2$ and $\|\bDelta\|_2=1$. Hence, on the event $\Bcaltil_1\cap\mathcal E_p$,
\begin{align}
&\delta\elltil_{\bbeta}(\bDelta;\phat_{N};\bbeta_{p}^*;\pihat)\geq c\left\{\|\bDelta\|_2^2-\frac{\log{d}}{M\gamma_N}\|\bDelta\|_1^2\right\}\nonumber\\
&\qquad\geq c\left\{\|\bDelta\|_2^2-\frac{r_N^2\log{d}}{M\gamma_N}\|\bDelta\|_2^2\right\}\geq c\|\bDelta\|_2^2/2=c/2,\label{bound:deltaell-SP}
\end{align}
when $0<t< 0.01M\gamma_N$ and $N$ is large enough, with some constant $c>0$. Besides, we also have $(2\sqrt{s_{p}}\lambda_{\bbeta}+c\zeta_N)\|\bDelta\|_2\leq c/2$ when $N$ is large enough since $\|\bDelta\|_2=1$ and $2\sqrt{s_{p}}\lambda_{\bbeta}+c\zeta_N\asymp\sqrt{s_{p}\log{d}/(M\gamma_N)}+\zeta_N=o(1)$. Together with \eqref{bound:FDelta} and \eqref{bound:deltaell-SP}, we have
\begin{align*}
\mathcal F(\bDelta)\geq c/4+\lambda_{\bbeta}\|\bDelta\|_1/4>0.
\end{align*}
\end{proof}

\subsection{Proof of Theorem \ref{thm:PS-SP}}
In the following proofs, we will consider the events $\mathcal E_\gamma$, $\mathcal E_p$, $\Bcaltil_1$, $\Bcaltil_2$, and $\Bcaltil_3$, defined as in \eqref{def:Egamma}-\eqref{def:B3}. Additionally, the events $\mathcal E_\pi$ and $\mathcal E_\zeta$ are defined in Assumption \ref{cond:NP-est}, and $\Ecaltil_{\bbetahat}$ is defined in Theorem \ref{thm:PS-SP}.
\begin{proof}[Proof of Theorem \ref{thm:PS-SP}]
{Under Assumption \ref{cond:bound-SP}, \( |\bS^\top\bbeta_{p}^*| < C_0 < C \) almost surely. Consequently, \( \bbeta = \bbeta_{p}^* \) satisfies the constraint \( \max_{i\in\mathcal{I}_k} |\bS_i^\top\bbeta| < C \) almost surely, ensuring that it belongs to the feasible set of the optimization problem defined for \( \bbetahat_{p} \). This guarantees that the feasible set is nonempty. Moreover, by the construction of \( \bbetahat_{p} \), we have}
\begin{align}
&\max_{i\in\mathcal I_1}|\bS_i^\top\bbetahat_{p}|<C,\label{constraint}\\
&\elltil_{\bbeta}(\bbetahat_{p};\phat_{N};\pihat)+\lambda_{\bbeta}\|\bbetahat_{p}\|_1\leq\elltil_{\bbeta}(\bbeta_{p}^*;\phat_{N};\pihat)+\lambda_{\bbeta}\|\bbeta_{p}^*\|_1\nonumber.
\end{align}

Let $\bDelta=\bbetahat_{p}-\bbeta_{p}^*$, then we have $\mathcal F(\bDelta)\leq0$. Condition on the event $\Bcaltil_2\cap\Bcaltil_3$. Together with Lemma \ref{lemma:Fdelta}, when $N$ is large enough, we have
\begin{align}
0\geq\mathcal F(\bDelta)\geq\delta\elltil_{\bbeta}(\bDelta;\phat_{N};\bbeta_{p}^*;\pihat)+\lambda_{\bbeta}\|\bDelta\|_1/4-(2\sqrt{s_{p}}\lambda_{\bbeta}+c\zeta_N)\|\bDelta\|_2.\label{eq:basic-SP'}
\end{align}
Since the loss function $\ell_{\bbeta}(\cdot;\phat_{N};\pihat)$ is convex, $\delta\elltil_{\bbeta}(\bDelta;\phat_{N};\bbeta_{p}^*;\pihat)\geq0$, and it follows that $\lambda_{\bbeta}\|\bDelta\|_1/4\leq (2\sqrt{s_{p}}\lambda_{\bbeta}+c\zeta_N)\|\bDelta\|_2$. That is,
\begin{align}
\|\bDelta\|_1\leq4(2\sqrt{s_{p}}+c\zeta_N/\lambda_{\bbeta})\|\bDelta\|_2\leq\left(8\sqrt{s_{p}}+\frac{2c\zeta_N}{\kappa_3}\sqrt\frac{M\gamma_N}{\log{d}}\right)\|\bDelta\|_2,\label{bound:rN}
\end{align}
since $\lambda_{\bbeta}>2\kappa_3\sqrt{\log{d}/(M\gamma_N)}$. Let $r_N=8\sqrt{s_{p}}+\kappa_3^{-1}2c\zeta_N\sqrt{M\gamma_N/\log{d}}$. Then $\|\bDelta\|_1\leq r_N\|\bDelta_2\|_2$ with $r_N=o(\sqrt{M\gamma_N/\log{d}})$ under Assumption \ref{cond:NP-est} and since $s_{p}=o(M\gamma_N/\log{d})$.

In the following, we further prove that $\|\bDelta\|_2\leq1$ by contradiction. Suppose that $\|\bDelta\|_2>1$. Define $\bDeltatil:=\bDelta/\|\bDelta\|_2$, then $\|\bDeltatil\|_2=1$ and $\|\bDeltatil\|_1=\|\bDelta\|_1/\|\bDelta\|_2\leq r_N=r_N\|\bDeltatil\|_2$. That is, $\bDeltatil\in\mathcal K(r_N,1)$. By Lemma \ref{lemma:Fdelta}, $\mathcal F(\bDeltatil)>0$. Define $u=1/\|\bDelta\|_2$, then $0<u<1$. Note that $\mathcal F(\cdot)$ is a convex function, $\mathcal F(\bzero)=0$, and $\mathcal F(\bDelta)\leq0$. Hence,
\begin{align*}
0<\mathcal F(\bDeltatil)=\mathcal F(u\bDelta+(1-u)\bzero)\leq u\mathcal F(\bDelta)+(1-u)\mathcal F(\bzero)=u\mathcal F(\bDelta)\leq0.
\end{align*}
Therefore, we must have $\|\bDelta\|_2\leq1$.

Now, further condition on $\Bcaltil_1\cap\mathcal E_p$. As in \eqref{bound:deltaell-SP}, we also have
\begin{align}
&\delta\elltil_{\bbeta}(\bDelta;\phat_{N};\bbeta_{p}^*;\pihat)\geq c\left\{\|\bDelta\|_2^2-\frac{\log{d}}{M\gamma_N}\|\bDelta\|_1^2\right\}\nonumber\\
&\qquad\geq c\left\{\|\bDelta\|_2^2-\frac{r_N^2\log{d}}{M\gamma_N}\|\bDelta\|_2^2\right\}\geq c\|\bDelta\|_2^2/2,\label{bound:deltaell-SP-2norm}
\end{align}
when $N$ is large enough, with some constant $c>0$. Together with \eqref{eq:basic-SP'}, we have
\begin{align*}
c\|\bDelta\|_2^2/2+\lambda_{\bbeta}\|\bDelta\|_1/4\leq(2\sqrt{s_{p}}\lambda_{\bbeta}+c\zeta_N)\|\bDelta\|_2.
\end{align*}
It follows that $\bDelta=\bbetahat_{p}-\bbeta_{p}^*$ satisfies
\begin{align}
\|\bbetahat_{p}-\bbeta_{p}^*\|_2&\leq4\sqrt{s_{p}}\lambda_{\bbeta}/c+2\zeta_N,\label{bound:beta-2error-SP}\\
\|\bbetahat_{p}-\bbeta_{p}^*\|_1&\leq r_N\|\bDelta\|_2\leq\left(8\sqrt{s_{p}}+\frac{2c\zeta_N}{\kappa_3}\sqrt\frac{M\gamma_N}{\log{d}}\right)\left(4\sqrt{s_{p}}\lambda_{\bbeta}/c+2\zeta_N\right).\label{bound:beta-1error-SP}
\end{align}
Since $\lambda_{\bbeta}\asymp\sqrt{\log{d}/(M\gamma_N)}$ and $\Bcaltil_1\cap\Bcaltil_2\cap\Bcaltil_3\cap\mathcal E_p\cap\mathcal E_\pi=1-o(1)$, we conclude that
\begin{align*}
\|\bDelta\|_2=O_p\left(\sqrt\frac{s_{p}\log{d}}{M\gamma_N}+\zeta_N\right),\;\;\|\bDelta\|_1=O_p\left(s_{p}\sqrt\frac{\log{d}}{M\gamma_N}+\zeta_N^2\sqrt\frac{M\gamma_N}{\log{d}}\right).
\end{align*}
\end{proof}

\subsection{Proof of the results in Section \ref{sec:nuisance-OR-SP}}

\begin{proof}[Proof of Lemma \ref{lemma:RSC2-SP}]
By Lemma \ref{lemma:RSC-SP}, $p_{N}\asymp\gamma_N$. By the construction of $\bbetahat_{p}$, $\max_{i\in\mathcal J}|\bS_i^T\bbetahat_{p}|<C_0$. Under Assumptions \ref{cond:NP-est} and on the event $\mathcal E_\pi$, we have
\begin{align}
&\delta\elltil_{\balpha}(\bDelta;\phat_{N},\bbetahat;\balphatil^*;\pihat)=(M\phat_{N})^{-1}\sum_{i\in\mathcal J}\frac{\Gamma_i}{\pihat(\bX_i)}\exp(-\bS_i^T\bbetahat_{p})(\bS_i^T\bDelta)^2\nonumber\\
&\qquad\geq\exp(-C_0)(c_0M\phat_{N})^{-1}\sum_{i\in\mathcal J}\Gamma_i(\bS_i^T\bDelta)^2.\label{bound:deltaell-SP'}
\end{align}
Hence, Lemma \ref{lemma:RSC2-SP} follows directly from Lemma \ref{lemma:RSC_gen}, with $a=\phat_{N}$, $v=0$, and $\phi(u)\equiv1$ using the fact that $p_{N}\asymp\gamma_N$.
\end{proof}

\begin{proof}[Proof of Lemma \ref{lemma:gradient2-SP}]
For each $1\leq j\leq d$ and $i\in\mathcal J$, we have
\begin{align*}
&\E_{\mathcal D_N'}\left\{\bV_{i,1}(j)\mid(R_i,T_i,\bX_i)_{i\in\mathcal J}\right\}\\
&\qquad\overset{(i)}{=}-2\Gamma_i\pihat^{-1}(\bX_i)\exp(-\bS_i^T\bbetahat_{p})\bS_i(j)\E_{\mathcal D_N'}\left\{\varepsilon_{i,1}\mid(R_i,T_i,\bX_i)_{i\in\mathcal J}\right\}\\
&\qquad\overset{(ii)}{=}-2\Gamma_i\pihat^{-1}(\bX_i)\exp(-\bS_i^T\bbetahat_{p})\bS_i(j)\E_{\mathcal D_N'}\left(\varepsilon_{i,1}\mid\bX_i\right)=0,
\end{align*}
where we denote $\bx(j)$ as the $j$-th element of any vector $\bx$. Here, (i) holds since by construction, $\pihat(\cdot)$ and $\bbetahat_{p}$ only depend on $(R_i,T_i,\bX_i)_{i\in\mathcal J}$; (ii) holds under Assumption \ref{cond:basic}. Note that $(\bV_{i,1}(j))_{i\in\mathcal J}$ are independent conditional on $(R_i,T_i,\bX_i)_{i\in\mathcal J}$, by Proposition 2.5 (Hoeffding bound) of \cite{wainwright2019high}, for any $u>0$,
$$\P_{\mathcal D_N'}\left(\left|\sum_{i\in\mathcal J}\bV_{i,1}(j)\right|>u\mid(R_i,T_i,\bX_i)_{i\in\mathcal J}\right)\leq2\exp\left(-\frac{u^2}{2\sum_{i\in\mathcal J}\sigma_{ij}^2}\right),$$
if $\bV_{i,1}(j)$ is sub-gaussian with parameter $\sigma_{ij}>0$ conditional on $(R_i,T_i,\bX_i)_{i\in\mathcal J}$. Condition on $\mathcal E_\pi$. Under Assumption \ref{cond:subG'}, together with \eqref{constraint}, $\sigma_{ij}^2$ can be chosen such that
$$\sigma_{ij}^2\leq4\sigma_\varepsilon^2\Gamma_i\pihat^{-2}(\bX_i)\exp(-2\bS_i^T\bbetahat_{p})\bS_i^2(j)\leq c^2\Gamma_i\bS_i^2(j),$$
with some constant $c>0$. Let $u=c(1+t)\sqrt{\log{d}\sum_{i\in\mathcal J}\Gamma_i\bS_i^2(j)}$. By the union bound,
\begin{align*}
\P_{\mathcal D_N'}\left(\left\|\sum_{i\in\mathcal J}\bV_{i,1}\right\|_\infty>u\mid(R_i,T_i,\bX_i)_{i\in\mathcal J}\right)\leq2\exp\{-t\log{d}\}.
\end{align*}
By Lemmas \ref{lemma:phatN-est} and \ref{lemma:RSC-SP}, with probability at least $1-4\exp(-t)$, $\phat_{N}\geq0.46p_{N}\asymp\gamma_N$. Under Assumptions \ref{cond:subG} and \ref{cond:tail},
\begin{align*}
\E\left\{\sum_{i\in\mathcal J}\Gamma_i\bS_i^2(j)\right\}&=M\E\{\gamma_N(\bX)\bS_i^2(j)\}=O(M\gamma_N),\\
\Var\left\{\sum_{i\in\mathcal J}\Gamma_i\bS_i^2(j)\right\}&=M\Var\{\Gamma_i\bS_i^2(j)\}\leq M\E\{\gamma_N(\bX)\bS_i^4(j)\}=O(M\gamma_N).
\end{align*}
By Chebyshev's inequality, $\sum_{i\in\mathcal J}\Gamma_i\bS_i^2(j)\leq c(M\gamma_N+\sqrt{tM\gamma_N})\leq c'M\gamma_N$ with probability at least $1-t^{-1}$ and some constant $c,c'>0$ when $0<t<0.01Mp_{N}\asymp M\gamma_N$. Hence, with probability at least $1-4\exp(-t)-t^{-1}$ and some constant $\kappa_4>0$,
\begin{align*}
\frac{u}{M\phat_{N}}&\leq\frac{\kappa_4(1+t)\sqrt{\log{d}M\gamma_N}}{M\gamma_N}=\kappa_4(1+t)\sqrt\frac{\log{d}}{M\gamma_N}.
\end{align*}
Therefore, on the event $\mathcal E_\pi$,
$$\left\|(M\phat_{N})^{-1}\sum_{i\in\mathcal J}\bV_{i,1}\right\|_\infty\leq\kappa_4(1+t)\sqrt\frac{\log{d}}{M\gamma_N},$$
with probability at least $1-2\exp\{-t\log{d}\}-4\exp(-t)-t^{-1}$.
\end{proof}

\begin{proof}[Proof of Lemma \ref{lemma:gradient3-SP}]
{By the construction of $\balphatil^*$, we have $\E(\bV_{i,2})=\bzero\in\R^d$. Under Assumptions \ref{cond:NP-est}, \ref{cond:bound-SP}, and \ref{cond:approx}, we have $\sup_{1\leq j\leq d}\|\bV_{i,2}(j)\|_{\psi_2}\leq2c_0^{-1}\exp(C_0)\sigma e_m$. In addition, under Assumption \ref{cond:tail}, with some $r>1$ satisfies $1/r+1/q=1$,
\begin{align*}
&\sup_{1\leq j\leq d}\E\{\bV_{i,2}^2(j)\}\leq4c_0^{-2}\exp(2C_0)e_m^2\E\{\gamma_N(\bX_i)\bS_i^2(j)\}\\
&\qquad\leq4c_0^{-2}e_m^2\exp(2C_0)\|\gamma_N(\cdot)\|_{\P,q}\|\bS_i(j)\|_{\P,2r}^2=O(\gamma_Ne_m^2).
\end{align*}
By Theorem 3.4 of \cite{kuchibhotla2022moving}, with some constant $c>0$,
\begin{align*}
&\P\left(\left\|M^{-1}\sum_{i\in\mathcal J}\bV_{i,2}\right\|_\infty>ce_m\sqrt\frac{\gamma_N\{t+\log{d}\}}{M}+ce_m\frac{\sqrt{\log{M}}\{t+\log{d}\}}{M}\right)\\
&\qquad\leq3\exp(-t).
\end{align*}
Together with Lemma \ref{lemma:phatN-est}, when $0<t<M\gamma_N/\{100+\log{M}\log{d}\}$ and $M\gamma_N>C_1\log{M}\log{d}$,
$$\P\left(\left\|(M\phat_{N})^{-1}\sum_{i\in\mathcal J}\bV_{i,2}\right\|_\infty>\kappa_5\sqrt\frac{t+\log{d}}{M\gamma_N}\right)\leq7\exp(-t),$$
with some constant $\kappa_5>0$, as long as $e_m=O(1)$.}
\end{proof}

\subsection{Proof of Theorem \ref{thm:OR-SP-mis}}

\begin{proof}[Proof of Theorem \ref{thm:OR-SP-mis}]
For any $0<t<M\gamma_N/\{100+\log{M}\log{d}\}$, consider some $\lambda_{\bbeta}>2\kappa_3\sqrt{\{t+\log{d}\}(M\gamma_N)}$ and $\lambda_{\balpha}>4(\kappa_4+\kappa_5)\sqrt{\{t+\log{d}\}(M\gamma_N)}$. Condition on the event $\Acaltil_2\cap\Acaltil_3$. Then, we have
\begin{align}\label{event:gradient}
4\left\|(M\phat_{N})^{-1}\sum_{i\in\mathcal J}(\bV_{i,1}+\bV_{i,2})\right\|_\infty\leq\lambda_{\balpha}.
\end{align}
By Lemmas \ref{lemma:gradient2-SP} and \ref{lemma:gradient3-SP}, $\P_{\mathcal D_N'}(\Acaltil_2\cap\Acaltil_3)\geq1-2\exp\{-t\log{d}\}-11\exp(-t)-t^{-1}-o(1).$
By the construction of $\balphatil$, we have
\begin{align*}
&\elltil_{\balpha}(\balphatil;\phat_{N},\bbetahat_{p};\pihat)+\lambda_{\balpha}\|\balphatil\|_1\leq\elltil_{\balpha}(\balphatil^*;\phat_{N},\bbetahat_{p};\pihat)+\lambda_{\balpha}\|\balphatil^*\|_1.
\end{align*}
Let $\bDelta=\balphatil-\balphatil^*$. Then,
$$\delta\elltil_{\balpha}(\bDelta;\phat_{N},\bbetahat;\balphatil^*;\pihat)+\lambda_{\balpha}\|\balphatil\|_1\leq\nabla_{\balpha}\elltil_{\balpha}(\balphatil^*;\phat_{N},\bbetahat_{p};\pihat)^T\bDelta+\lambda_{\balpha}\|\balphatil^*\|_1.$$
{Together with \eqref{rep:gradient} and \eqref{event:gradient},
\begin{align}
&\delta\elltil_{\balpha}(\bDelta;\phat_{N},\bbetahat;\balphatil^*;\pihat)+\lambda_{\balpha}\|\balphatil\|_1\nonumber\\
&\qquad\leq(M\phat_{N})^{-1}\sum_{i\in\mathcal J}(\bV_{i,1}+\bV_{i,2})^T\bDelta+(\br_{p}+\br_{p,2})^T\bDelta+\lambda_{\balpha}\|\balphatil^*\|_1\nonumber\\
&\qquad\leq\lambda_{\balpha}\|\bDelta\|_1/4+(\br_{p}+\br_{p,2})^T\bDelta+\lambda_{\balpha}\|\balphatil^*\|_1.\label{eq:basic-OR-SP}
\end{align}

Now, we first consider the error term $\br_{p}^T\bDelta$. For any $a_1>0$, on the event $\mathcal E_\zeta$, we have
\begin{align*}
|\br_{p}^T\bDelta|&\leq(M\phat_{N})^{-1}\sum_{i\in\mathcal J}\Gamma_i\left[a_1\left\{\frac{1}{\pihat(\bX_i)}-\frac{1}{\pi^*(\bX_i)}\right\}^2\varepsilon_{i,2}^2+\exp(-2\bS_i^T\bbeta_{p}^*)|\bS_i^T\bDelta|^2/a_1\right]\\
&\leq(\gamma_N/\phat_{N})a_1t\zeta_N^2e_m^2+\exp(2C_0)a_1^{-1}(M\phat_{N})^{-1}\sum_{i\in\mathcal J}\Gamma_i|\bS_i^T\bDelta|^2.
\end{align*}
Condition on the event $\mathcal E_p$. By Lemma \ref{lemma:phatN-est}, $\phat_{N}^{-1}\leq(0.46p_{N})^{-1}\leq2.2\gamma_N^{-1}$ as $\gamma_N\leq p_{N}$. Additionally, by \eqref{bound:deltaell-SP'}, we have
\begin{align*}
(M\phat_{N})^{-1}\sum_{i\in\mathcal J}\Gamma_i|\bS_i^T\bDelta|^2\leq c_0\exp(C_0)\delta\elltil_{\balpha}(\bDelta;\phat_{N},\bbetahat;\balphatil^*;\pihat).
\end{align*}
Choose $a_1=4c_0\exp(3C_0)$. Then,
\begin{align}
|\br_{p}^T\bDelta|\leq 2.2a_1t\zeta_N^2e_m^2+\delta\elltil_{\balpha}(\bDelta;\phat_{N},\bbetahat;\balphatil^*;\pihat)/4.\label{bound:rp1}
\end{align}
Now, we control the term $\br_{p,2}^T\bDelta$. By Taylor's theorem, with some $\bbetatil$ lies between $\bbeta_{p}^*$ and $\bbetahat_{p}$, we have
\begin{align*}
|\br_{p,2}^T\bDelta|&=\left|2(M\phat_{N})^{-1}\sum_{i\in\mathcal J}\exp(-\bS_i^T\bbetatil)\Gamma_i\bS_i^T(\bbetahat_{p}-\bbeta_{p}^*)\bS_i^T\bDelta\varepsilon_{i,2}/\pihat(\bX_i)\right|\\
&\leq\exp(C_0)(c_0M\phat_{N})^{-1}\sum_{i\in\mathcal J}\Gamma_i\left[a_2e_m^2\left\{\bS_i^T(\bbetahat_{p}-\bbeta_{p}^*)\right\}^2+a_2^{-1}|\bS_i^T\bDelta|^2\right],
\end{align*}
for any $a_2>0$ since $\max_{i\in\mathcal J}|\bS_i^T\bbetatil|\leq\max(\max_{i\in\mathcal J}|\bS_i^T\bbeta^*|,\max_{i\in\mathcal J}|\bS_i^T\bbetahat_{p}|)<C_0$ almost surely. Choose $s_0:=\lceil N\gamma_N/\{\log{d}\log{N}\}\rceil$, then $s_{p}=o(s_0)$ and $\zeta_N^2\lambda_{\bbeta}^2/s_0=o(1)$ since $N\gamma_N\gg\log{d}\log{N}$, $\zeta_N=o(1)$, $\lambda_{\bbeta}=o(1)$, and $s_{p}=o(N\gamma_N/\{\log{d}\log{N}\})$. When \eqref{bound:beta-2error-SP} and \eqref{bound:beta-1error-SP} hold, we have
\begin{align*}
&\frac{\|\bbetahat_{p}-\bbeta_{p}^*\|_1^2}{s_0}+\|\bbetahat_{p}-\bbeta_{p}^*\|_2^2=O\left((s_{p}\lambda_{\bbeta}^2+\zeta_N^2)(1+s_{p}/s_0)+(\lambda_{\bbeta}^2+\zeta_N^2)\zeta_N^2\lambda_{\bbeta}^2/s_0\right)\\
&\qquad=O(s_{p}\lambda_{\bbeta}^2+\zeta_N^2).
\end{align*}
Together with part (b) of Lemma \ref{lemma:emp} and the fact that $\phat_{N}^{-1}\leq2.2\gamma_N^{-1}$, we also have
\begin{align*}
(M\phat_{N})^{-1}\sum_{i\in\mathcal J}\Gamma_i\left\{\bS_i^T(\bbetahat_{p}-\bbeta_{p}^*)\right\}^2\leq c(s_{p}\lambda_{\bbeta}^2+\zeta_N^2)
\end{align*}
with probability at least $1-3\exp(-t)$ and some constant $c>0$ when $N$ is large enough, conditional on $\Bcaltil_1\cap\Bcaltil_2\cap\Bcaltil_3\cap\mathcal E_p\cap\mathcal E_\pi$ for any $0<t<M\gamma_N/\{100+\log{M}\log{d}\}$. In addition, chose $a_2=4c_0\exp(2C_0)$. Together with \eqref{bound:deltaell-SP'}, we also have
\begin{align*}
\exp(C_0)(a_2c_0M\phat_{N})^{-1}\sum_{i\in\mathcal J}\Gamma_i|\bS_i^T\bDelta|^2\leq\delta\elltil_{\balpha}(\bDelta;\phat_{N},\bbetahat;\balphatil^*;\pihat)/4.
\end{align*}
Therefore,
$$|\br_{p,2}^T\bDelta|\leq4\exp(3C_0)c(s_{p}\lambda_{\bbeta}^2+\zeta_N^2)e_m^2+\delta\elltil_{\balpha}(\bDelta;\phat_{N},\bbetahat;\balphatil^*;\pihat)/4.$$
Together with \eqref{bound:rp1}, with some constant $c>0$,
\begin{align*}
|(\br_{p}+\br_{p,2})^T\bDelta|\leq c\{s_{p}\lambda_{\bbeta}^2+(1+t)\zeta_N^2\}e_m^2+\delta\elltil_{\balpha}(\bDelta;\phat_{N},\bbetahat;\balphatil^*;\pihat)/2.
\end{align*}
Besides, with $S\subseteq\{1,\dots,d\}$ denoting the support set of $\balphatil^*$, we have $\|\balphatil\|_1=\|\balphatil_S\|_1+\|\bDelta_{S^c}\|_1\geq\|\balphatil_S^*\|_1-\|\bDelta_S\|_1+\|\bDelta_{S^c}\|_1=\|\balphatil^*\|_1-\|\bDelta_S\|_1+\|\bDelta_{S^c}\|_1$. Together with \eqref{eq:basic-OR-SP}, it follows that
\begin{align*}
&\delta\elltil_{\balpha}(\bDelta;\phat_{N},\bbetahat;\balphatil^*;\pihat)+\lambda_{\balpha}(\|\bDelta_{S^c}\|_1-\|\bDelta_S\|_1)\\
&\qquad\leq\lambda_{\balpha}\|\bDelta\|_1/2+c\{s_{p}\lambda_{\bbeta}^2+(1+t)\zeta_N^2\}e_m^2+\delta\elltil_{\balpha}(\bDelta;\phat_{N},\bbetahat;\balphatil^*;\pihat)/2.
\end{align*}
That is,
\begin{align}
\delta\elltil_{\balpha}(\bDelta;\phat_{N},\bbetahat;\balphatil^*;\pihat)+\lambda_{\balpha}\|\bDelta_{S^c}\|_1\leq3\lambda_{\balpha}\|\bDelta_S\|_1+2c\{s_{p}\lambda_{\bbeta}^2+(1+t)\zeta_N^2\}e_m^2.\label{eq:basic-OR-SP'}
\end{align}
\textbf{Case 1.} If $\lambda_{\balpha}\|\bDelta_S\|_1\leq2c\{s_{p}\lambda_{\bbeta}^2+(1+t)\zeta_N^2\}e_m^2$. Then, we have
$$\lambda_{\balpha}\|\bDelta_{S^c}\|_1\leq8c\{s_{p}\lambda_{\bbeta}^2+(1+t)\zeta_N^2\}e_m^2,$$
and hence
$$\|\bDelta\|_1=\|\bDelta_{S}\|_1+\|\bDelta_{S^c}\|_1\leq ce_m^2\left\{(t+1)\zeta_N^2+\frac{s_{p}\log{d}}{M\gamma_N}\right\}\sqrt{\frac{M\gamma_N}{\log{d}}},$$
with some constant $c>0$ when $\lambda_{\balpha}\asymp\lambda_{\bbeta}\asymp\sqrt\frac{t+\log{d}}{M\gamma_N}$. Additionally, by \eqref{eq:basic-OR-SP'} and on the event $\Acaltil_1\cap\mathcal E_p$, we have
\begin{align*}
\|\bDelta\|_2^2&\leq\frac{\phat_{N}}{\kappa_1p_{N}}\left\{\delta\elltil_{\balpha}(\bDelta;\phat_{N},\bbetahat;\balphatil^*;\pihat)+\frac{\kappa_2\log{d}\|\bDelta\|_1^2}{M\gamma_N}\right\}\\
&\leq c\{(t+1)\zeta_N^2+s_{p}\lambda_{\bbeta}^2\}e_m^2+ce_m^4\left\{(t+1)\zeta_N^2+\frac{s_{p}\log{d}}{M\gamma_N}\right\}^2\frac{M\gamma_N}{\log{d}}\cdot\frac{\log{d}}{M\gamma_N},
\end{align*}
with some constant $c>0$. As $\zeta_N=o(1)$ and $s_{p}=o(M\gamma_N/\log{d})$, when $N$ is large enough, we have
\begin{align*}
\|\bDelta\|_2\leq ce_m\left\{(t+1)\zeta_N+\sqrt\frac{s_{p}\log{d}}{M\gamma_N}\right\},
\end{align*}
with some constant $c>0$.

\textbf{Case 2.} If $\lambda_{\balpha}\|\bDelta_S\|_1>2c\{s_{p}\lambda_{\bbeta}^2+(1+t)\zeta_N^2\}e_m^2$. Then, \eqref{eq:basic-OR-SP'} implies that $\|\bDelta_{S_{\balpha}^c}\|_1\leq 4\|\bDelta_{S_{\balpha}}\|_1$. Hence,
\begin{align}
\|\bDelta\|_1=\|\bDelta_{S}\|_1+\|\bDelta_{S^c}\|_1\leq5\|\bDelta_{S}\|_1\leq5\sqrt{s_{\balphatil}}\|\bDelta_{S}\|_2\leq5\sqrt{s_{\balphatil}}\|\bDelta\|_2.\label{eq:l2l1norm-SP}
\end{align}
Besides, on the event $\Acaltil_1\cap\mathcal E_p$ and together with \eqref{eq:basic-OR-SP'}, we also have
\begin{align*}
\|\bDelta\|_2^2&\leq c\left\{\delta\elltil_{\balpha}(\bDelta;\phat_{N},\bbetahat;\balphatil^*;\pihat)+\frac{\log{d}}{M\gamma_N}\|\bDelta\|_1^2\right\}\\
&\leq c\left\{4\lambda_{\balpha}\|\bDelta_S\|_1+\frac{\log{d}}{M\gamma_N}\|\bDelta\|_1^2\right\}\leq c\left\{4\lambda_{\balpha}\|\bDelta\|_1+\frac{\log{d}}{M\gamma_N}\|\bDelta\|_1^2\right\},
\end{align*}
with some constant $c>0$. Meanwhile, by \eqref{eq:l2l1norm-SP}, we also have $\|\bDelta\|_2^2\geq\|\bDelta\|_1^2/(25s_{\balphatil})$. Since $s_{\balphatil}=o(M\gamma_N/\log{d})$ and $\lambda_{\balpha}\asymp\sqrt{\log{d}/(M\gamma_N)}$, we have
\begin{align*}
\|\bDelta\|_1\leq cs_{\balphatil}\sqrt\frac{\log{d}}{M\gamma_N},
\end{align*}
with some constant $c>0$ when $N$ is large enough. Together with \eqref{eq:l2l1norm-SP}, we also have
\begin{align*}
\|\bDelta\|_2\leq c\sqrt\frac{s_{\balphatil}\log{d}}{M\gamma_N},
\end{align*}
with some constant $c>0$.

Lastly, combining the results in Lemmas \ref{lemma:phatN-est}, \ref{lemma:RSC-SP}, \ref{lemma:gradient-SP}, \ref{lemma:RSC2-SP}, \ref{lemma:gradient2-SP}, \ref{lemma:gradient3-SP}, and Theorem \ref{thm:PS-SP}, we conclude that
\begin{align*}
\|\balphatil-\balphatil^*\|_1&=O_p\left((s_{\balphatil}+e_m^2s_{p})\sqrt\frac{\log{d}}{M\gamma_N}+e_m^2\zeta_N^2\sqrt\frac{M\gamma_N}{\log{d}}\right),\\
\|\balphatil-\balphatil^*\|_2&=O_p\left(\sqrt\frac{(s_{\balphatil}+e_ms_{p})\log{d}}{M\gamma_N}+e_m\zeta_N\right).
\end{align*}}
\end{proof}

\section{Proof of the properties of the de-coupled BRSS estimator}\label{sec:proof_BRDR-SP}

Recall that $M=N/2$. For each $k\in\{1,2\}$ and $k'=3-k$, we have
$$\thetahat_{\mbox{\tiny 1,DC-BRSS}}^{(k)}= M^{-1}\sum_{i\in\mathcal I_k}\left\{\bS_i^T\balphatil^{(k')}+\frac{\Gamma_i}{\phat_{N}^{(k)}(\bX_i)\pihat^{(k)}(\bX_i)}(Y_i-\bS_i^T\balphatil^{(k')})\right\},$$
where $\phat_{N}^{(k)}(\bX_i)=g(\bS_i^T\bbetahat_{p}^{(k)}+\log(\phat_{N}^{(k)}))$. In this section, we consider the case that $k=1$, the case for $k=2$ will follow analogously. {Here, we have
\begin{align}
\thetahat_{\mbox{\tiny 1,DC-BRSS}}^{(1)}-\theta_1= M^{-1}\sum_{i\in\mathcal I_1}\psitil_{N}^*(\bZ_i)+\Delta_1+\Delta_2+\Delta_3,\label{eq:Delta123}
\end{align}
where
\begin{align*}
\psitil_{N}^*(\bZ_i)&:=\bS_i^T\balphatil^*+\frac{\Gamma_i}{\gammatil_N^*(\bX_i)}(Y_i-\bS_i^T\balphatil^*)-\theta_1,\\
\Delta_1&:=M^{-1}\sum_{i\in\mathcal I_1}\left\{1-\frac{\Gamma_i}{\gammatil_N^{(1)}(\bX_i)}\right\}\bS_i^T(\balphatil^{(2)}-\balphatil^*),\\
\Delta_2&:=M^{-1}\sum_{i\in\mathcal I_1}\left\{\frac{\Gamma_i}{\gammatil_N^{(1)}(\bX_i)}-\frac{\Gamma_i}{\gammatil_N^*(\bX_i)}\right\}\varepsilon_{i,1},\\
\Delta_3&:=M^{-1}\sum_{i\in\mathcal I_1}\left\{\frac{\Gamma_i}{\gammatil_N^{(1)}(\bX_i)}-\frac{\Gamma_i}{\gammatil_N^*(\bX_i)}\right\}\varepsilon_{i,2},
\end{align*}
with $\gammatil_N^{(1)}(\cdot):=\pihat^{(1)}(\cdot)\phat_{N}^{(1)}(\cdot)$ and $\gammatil_N^*(\cdot):=\pi^*(\cdot)p_{N}^*(\cdot)$. Additionally, recall that $\varepsilon_{i,1}:=Y_i(1)-m(\bX_i)$ and $\varepsilon_{i,2}:=m(\bX_i)-m^*(\bX_i)$ with $\varepsilon_i=\varepsilon_{i,1}+\varepsilon_{i,2}$.}

Based on the constructions of $\bbetahat_{p}^{(1)}$ and $\balphatil^{(2)}$ and by the Karush-Kuhn-Tucker (KKT) conditions, we have the following \emph{covariate balancing equations}:
\begin{align}
\left\| M^{-1}\sum_{i\in\mathcal I_1}\left\{1-\frac{\Gamma_i}{\gammatil_N^{(1)}(\bX_i)}\right\}\bS_i\right\|_\infty&\leq\lambda_{\bbeta},\label{bound:KKT1}\\
\left\| M^{-1}\sum_{i\in\mathcal I_2}\frac{\Gamma_i\exp(-\bX_i\bbetahat_{p}^{(2)})}{\phat_{N}^{(2)}\pihat^{(2)}(\bX_i)}(Y_i-\bS_i^T\balphatil^{(2)})\bS_i\right\|_\infty&\leq\lambda_{\balpha}.\label{bound:KK2}
\end{align}
Additionally, by Theorems \ref{thm:PS-SP} and \ref{thm:OR-SP-mis}, with some $\lambda_{\bbeta}\asymp\lambda_{\balpha}\asymp\sqrt{\log{d}/(N\gamma_N)}$,
\begin{align}
\|\bbetahat_{p}^{(1)}-\bbeta_{p}^*\|_1&=O_p\left(s_{p}\sqrt\frac{\log{d}}{N\gamma_N}+\zeta_N^2\sqrt\frac{N\gamma_N}{\log{d}}\right),\label{rate:betatil1}\\
\|\bbetahat_{p}^{(1)}-\bbeta_{p}^*\|_2&=O_p\left(\sqrt\frac{s_{p}\log{d}}{N\gamma_N}+\zeta_N\right),\label{rate:betatil2}\\
\|\balphatil^{(2)}-\balphatil^*\|_1&=O_p\left((s_{\balphatil}+e_m^2s_{p})\sqrt\frac{\log d}{N\gamma_N}+e_m^2\zeta_N^2\sqrt\frac{N\gamma_N}{\log d}\right),\label{rate:alphatil1}\\
\|\balphatil^{(2)}-\balphatil^*\|_2&=O_p\left(\sqrt\frac{(s_{\balphatil}+e_m^2s_{p})\log d}{N\gamma_N}+e_m\zeta_N\right).\label{rate:alphatil2}
\end{align}

\begin{lemma}\label{lemma:Delta1}
{Let Assumptions \ref{cond:basic}-\ref{cond:approx} hold and $N\gamma_N\gg\max(\log N,s_{p},s_{\balphatil})\log d$ with $\lambda_{\bbeta}\asymp\lambda_{\balpha}\asymp\sqrt{\log d/(N\gamma_N)}$. Then, as $N,d\to\infty$,
$$\Delta_1=O_p\left(\frac{(s_{\balphatil}+e_m^2s_{p})\log d}{N\gamma_N}+e_m^2\zeta_N^2\right)=o_p\left((N\gamma_N)^{-1/2}\right),$$
when $s_{\balphatil}+e_m^2s_{p}=o(\sqrt{N\gamma_N}/\log{d})$ and $e_m^2\zeta_N^2=o((N\gamma_N)^{-1/2})$.}
\end{lemma}

\begin{proof}[Proof of Lemma \ref{lemma:Delta1}]
{Observe that
\begin{align*}
|\Delta_1|&=\left|M^{-1}\sum_{i\in\mathcal I_1}\left\{1-\frac{\Gamma_i}{\gammatil_N^{(1)}(\bX_i)}\right\}\bS_i^T(\balphatil^{(2)}-\balphatil^*)\right|\\
&\leq\left\| M^{-1}\sum_{i\in\mathcal I_1}\left\{1-\frac{\Gamma_i}{\gammatil_N^{(1)}(\bX_i)}\right\}\bS_i\right\|_\infty\|\balphatil^{(2)}-\balphatil^*\|_1\\
&\overset{(i)}{\leq}\lambda_{\bbeta}\|\balphatil^{(2)}-\balphatil^*\|_1\overset{(ii)}{=}O_p\left(\frac{(s_{\balphatil}+e_m^2s_{p})\log d}{N\gamma_N}+e_m^2\zeta_N^2\right),
\end{align*}
where (i) holds by \eqref{bound:KKT1}; (ii) holds by \eqref{rate:alphatil1} and $\lambda_{\bbeta}\asymp\sqrt{\log{d}/(N\gamma_N)}$.}
\end{proof}

\begin{lemma}\label{lemma:Delta2}
Let Assumptions \ref{cond:basic}-\ref{cond:subG'} hold, $N\gamma_N\gg\max(\log N,s_{p}\log^{1/2}N)\log{d}$, and $\zeta_N=o(\log^{-1/4}{N})$ with some $\lambda_{\bbeta}\asymp\sqrt{\log{d}/(N\gamma_N)}$. Then, as $N,d\to\infty$,
$$\Delta_2=o_p\left((N\gamma_N)^{-1/2}\right).$$
\end{lemma}

\begin{proof}[Proof of Lemma \ref{lemma:Delta2}]
By construction, note that both $\pihat^{(1)}(\cdot)$ and $\phat_{N}^{(1)}(\cdot)$ are constructed based on training samples $\mathcal S:=(R_i,T_i,\bX_i)_{i\in\mathcal D_N^{(1)}}$. Hence, under Assumption \ref{cond:basic},
\begin{align*}
\E_{\mathcal D_N^{(1)}}(\Delta_2\mid\mathcal S)&=\E_{\mathcal D_N^{(1)}}\left[ M^{-1}\sum_{i\in\mathcal I_1}\left\{\frac{\Gamma_i}{\gammatil_N^{(1)}(\bX_i)}-\frac{\Gamma_i}{\gammatil_N^*(\bX_i)}\right\}\varepsilon_{i,1}\mid\mathcal S\right]=0.
\end{align*}
It follows that
\begin{align*}
&\E_{\mathcal D_N^{(1)}}\left(\Delta_2^2\mid\mathcal S\right)=\E_{\mathcal D_N^{(1)}}\left(\left[ M^{-1}\sum_{i\in\mathcal I_1}\left\{\frac{\Gamma_i}{\gammatil_N^{(1)}(\bX_i)}-\frac{\Gamma_i}{\gammatil_N^*(\bX_i)}\right\}\varepsilon_{i,1}\right]^2\mid\mathcal S\right)\\
&\qquad= M^{-2}\sum_{i\in\mathcal I_1}\E(\varepsilon_{i,1}^2|\bX_i)\Gamma_i\left\{\frac{1}{\gammatil_N^{(1)}(\bX_i)}-\frac{1}{\gammatil_N^*(\bX_i)}\right\}^2\\
&\qquad=O\left(M^{-2}\sum_{i\in\mathcal I_1}\Gamma_i\left\{\frac{1}{\gammatil_N^{(1)}(\bX_i)}-\frac{1}{\gammatil_N^*(\bX_i)}\right\}^2\right)
\end{align*}
under Assumption \ref{cond:subG'}. On the event $\mathcal E_\pi$ and under Assumptions \ref{cond:NP-est} and \ref{cond:bound-SP}, we have $1/\pihat^{(1)}(\bX_i)\leq c_0^{-1}$, $1/\pi^*(\bX_i)\geq c_0^{-1}$, and $1/p_{N}^*(\bX_i)=1+p_{N}^{-1}\exp(-\bS_i^T\bbeta_{p}^*)\leq p_{N}^{-1}\{1+\exp(C_0)\}$ for all $i\in\mathcal I_1$ almost surely. Hence,
\begin{align*}
&\left\{\frac{1}{\gammatil_N^{(1)}(\bX_i)}-\frac{1}{\gammatil_N^*(\bX_i)}\right\}^2\\
&\quad\leq\frac{2}{\{\pihat^{(1)}(\bX_i)\}^2}\left\{\frac{1}{\phat_{N}^{(1)}(\bX_i)}-\frac{1}{p_{N}^*(\bX_i)}\right\}^2+\frac{2}{\{p_{N}^*(\bX_i)\}^2}\left\{\frac{1}{\pihat^{(1)}(\bX_i)}-\frac{1}{\pi^*(\bX_i)}\right\}^2\\
&\quad\leq2c_0^{-2}\left\{\frac{1}{\phat_{N}^{(1)}(\bX_i)}-\frac{1}{p_{N}^*(\bX_i)}\right\}^2+\frac{2\{1+\exp(C_0)\}^2}{c_0^4p_{N}^2}\left\{\pihat^{(1)}(\bX_i)-\pi^*(\bX_i)\right\}^2.
\end{align*}
Let $\bDelta_{\bbeta}^{(j)}:=\bbetahat_{p}^{(j)}-\bbeta_{p}^*+\log(\Delta_{p}^{(j)})\be_1$ and $\Delta_{p}^{(j)}:=\phat_{N}^{(j)}/p_{N}$ for each $j\in\{1,2\}$, where $\be_l\in\R^d$ denotes the $l$-th column of an identity matrix. By Lemma \ref{lemma:phatN-est} and the fact that $1-u^{-1}\leq\log(u)\leq u-1$ for all $u>0$, we have $\phat_{N}^{(j)}=O_p(\gamma_N)$ and
\begin{align}
&\{\log(\Delta_{p}^{(j)})\}^2\leq(\Delta_{p}^{(j)}-1)^2+(1-1/\Delta_{p}^{(j)})^2\nonumber\\
&\qquad=\left(\frac{\phat_{N}^{(j)}-p_{N}}{p_{N}}\right)^2+\left(\frac{\phat_{N}^{(j)}-p_{N}}{\phat_{N}^{(j)}}\right)^2=O_p\left((N\gamma_N)^{-1}\right).\label{logdelta}
\end{align}
Together with \eqref{rate:betatil1} and \eqref{rate:betatil2}, we have
\begin{align}
&\|\bDelta_{\bbeta}^{(j)}\|_1=O_p\left(s_{p}\sqrt\frac{\log{d}}{N\gamma_N}+\zeta_N^2\sqrt\frac{N\gamma_N}{\log{d}}\right),\;\;\|\bDelta_{\bbeta}^{(j)}\|_2=O_p\left(\sqrt\frac{s_{p}\log{d}}{N\gamma_N}+\zeta_N\right).\label{rate:betatil-Delta}
\end{align}
By Taylor's theorem, for each $i\in\mathcal I_1$, with some $v_i\in(0,1)$,
\begin{align*}
&\left\{\frac{1}{\phat_{N}^{(1)}(\bX_i)}-\frac{1}{p_{N}^*(\bX_i)}\right\}^2= p_{N}^{-2}\exp(-2\bS_i^T\bbeta_{p}^*)\left\{\exp(-\bS_i^T\bDelta_{\bbeta}^{(1)})-1\right\}^2\\
&\qquad= p_{N}^{-2}\exp(-2\bS_i^T\bbeta_{p}^*)\exp(-2v_i\bS_i^T\bDelta_{\bbeta}^{(1)})(\bS_i^T\bDelta_{\bbeta}^{(1)})^2\\
&\qquad= p_{N}^{-2}\exp\{-2(1-v_i)\bS_i^T\bbeta_{p}^*-2v_i\bS_i^T\bbetahat_{p}^{(1)}\}(\Delta_{p}^{(1)})^{-2v_i}(\bS_i^T\bDelta_{\bbeta}^{(1)})^2\\
&\qquad\leq p_{N}^{-2}\exp(2C_0)(\bS_i^T\bDelta_{\bbeta}^{(1)})^2\max\{1,(\Delta_{p}^{(1)})^{-2}\}.
\end{align*}
Further conditional on the event $\mathcal E_\zeta$, we have
\begin{align}
&M^{-2}\sum_{i\in\mathcal I_1}\left\{\frac{\Gamma_i}{\gammatil_N^{(1)}(\bX_i)}-\frac{\Gamma_i}{\gammatil_N^*(\bX_i)}\right\}^2\nonumber\\
&\quad=O\left(\max\{1,(\Delta_{p}^{(1)})^{-2}\} (M\gamma_N)^{-2}\sum_{i\in\mathcal I_1}\Gamma_i(\bS_i^T\bDelta_{\bbeta}^{(1)})^2+(M\gamma_N)^{-1}\zeta_N^2\right),\label{bound:delta3_moment}
\end{align}
By Lemma \ref{lemma:phatN-est}, $\max\{1,(\Delta_{p}^{(1)})^{-2}\}=O_p(1)$. Together with part (b) of Lemma \ref{lemma:emp} and \eqref{rate:betatil-Delta},
\begin{align}
&(M\gamma_N)^{-1}\sum_{i\in\mathcal I_1}\Gamma_i(\bS_i^T\bDelta_{\bbeta}^{(1)})^2\nonumber\\
&\qquad=O_p\left(\frac{s_{p}^2\log^2{d}\log{N}}{(N\gamma_N)^2}+\frac{s_{p}\log{d}}{N\gamma_N}+\zeta_N^4\log{N}+\zeta_N^2\right).\label{rate:insample1}
\end{align}
Together with \eqref{bound:delta3_moment}, we have
\begin{align}
&(M\gamma_N)M^{-2}\sum_{i\in\mathcal I_1}\left\{\frac{\Gamma_i}{\gammatil_N^{(1)}(\bX_i)}-\frac{\Gamma_i}{\gammatil_N^*(\bX_i)}\right\}^2\nonumber\\
&\qquad=O_p\left(\frac{s_{p}^2\log^2{d}\log{N}}{(N\gamma_N)^2}+\frac{s_{p}\log{d}}{N\gamma_N}+\zeta_N^4\log{N}+\zeta_N^2\right)=o_p(1),\label{rate:gammadiff_insample}
\end{align}
as long as $s_{p}=o(N\gamma_N/(\log{d}\log^{1/2}{N}))$ and $\zeta_N=o(\log^{-1/4}{N})$. Hence, $\E_{\mathcal D_N^{(1)}}(\Delta_2^2\mid\mathcal S)=o_p\left((N\gamma_N)^{-1}\right).$ By Markov's inequality,
$$\Delta_2=o_p\left((N\gamma_N)^{-1/2}\right).$$
\end{proof}

\begin{lemma}\label{lemma:Delta3}
{Let Assumptions \ref{cond:basic}-\ref{cond:bound-SP} and \ref{cond:approx} hold, $N\gamma_N\gg s_{p}\log d\log N$, and $\zeta_N=o(\log^{-1/2}N)$ with some $\lambda_{\bbeta}\asymp\sqrt{\log{d}/(N\gamma_N)}$. Then, as $N,d\to\infty$,
$$\Delta_3=O_p\left(e_m\sqrt\frac{s_{p}\log{d}}{N\gamma_N}+e_m\zeta_N\right)=o_p\left((N\gamma_N)^{-1/2}\right),$$
when $e_m=o(1/\sqrt{s_{p}\log{d}})$ and $e_m\zeta_N=o(1/\sqrt{N\gamma_N})$.}
\end{lemma}

\begin{proof}[Proof of Lemma \ref{lemma:Delta3}]
{By the Cauchy-Schwarz inequality,
\begin{align*}
|\Delta_3|&=\left|M^{-1}\sum_{i\in\mathcal I_1}\left\{\frac{\Gamma_i}{\gammatil_N^{(1)}(\bX_i)}-\frac{\Gamma_i}{\gammatil_N^*(\bX_i)}\right\}\varepsilon_{i,2}\right|\\
&\leq \sqrt{M^{-1}\sum_{i\in\mathcal I_1}\left\{\frac{\Gamma_i}{\gammatil_N^{(1)}(\bX_i)}-\frac{\Gamma_i}{\gammatil_N^*(\bX_i)}\right\}^2M^{-1}\sum_{i\in\mathcal I_1}\Gamma_i\varepsilon_{i,2}^2}\\
&\overset{(i)}{=}O_p\left(\sqrt{\gamma_N^{-1}\left\{\frac{s_{p}^2\log^2{d}\log{N}}{(N\gamma_N)^2}+\frac{s_{p}\log{d}}{N\gamma_N}+\zeta_N^4\log{N}+\zeta_N^2\right\}\gamma_Ne_m^2}\right)\\
&\overset{(ii)}{=}O_p\left(e_m\sqrt\frac{s_{p}\log{d}}{N\gamma_N}+e_m\zeta_N\right),
\end{align*}
where (i) holds by \eqref{rate:gammadiff_insample}, Lemma \ref{lemma:gammaN-est}, and under Assumption \ref{cond:approx}; (ii) holds when $N\gamma_N\gg s_{p}\log{d}\log{N}$ and $\zeta_N=o(\log^{-1/2}{N})$.}
\end{proof}

\vspace{-0.1in} 
{In the following, we consider another decomposition:
\begin{align}
\thetahat_{\mbox{\tiny 1,DC-BRSS}}^{(1)}-\theta_1= M^{-1}\sum_{i\in\mathcal I_1}\psitil_{N}^*(\bZ_i)+\Delta_4+\Delta_5+\Delta_6,\label{eq:Delta456}
\end{align}
where
\begin{align*}
\Delta_4&:=M^{-1}\sum_{i\in\mathcal I_1}\left\{1-\frac{\Gamma_i}{\gammatil_N^*(\bX)}\right\}\bS_i^T(\balphatil^{(2)}-\balphatil^*),\\
\Delta_5&:=M^{-1}\sum_{i\in\mathcal I_1}\frac{1}{p_{N}^*(\bX_i)}\left\{\frac{\Gamma_i}{\pihat^{(1)}(\bX_i)}-\frac{\Gamma_i}{\pi^*(\bX_i)}\right\}\varepsilontil_i,\\
\Delta_6&:=M^{-1}\sum_{i\in\mathcal I_1}\frac{1}{\pihat^{(1)}(\bX_i)}\left\{\frac{\Gamma_i}{\phat_{N}^{(1)}(\bX_i)}-\frac{\Gamma_i}{p_{N}^*(\bX_i)}\right\}\varepsilontil_i,
\end{align*}
with $\varepsilontil_i:=Y_i(1)-\bS_i^T\balphatil^{(2)}$ and let $\varepsilontil$ be an independent copy of $\varepsilontil_i$.}

\begin{lemma}\label{lemma:Delta4}
Let Assumptions \ref{cond:basic}-\ref{cond:bound-SP} and \ref{cond:approx} hold, $N\gamma_N\gg\max(\log N,s_{p},s_{\balphatil})\log d$, and $e_m\zeta_N=o(1)$ with some $\lambda_{\bbeta}\asymp\lambda_{\balpha}\asymp\sqrt{\log{d}/(N\gamma_N)}$. Then, as $N,d\to\infty$,
$$\Delta_4=o_p\left((N\gamma_N)^{-1/2}\right).$$
\end{lemma}

\vspace{-0.25in} 
\begin{proof}[Proof of Lemma \ref{lemma:Delta4}]
Since $\balphatil^{(2)}$ is independent of $\mathcal D_N^{(1)}$, we have
$$\E_{\mathcal D_N^{(1)}}(\Delta_4)=\E_{\bX,\Gamma}\left[\left\{1-\frac{\Gamma}{\gammatil_N^*(\bX)}\right\}\bS\right]^T(\balphatil^{(2)}-\balphatil^*)=0,$$
by the construction of $\bbeta_{p}^*$. For any function $f(\cdot)$ and constant $r>0$, denote $\| f(\cdot) \|_{\P_{\bX},r} := \{\E_{\bX}| f(\bX)|^r\}^{1/r}$. Then, with $r>0$ satisfying $1/r+1/q=1$,
\begin{align*}
&\E_{\mathcal D_N^{(1)}}(\Delta_4^2)= M^{-1}\E_{\bX,\Gamma}\left(\left[\left\{1-\frac{\Gamma}{\gammatil_N^*(\bX)}\right\}\bS^T(\balphatil^{(2)}-\balphatil^*)\right]^2\right)\\
&\qquad\leq M^{-1}\E_{\bX,\Gamma}\left(\left[1+\frac{\Gamma}{\{\gammatil_N^*(\bX)\}^2}\right]\left\{\bS^T(\balphatil^{(2)}-\balphatil^*)\right\}^2\right)\\
&\qquad= M^{-1}\E_{\bX,\Gamma}\left(\left[1+\frac{\left\{1+\gamma_N^{-2}\exp(-2\bS^T\bbeta_{p}^*)\right\}\gamma_N(\bX)}{\{\pi^*(\bX)\}^2}\right]\left\{\bS^T(\balphatil^{(2)}-\balphatil^*)\right\}^2\right)\\
&\qquad\leq M^{-1}\left\|1+\frac{\left\{1+\gamma_N^{-2}\exp(2C_0)\right\}\gamma_N(\bX)}{c_0^2}\right\|_{\P,q}\left\|\bS^T(\balphatil^{(2)}-\balphatil^*)\right\|_{\P_{\bX},2r}^2\\
&\qquad={O_p\left((N\gamma_N)^{-1}\left\{\frac{(s_{\balphatil}+e_m^2s_{p})\log{d}}{N\gamma_N}+e_m^2\zeta_N^2\right\}\right)},
\end{align*}
under Assumptions \ref{cond:subG} and \ref{cond:tail} together with \eqref{rate:alphatil2}. By Markov's inequality,
\begin{align}\label{rate:Delta2}
\Delta_4=O_p\left((N\gamma_N)^{-1/2}\sqrt{\frac{(s_{\balphatil}+e_m^2s_{p})\log{d}}{N\gamma_N}+e_m^2\zeta_N^2}\right)=o_p\left((N\gamma_N)^{-1/2}\right),
\end{align}
as long as $s_{\balphatil}+e_m^2s_{p}=o(N\gamma_N/\log{d})$ and $e_m\zeta_N=o(1)$.
\end{proof}

\begin{lemma}\label{lemma:Delta5}
Let Assumptions \ref{cond:basic}-\ref{cond:approx} hold, $N\gamma_N\gg\max(\log N,s_{p},s_{\balphatil})\log d$ with some $\lambda_{\bbeta}\asymp\lambda_{\balpha}\asymp\sqrt{\log{d}/(N\gamma_N)}$. Then, as $N,d\to\infty$,
$$\Delta_5=O_p(\zeta_N)=o_p\left((N\gamma_N)^{-1/2}\right),$$
when $\zeta_N=o((N\gamma_N)^{-1/2})$.
\end{lemma}

\begin{proof}[Proof of Lemma \ref{lemma:Delta5}]
Condition on the event $\mathcal E_\pi\cap\mathcal E_\zeta$. By the Cauchy-Schwarz inequality, under Assumptions \ref{cond:NP-est} and \ref{cond:bound-SP}, we have
\begin{align*}
\Delta_5^2&\leq M^{-1}\sum_{i\in\mathcal I_1}\frac{\Gamma_i\varepsilontil_i^2}{\{p_{N}^*(\bX_i)\}^2}M^{-1}\sum_{i\in\mathcal I_1}\Gamma_i\left\{\frac{1}{\pihat^{(1)}(\bX_i)}-\frac{1}{\pi^*(\bX_i)}\right\}^2\\
&=O\left(\gamma_N^{-2}M^{-1}\sum_{i\in\mathcal I_1}\Gamma_i\varepsilontil_i^2M^{-1}\sum_{i\in\mathcal I_1}\Gamma_i\left\{\pihat^{(1)}(\bX_i)-\pi^*(\bX_i)\right\}^2\right)\\
&=O\left(\zeta_N^2(M\gamma_N)^{-1}\sum_{i\in\mathcal I_1}\Gamma_i\varepsilontil_i^2\right).
\end{align*}
Denote $\P_{\bX,\Gamma,Y}$ as the joint distribution of $(\bX,\Gamma,Y)$ and let $\E_{\bX,\Gamma,Y}(\cdot)$ be the corresponding expectation. Note that
\begin{align*}
&\E_{\mathcal D_N^{(1)}}\left\{(M\gamma_N)^{-1}\sum_{i\in\mathcal I_1}\Gamma_i\varepsilontil_i^2\right\}=\gamma_N^{-1}\E_{\bX,\Gamma,Y}(\Gamma\varepsilontil^2)\\
&\qquad=\gamma_N^{-1}\E_{\bX,\Gamma,Y}\left[\gamma_N(\bX)\left\{\varepsilon-\bS^T(\balphatil^{(2)}-\balphatil^*)\right\}^2\right]=O_p(1+\|\balphatil^{(2)}-\balphatil^*\|_2^2)=O_p(1),
\end{align*}
under Assumptions \ref{cond:basic}, \ref{cond:subG}, \ref{cond:tail}, and \ref{cond:subG'}, together with the fact that $\|\balphatil^{(2)}-\balphatil^*\|_2=o_p(1)$. By Markov's inequality, $(M\gamma_N)^{-1}\sum_{i\in\mathcal I_1}\Gamma_i\varepsilontil_i^2=O_p(1)$. Hence, as long as $\zeta_N=o((N\gamma_N)^{-1/2})$,
\begin{align*}
\Delta_5=O_p(\zeta_N).
\end{align*}
\end{proof}

\begin{lemma}\label{lemma:Delta6}
{Let Assumptions \ref{cond:basic}-\ref{cond:approx} hold, $N\gamma_N\gg\max(\log^{5/2} N,s_{p},s_{\balphatil})\log d\log^{1/2}N$ with some $\lambda_{\bbeta}\asymp\lambda_{\balpha}\asymp\sqrt{\log{d}/(N\gamma_N)}$. Then, as $N,d\to\infty$,
$$\Delta_6=o_p\left((N\gamma_N)^{-1/2}\right),$$
when $s_{p}=o(\sqrt{N\gamma_N}/\log{d})$, $s_{\balphatil}\sqrt{s_{p}}=o(N\gamma_N/\log^{3/2}{d})$, and $\zeta_N=o((N\gamma_N)^{-1/2})$.}
\end{lemma}

\begin{proof}[Proof of Lemma \ref{lemma:Delta6}]
By Taylor's theorem, there exists some $v\in(0,1)$ such that
\begin{align*}
\Delta_6&=M^{-1}\sum_{i\in\mathcal I_1}\frac{\Gamma_i\exp(-\bS_i^T\bbetahat_{p}^{(1)})\{\exp(\bS_i^T\bDelta_{\bbeta}^{(1)})-1\}\varepsilontil_i}{\phat_{N}^{(1)}\pihat^{(1)}(\bX_i)}=\Delta_{6,1}+\Delta_{6,2},
\end{align*}
where
\begin{align*}
\Delta_{6,1}&=M^{-1}\sum_{i\in\mathcal I_1}\frac{\Gamma_i\exp(-\bS_i^T\bbetahat_{p}^{(1)})\bS_i^T\bDelta_{\bbeta}^{(1)}\varepsilontil_i}{\phat_{N}^{(1)}\pihat^{(1)}(\bX_i)},\\
\Delta_{6,2}&=M^{-1}\sum_{i\in\mathcal I_1}\frac{\Gamma_i\exp(-\bS_i^T\bbetahat_{p}^{(1)}+v\bS_i^T\bDelta_{\bbeta}^{(1)})(\bS_i^T\bDelta_{\bbeta}^{(1)})^2\varepsilontil_i}{\phat_{N}^{(1)}\pihat^{(1)}(\bX_i)},
\end{align*}
with $\bDelta_{\bbeta}^{(1)}=\bbetahat_{p}^{(1)}-\bbeta_{p}^*+\log(\Delta_{p}^{(1)})\be_1$ and $\Delta_{p}^{(1)}=\phat_{N}^{(1)}/p_{N}$. Condition on the event that $\{0.46p_{N}\leq\phat_{N}^{(j)}\leq1.54p_{N}\}$, which occurs with probability approaching one by Lemma \ref{lemma:phatN-est}. Then, for any $i\in\mathcal I_1$, $\exp(-\bS_i^T\bbetahat_{p}^{(1)}+v\bS_i^T\bDelta_{\bbeta}^{(1)})\leq\max(0.46^{-1},1.54)\exp(C_0)$. It follows that
\begin{align}
|\Delta_{6,2}|&=O\left((M\gamma_N)^{-1}\sum_{i\in\mathcal I_1}\Gamma_i(\bS_i^T\bDelta_{\bbeta}^{(1)})^2|\varepsilontil_i|\right).\label{RN}
\end{align}
Define $\mathcal E_{\balphatil}:=\{\|\balphatil^{(2)}-\balphatil^*\|_2\leq1\}$. Then, $\P_{\mathcal D_N^{(2)}}(\mathcal E_{\balphatil})=1-o(1)$. On the event $\mathcal E_{\balphatil}$ and conditional on $\mathcal D_N^{(2)}$, we have
\begin{align}
\|\varepsilontil\|_{\psi_2}&=\|\varepsilon-\bS^T(\balphatil^{(2)}-\balphatil^*)\|_{\psi_2}\leq\|\varepsilon\|_{\psi_2}+\|\bS^T(\balphatil^{(2)}-\balphatil^*)\|_2\nonumber\\
&\leq\|\varepsilon\|_{\psi_2}+\sigma\|\balphatil^{(2)}-\balphatil^*\|_2=O(1).\label{bound:psi-2-Y-alpha}
\end{align}
Note that under Assumption \ref{cond:basic},
\begin{align*}
&\E_{\bX,\Gamma,Y}(|\varepsilontil|\mid\bX,\Gamma=1)=\E_{\bX,\Gamma,Y}(|Y-\bS^T\balphatil^{(2)}|\mid\bX,R=1,T=1)\nonumber\\
&\qquad\overset{(i)}{=}\E_{\bX,T,Y}\{|Y(1)-\bS^T\balphatil^{(2)}|\mid\bX,T=1\}\overset{(ii)}{=}\E_{\bX,Y(1)}(|\varepsilontil|\mid\bX),
\end{align*}
where (i) holds since $R\independent Y\mid(T,\bX)$ and $Y=Y(T)$; (ii) holds since $T\independent Y(1)\mid\bX$. In the above, $\E_{\bX,\Gamma,Y}(\cdot)$, $\E_{\bX,T,Y}$, and $\E_{\bX,Y(1)}$ denote the expectations with respect to the joint distribution of $(\bX,\Gamma,Y)$, $(\bX,T,Y)$, and $(\bX,Y(1))$, respectively.

Choose some $s\asymp\sqrt{N\gamma_N}/\{\log{d}\log^{3/2}{N}\}$, then $\sqrt{s\log{d}/(N\gamma_N)}+(s\log{N}\log{d})^{3/2}/(N\gamma_N)\asymp[1/\{N\gamma_N\log^3{N}\}]^{1/4}=o(1)$. By part (e) of Lemma \ref{lemma:emp} and \eqref{rate:betatil-Delta}, we have
\begin{align}
\Delta_{6,2}&=O_p\left(\frac{\|\bDelta_{\bbeta}^{(1)}\|_1^2}{s(N\gamma_N)^{1/4}\log^{3/4}{N}}+\|\bDelta_{\bbeta}^{(1)}\|_2^2\{1+o(1)\}\right)\nonumber\\
&=O_p\left(\left(\frac{s_{p}^2\log{d}}{N\gamma_N}+\frac{\zeta_N^4N\gamma_N}{\log{d}}\right)\frac{\log d\log^{3/4}N}{(N\gamma_N)^{3/4}}+\frac{s_{p}\log{d}}{N\gamma_N}+\zeta_N^2\right)\nonumber\\
&\overset{(i)}{=}O_p\left(\left(\frac{1}{\log d}+\frac{1}{N\gamma_N\log d}\right)\frac{\log d\log^{3/4}N}{(N\gamma_N)^{3/4}}+\frac{s_{p}\log{d}}{N\gamma_N}+\zeta_N^2\right)\nonumber\\
&\overset{(ii)}{=}o_p\left((N\gamma_N)^{-1/2}\right),\label{rate:Delta52}
\end{align}
where (i) holds when $s_{p}=o(\sqrt{N\gamma_N}/\log{d})$ and $\zeta_N=o((N\gamma_N)^{-1/2})$; (ii) holds if additionally that $N\gamma_N\gg\log^3N$.

Now, we control the term $\Delta_{6,1}$. Define $\bDelta_{\balpha}^{(2)}:=\balphatil^{(2)}-\balphatil^*$. We consider the following representation:
\begin{align*}
\Delta_{6,1}=&\Delta_{6,1,1}+\sum_{k=2}^5\Delta_{6,1,k}^{(1)}-\sum_{k=2}^5\Delta_{6,1,k}^{(2)},\;\;\mbox{where}\\
\Delta_{6,1,1}:=&M^{-1}\sum_{i\in\mathcal I_2}\frac{\Gamma_i\exp(-\bS_i^T\bbetahat_{p}^{(2)})\bS_i^T\bDelta_{\bbeta}^{(1)}\varepsilontil_i}{\phat_{N}^{(2)}\pihat^{(2)}(\bX_i)},\;\;\mbox{and for each}\;\;j\in\{1,2\},\\
\Delta_{6,1,2}^{(j)}:=&M^{-1}\sum_{i\in\mathcal I_j}\frac{\Gamma_i\bS_i^T\bDelta_{\bbeta}^{(1)}\varepsilontil_i}{\pihat^{(j)}(\bX_i)}\left\{\frac{\exp(-\bS_i^T\bbetahat_{p}^{(j)})}{\phat_{N}^{(j)}}-\frac{\exp(-\bS_i^T\bbeta_{p}^*)}{p_{N}}\right\},\\
\Delta_{6,1,3}^{(j)}:=&M^{-1}\sum_{i\in\mathcal I_j}\frac{\Gamma_i\exp(-\bS_i^T\bbeta_{p}^*)\bS_i^T\bDelta_{\bbeta}^{(1)}\varepsilontil_i}{p_{N}}\left\{\frac{1}{\pihat^{(j)}(\bX_i)}-\frac{1}{\pi^*(\bX_i)}\right\},\\
\Delta_{6,1,4}^{(j)}:=&M^{-1}\sum_{i\in\mathcal I_j}\frac{\Gamma_i\exp(-\bS_i^T\bbeta_{p}^*)\bS_i^T\bDelta_{\bbeta}^{(1)}\varepsilon_i}{p_{N}\pi^*(\bX_i)},\\
\Delta_{6,1,5}^{(j)}:=&M^{-1}\sum_{i\in\mathcal I_j}\frac{\Gamma_i\exp(-\bS_i^T\bbeta_{p}^*)\bS_i^T\bDelta_{\bbeta}^{(1)}\bS_i^T\bDelta_{\balpha}^{(2)}}{p_{N}\pi^*(\bX_i)}\\
&\qquad-\E_{\bX,\Gamma}\left\{\frac{\Gamma\exp(-\bS^T\bbeta_{p}^*)\bS^T\bDelta_{\bbeta}^{(1)}\bS^T\bDelta_{\balpha}^{(2)}}{p_{N}\pi^*(\bX)}\right\}.
\end{align*}
By \eqref{bound:KK2}, \eqref{rate:betatil-Delta}, and $\lambda_{\balpha}\asymp\sqrt{\log{d}/(N\gamma_N)}$, we have
\begin{align*}
\Delta_{6,1,1}&\leq\left\| M^{-1}\sum_{i\in\mathcal I_2}\frac{\Gamma_i\exp(-\bS_i^T\bbetahat_{p}^{(2)})}{\phat_{N}^{(2)}\pihat^{(2)}(\bX_i)}\varepsilontil_i\bS_i\right\|_\infty\|\bDelta_{\bbeta}^{(1)}\|_1\\
&=O_p\left(\frac{s_{p}\log{d}}{N\gamma_N}+\zeta_N^2\right)=o_p\left((N\gamma_N)^{-1/2}\right),
\end{align*}
when $s_{p}=o(\sqrt{N\gamma_N}/\log{d})$ and $\zeta_N=o((N\gamma_N)^{-1/2})$.

For each $j\in\{1,2\}$, by Taylor's theorem, with some $v_j\in(0,1)$,
\begin{align*}
|\Delta_{6,1,2}^{(j)}|&=\left|M^{-1}\sum_{i\in\mathcal I_j}\frac{\Gamma_i\exp[-\bS_i^T\{v_j\bbetahat_{p}^{(j)}+(1-v_j)\bbeta_{p}^*\}]\bS_i^T\bDelta_{\bbeta}^{(1)}\bS_i^T\bDelta_{\bbeta}^{(j)}\varepsilontil_i}{(\phat_{N}^{(j)})^{v_j}p_{N}^{1-v_j}\pihat^{(j)}(\bX_i)}\right|\\
&=O_p\left((M\gamma_N)^{-1}\sum_{i\in\mathcal I_j}\Gamma_i|\bS_i^T\bDelta_{\bbeta}^{(1)}\bS_i^T\bDelta_{\bbeta}^{(j)}\varepsilontil_i|\right)\\
&=O_p\left((M\gamma_N)^{-1}\sum_{i\in\mathcal I_j}\Gamma_i(\bS_i^T\bDelta_{\bbeta}^{(1)})^2|\varepsilontil_i|+(M\gamma_N)^{-1}\sum_{i\in\mathcal I_j}\Gamma_i(\bS_i^T\bDelta_{\bbeta}^{(j)})^2|\varepsilontil_i|\right)\\
&=o_p\left((N\gamma_N)^{-1/2}\right),
\end{align*}
using part (e) of Lemma \ref{lemma:emp} as in \eqref{rate:Delta52}.

By the Cauchy-Schwarz inequality,
\begin{align*}
|\Delta_{6,1,3}^{(j)}|^2&=O_p\left(\max_{i\in\mathcal I_j}|\bS_i^T\bDelta_{\bbeta}^{(1)}|^2(M\gamma_N)^{-2}\sum_{i\in\mathcal I_j}\Gamma_i\varepsilontil_i^2\sum_{i\in\mathcal I_j}\Gamma_i\left\{\pihat^{(j)}(\bX_i)-\pi^*(\bX_i)\right\}^2\right)\\
&=O_p\left(\zeta_N^2\max_{i\in\mathcal I_j}|\bS_i^T\bDelta_{\bbeta}^{(1)}|^2(M\gamma_N)^{-1}\sum_{i\in\mathcal I_j}\Gamma_i\varepsilontil_i^2\right).
\end{align*}
For each $j\in\{1,2\}$, $(M\gamma_N)^{-1}\E(\sum_{i\in\mathcal I_j}\Gamma_i\varepsilon_i^2)=\gamma_N^{-1}\E\{\gamma_N(\bX)\varepsilon^2\}=O(1)$. By Markov's inequality,
$$(M\gamma_N)^{-1}\sum_{i\in\mathcal I_j}\Gamma_i\varepsilon_i^2=O_p(1).$$
Additionally, by part (b) of Lemma \ref{lemma:emp}, together with \eqref{rate:alphatil1} and \eqref{rate:alphatil2},
\begin{align*}
&(M\gamma_N)^{-1}\sum_{i\in\mathcal I_j}\Gamma_i\{\bS_i^T(\balphatil^{(2)}-\balphatil^*)\}^2\\
&\qquad={O_p\left(\frac{(s_{\balphatil}+e_m^2s_{p})\log{d}}{N\gamma_N}+\frac{(s_{\balphatil}+e_m^2s_{p})^2\log^2{d}\log{N}}{(N\gamma_N)^2}+e_m^2\zeta_N^2+e_m^4\zeta_N^4\log{N}\right)}\\
&\qquad=o_p(1),
\end{align*}
{when $\sbar:=s_{\balphatil}+e_m^2s_{p}=o(N\gamma_N/(\log{d}\log^{1/2}{N}))$, $s_{p}=o(\sqrt{N\gamma_N}/\log{d})$, and $e_m\zeta_N=o(\log^{-1/4}N)$.} Hence,
\begin{align}
(M\gamma_N)^{-1}\sum_{i\in\mathcal I_j}\Gamma_i\varepsilontil_i^2&\leq 2(M\gamma_N)^{-1}\sum_{i\in\mathcal I_j}\Gamma_i\varepsilon_i^2+2(M\gamma_N)^{-1}\sum_{i\in\mathcal I_j}\Gamma_i\{\bS_i^T(\balphatil^{(2)}-\balphatil^*)\}^2\nonumber\\
&=O_p(1).\label{bound:epsilontil}
\end{align}
Conditional on $\mathcal D_N^{(1)}$, $\|\bS_i^T\bDelta_{\bbeta}^{(1)}\|_{\psi_2}=O(\|\bDelta_{\bbeta}^{(1)}\|_2)$ for any $i\in\mathcal I_2$. By part (e) of Lemma \ref{lemma:subG}, $\|\max_{i\in\mathcal I_2}|\bS_i^T\bDelta_{\bbeta}^{(1)}|\|_{\psi_2}=O(\|\bDelta_{\bbeta}^{(1)}\|_2\sqrt{\log N})$. Together with part (c) of Lemma \ref{lemma:subG}, $\E_{\mathcal D_N^{(2)}}(\max_{i\in\mathcal I_2}|\bS_i^T\bDelta_{\bbeta}^{(1)}|)=O(\|\bDelta_{\bbeta}^{(1)}\|_2\sqrt{\log N})$. By Markov's inequality,
$$\max_{i\in\mathcal I_2}|\bS_i^T\bDelta_{\bbeta}^{(1)}|=O_p(\|\bDelta_{\bbeta}^{(1)}\|_2\sqrt{\log N})=O_p\left(\sqrt\frac{s_{p}\log{d}\log{N}}{N\gamma_N}+\zeta_N\log{N}\right)$$
since $N\gamma_N\gg s_{p}\log{d}\log{N}$ and $\zeta_N=o(1/\log{N})$. To sum up,
$$\Delta_{6,1,3}^{(2)}=O_p\left(\zeta_N\sqrt\frac{s_{p}\log{d}\log{N}}{N\gamma_N}+\zeta_N^2\log{N}\right)=o_p\left((N\gamma_N)^{-1/2}\right),$$
when $\zeta_N=o_p((N\gamma_N)^{-1/2})$ and $N\gamma_N\gg s_{p}\log{d}\log{N}+\log^2N$.

On the other hand, on the event $\{0.46p_{N}\leq\phat_{N}^{(1)}\leq1.54p_{N}\}$, together with \eqref{constraint} and Assumption \ref{cond:bound-SP}, we have $\max_{i\in\mathcal I_1}|\bS_i^T\bDelta_{\bbeta}^{(1)}|\leq|\log(\Delta_{p}^{(1)})|+\max_{i\in\mathcal I_1}(|\bS_i^T\bbetahat_{p}^{(1)}|+|\bS_i^T\bbeta_{p}^*|)\leq\log(0.46^{-1}\lor1.54)+2C_0$ almost surely. Together with \eqref{bound:epsilontil},
$$\Delta_{6,1,3}^{(1)}=O_p\left(\zeta_N^2\right)=o_p\left((N\gamma_N)^{-1/2}\right),$$
as long as $\zeta_N=o_p((N\gamma_N)^{-1/4})$.

Let $\bV_i=\Gamma_i\exp(-\bS_i^T\bbeta_{p}^*)\bS_i\varepsilon_i/\pi^*(\bX_i)$. By the construction of $\bbeta_{p}^*$, we have $\E(\bV_i)=\bzero$. In addition, $\sup_{1\leq j\leq d}\|\bV_i(j)\|_{\psi_1}\leq c_0^{-1}\exp(C_0)\sigma\|\varepsilon\|_{\psi_2}=O(1)$ and $\sup_{1\leq j\leq d}\E\{\bV_i(j)\}^2\leq c\gamma_N$ with some constant $c>0$. By Theorem 3.4 of \cite{kuchibhotla2022moving},
\begin{align*}
\left\|(Mp_{N})^{-1}\sum_{i\in\mathcal I_j}\bV_i\right\|_\infty=O_p\left(\sqrt\frac{\log{d}}{N\gamma_N}+\frac{\log{N}\sqrt{\log{d}}}{N\gamma_N}\right)=O_p\left(\sqrt\frac{\log{d}}{N\gamma_N}\right),
\end{align*}
since $N\gamma_N\gg\log^2{N}$. Together with \eqref{rate:betatil-Delta}, we have
\begin{align*}
|\Delta_{6,1,4}^{(j)}|&\leq\left\|(Mp_{N})^{-1}\sum_{i\in\mathcal I_j}\bV_i\right\|_\infty\|\bDelta_{\bbeta}^{(1)}\|_\infty\\
&=O_p\left(\frac{s_{p}\log{d}}{N\gamma_N}+\zeta_N^2\right)=o_p\left((N\gamma_N)^{-1/2}\right),
\end{align*}
since $s_{p}=o(\sqrt{N\gamma_N}/\log{d})$ and $\zeta_N=o((N\gamma_N)^{-1/2})$.

Lastly, by part (c) of Lemma \ref{lemma:emp}, \eqref{rate:betatil-Delta}, \eqref{rate:alphatil1}, and \eqref{rate:alphatil2} with {$s\asymp\sbar+\zeta_N^2N\gamma_N/\log{d}$ and $\sbar=s_{\balphatil}+e_m^2s_{p}$,}
\begin{align*}
\Delta_{6,1,5}^{(j)}&=\left\{\sqrt\frac{s\log{d}}{N\gamma_N}+\frac{s\log{d}\log{N}}{N\gamma_N}\right\}\\
&\qquad\cdot O_p\left(\|\bDelta_{\bbeta}^{(1)}\|_2\|\bDelta_{\balpha}^{(2)}\|_2\left(\frac{\|\bDelta_{\bbeta}^{(1)}\|_1^2}{s\|\bDelta_{\bbeta}^{(1)}\|_2^2}+\frac{\|\bDelta_{\balpha}^{(2)}\|_1^2}{s\|\bDelta_{\balpha}^{(2)}\|_2^2}+1\right)\right)\\
&={\left\{\sqrt\frac{\sbar\log{d}}{N\gamma_N}+\zeta_N+\frac{\sbar\log{d}\log{N}}{N\gamma_N}+\zeta_N^2\log{N}\right\}}\\
&\qquad\cdot{O_p\left(\left\{\sqrt\frac{s_{p}\log{d}}{N\gamma_N}+\zeta_N\right\}\left\{\sqrt\frac{\sbar\log{d}}{N\gamma_N}+\zeta_N\right\}\right)}\\
&=o_p\left((N\gamma_N)^{-1/2}\right),
\end{align*}
since $N\gamma_N\gg\log{d}\log{N}$, $\sbar=o(N\gamma_N/(\log{d}\log^{1/2}{N}))$, $s_{p}=o(\sqrt{N\gamma_N}/\log{d})$, $s_{\balphatil}\sqrt{s_{p}}=o(N\gamma_N/\log^{3/2}{d})$, and $\zeta_N=o((N\gamma_N)^{-1/2})$.
To sum up, we conclude that
\begin{align*}
\Delta_6=\Delta_{6,1,1}+\sum_{k=2}^5\Delta_{6,1,k}^{(1)}-\sum_{k=2}^5\Delta_{6,1,k}^{(2)}+\Delta_{6,2}=o_p\left((N\gamma_N)^{-1/2}\right).
\end{align*}
\end{proof}

\subsection{Asymptotic normality of the influence function (IF)}
\begin{lemma}\label{lemma:normal_t4}
Let Assumptions \ref{cond:basic}-\ref{cond:approx} hold. Then, $\E\{\psitil_{N}^*(\bZ)\}=O(e_me_\gamma)$, $\Sigmatil_{N}^*=\Var\{\psitil_{N}^*(\bZ)\}=O(\gamma_N^{-1})$, and
$$N^{-1}\sum_{i\in\mathcal I}\psitil_{N}^*(\bZ_i)=O_p\left(e_me_\gamma+N^{-1/2}(\|\balphatil^*\|_2+\gamma_N^{-1/2})\right).$$
Furthermore, let Assumption \ref{cond:lowerbound} hold and $e_me_\gamma=o(N^{-1/2}(\|\balphatil^*\|_2+\gamma_N^{-1/2}))$. Then, $\Sigmatil_{N}^*\asymp N^{-1/2}(\|\balphatil^*\|_2+\gamma_N^{-1/2})$, and we have the following asymptotic normality: as $N,d\to\infty$,
\begin{equation}\label{norm_SSL}
\sqrt N\left(\Sigmatil_{N}^*\right)^{-1/2}N^{-1}\sum_{i\in\mathcal I}\psitil_{N}^*(\bZ_i)\xrightarrow{d}\mathcal N(0,1).
\end{equation}
\end{lemma}

\vspace{-1em}

\begin{proof}[Proof of Lemma \ref{lemma:normal_t4}]
{Notice that, under Assumption \ref{cond:basic},
\begin{align*}
\omega:=&\E\{\psitil_{N}^*(\bZ)\}=\E\left[\mtil^*(\bX)+\frac{\Gamma}{\gammatil_N^*(\bX)}\{Y-\mtil^*(\bX)\}\right]-\theta_1\\
=&-\E\left[\left\{\frac{\gamma_N(\bX)}{\gammatil_N^*(\bX)}-1\right\}\left\{m(\bX)-\mtil^*(\bX)\right\}\right]=O(e_me_\gamma).
\end{align*}
Besides, with ${r}>0$ satisfies $1/r+1/q=1$,
\begin{align}
&\Sigmatil_{N}^*\leq\E\left[\mtil^*(\bX)+\frac{\Gamma\{Y-\mtil^*(\bX)\}}{\gammatil_N^*(\bX)}\right]^2\leq2\E\{\mtil^*(\bX)\}^2+2\E\left[\frac{\Gamma\{Y-\mtil^*(\bX)\}^2}{\{\gammatil_N^*(\bX)\}^2}\right]\nonumber\\
&\qquad=2\E(\bS^T\balphatil^*)^2+2\E\left[\gamma_N(\bX)\{1+\gamma_N^{-1}\exp(-\bS^T\bbeta_{p}^*)\}^2\varepsilon^2\right]\nonumber\\
&\qquad\leq2\E(\bS^T\balphatil^*)^2+2\{1+\gamma_N^{-1}\exp(C_0)\}^2\|\gamma_N(\cdot)\|_{\P,q}\|\varepsilon\|_{\P,2r}^2=O(\|\balphatil^*\|_2^2+\gamma_N^{-1}),\label{bound:upper-Sigma}
\end{align}
under Assumptions \ref{cond:subG}, \ref{cond:tail}, \ref{cond:bound-SP}, and \ref{cond:subG'}. By Chebyshev's inequality,
$$N^{-1}\sum_{i\in\mathcal I}\psitil_{N}^*(\bZ_i)=O_p\left(e_me_\gamma+N^{-1/2}(\|\balphatil^*\|_2+\gamma_N^{-1/2})\right).$$
Now, we construct a lower bound for $\Sigmatil_{N}^*$ under an additional Assumption \ref{cond:lowerbound}. Let $\varepsilon_1:=Y(1)-m(\bX)$ and $\varepsilon_2:=m(\bX)-\mtil^*(\bX)$. Then,
\begin{align}
\Sigmatil_{N}^*&=\Var\left[\frac{\Gamma\varepsilon_1}{\gammatil_N^*(\bX)}+\frac{\{\gamma_N(\bX)-\Gamma\}\varepsilon_2}{\gammatil_N^*(\bX)}-\left\{1-\frac{\gamma_N(\bX)}{\gammatil_N^*(\bX)}\right\}\varepsilon_2+m(\bX)-\theta_1\right]\nonumber\\
&\overset{(i)}{=}\Var\left\{\frac{\Gamma\varepsilon_1}{\gammatil_N^*(\bX)}\right\}+\Var\left[\frac{\{\gamma_N(\bX)-\Gamma\}\varepsilon_2}{\gammatil_N^*(\bX)}\right]+\Var\left\{\mtil^*(\bX)-\frac{\gamma_N(\bX)\varepsilon_2}{\gammatil_N^*(\bX)}\right\}\label{bound:lower-Sigma}\\
&\geq\Var\left\{\frac{\Gamma\varepsilon_1}{\gammatil_N^*(\bX)}\right\}=\E\left[\frac{\gamma_N(\bX)\varepsilon_1^2}{\{\gammatil_N^*(\bX)\}^2}\right]\overset{(ii)}\geq c_0^{-2}\exp(-C_0)\gamma_N^{-2}\E\{\gamma_N(\bX)\varepsilon_1^2\}\nonumber\\
&\overset{(iii)}\geq c_{\min}c_0^{-2}\exp(-C_0)\gamma_N^{-1},\label{bound:lower-Sigma-2}
\end{align}
where (i) holds under Assumption \ref{cond:basic}; (ii) holds since $\{\gammatil_N^*(\bX)\}^{-2}=\{\pi^*(\bX)\}^{-2}\{1+\gamma_N^{-2}\exp(-2\bS^T\bbeta_{p}^*)\}\geq c_0^{-2}\gamma_N^{-2}\exp(-C_0)$ almost surely under Assumptions \ref{cond:NP-est} and \ref{cond:bound-SP}; (iii) holds since, under Assumptions \ref{cond:basic} and \ref{cond:lowerbound}, $\E\{\gamma_N(\bX)\varepsilon_1^2\}=\E(\Gamma\varepsilon_1^2)=\E(\varepsilon_1^2\mid\Gamma=1)\E(\Gamma)\geq c_{\min}\gamma_N$. Moreover, by \eqref{bound:lower-Sigma}, we also have
\begin{align*}
\Sigmatil_{N}^*&\geq\Var\left\{\mtil^*(\bX)-\frac{\gamma_N(\bX)\varepsilon_2}{\gammatil_N^*(\bX)}\right\}\\
&\geq\Var\{\mtil^*(\bX)\}-2\sqrt{\Var\{\mtil^*(\bX)\}\Var\left\{\frac{\gamma_N(\bX)\varepsilon_2}{\gammatil_N^*(\bX)}\right\}}.
\end{align*}
Here,
$$\Var\left\{\frac{\gamma_N(\bX)\varepsilon_2}{\gammatil_N^*(\bX)}\right\}\leq\E\left\{\frac{\gamma_N(\bX)\varepsilon_2}{\gammatil_N^*(\bX)}\right\}^2\leq e_m^2(1+e_\gamma)^2.$$
When $\|\balphatil^*\|_2>4e_m(1+e_\gamma)/\sqrt{\kappa_l}$, under Assumption \ref{cond:subG},
$$\Var\{\mtil^*(\bX)\}=\Var(\bS^\top\balphatil^*)\geq\kappa_l\|\balphatil^*\|_2^2\geq16\Var\left\{\frac{\gamma_N(\bX)\varepsilon_2}{\gammatil_N^*(\bX)}\right\}.$$
Hence,
\begin{align*}
\Sigmatil_{N}^*&\geq\Var\{\mtil^*(\bX)\}-2\sqrt{\Var\{\mtil^*(\bX)\}\cdot\Var\{\mtil^*(\bX)\}/16}\\
&=\Var\{\mtil^*(\bX)\}/2\geq\kappa_l\|\balphatil^*\|_2^2/2,
\end{align*}
On the other hand, when $\|\balphatil^*\|_2\leq4e_m(1+e_\gamma)/\sqrt{\kappa_l}$, we have $\|\balphatil^*\|_2^2=O(1)=O(\gamma_N^{-1})$. Together with \eqref{bound:lower-Sigma-2},
$$(\Sigmatil_{N}^*)^{-1}=O\left((\|\balphatil^*\|_2^2+\gamma_N^{-1})^{-1}\right).$$
Further combining with \eqref{bound:upper-Sigma}, we have 
$$\Sigmatil_{N}^*\asymp\|\balphatil^*\|_2^2+\gamma_N^{-1}.$$
Additionally, for any $c>0$ and with $r>0$ satisfying $1/r+1/q=1$,
\begin{align*}
&\|\psitil_{N}^*(\cdot)\|_{\P,2+c}=\left\|\mtil^*(\bX)-\theta_1+\frac{\Gamma\varepsilon}{\gammatil_N^*(\bX)}\right\|_{\P,2+c}\leq\left\|\mtil^*(\cdot)\right\|_{\P,2+c}+|\theta_1|+\left\|\frac{\Gamma\varepsilon}{\gammatil_N^*(\bX)}\right\|_{\P,2+c}\\
&\qquad\overset{(i)}{=}\left\|\bS^T\balphatil^*\right\|_{\P,2+c}+|\E(\bS^T\balphatil^*)|+O\left(\gamma_N^{1/(2+c)-1}\right)\overset{(ii)}{=}O\left(\|\balphatil^*\|_2+\gamma_N^{1/(2+c)-1}\right),
\end{align*}
where (i) holds since, by part (c) of Lemma \ref{lemma:subG} and under Assumption \ref{cond:tail}, we have $\{\E(|\Gamma\varepsilon|^{2+c})\}^{1/(2+c)}=[\E\{\gamma_N(\bX)|\varepsilon|^{2+c}\}]^{1/(2+c)}\leq\|\gamma_N(\bX)\|_{\P,q}^{1/(2+c)}\|\varepsilon\|_{\P,r(2+c)}=O(\gamma_N^{1/(2+c)})$ and, under Assumptions \ref{cond:NP-est} and \ref{cond:bound-SP}, $1/\gammatil_N^*(\bX)=\{1+\gamma_N^{-1}\exp(-\bS^T\bbeta_{p}^*)\}/\pi^*(\bX)\leq c_0^{-1}+(c_0\gamma_N)^{-1}\exp(C_0)$ almost surely; (ii) holds by part (c) of Lemma \ref{lemma:subG}. Additionally, $\omega=O(e_me_\gamma)=O(1)$ and it follows that
$$\|\psitil_{N}^*(\cdot)-\omega\|_{\P,2+c}=O\left(\|\balphatil^*\|_2+\gamma_N^{1/(2+c)-1}\right).$$
Hence,
\begin{align}
&N^{-c/2}(\Sigmatil_{N}^*)^{1+c/2}\E|\psitil_{N}^*(\bZ)-\omega|^{2+c}=O\left(\frac{(\|\balphatil^*\|_2+\gamma_N^{1/(2+c)-1})^{2+c}}{N^{c/2}(\|\balphatil^*\|_2^2+\gamma_N^{-1})^{1+c/2}}\right)\nonumber\\
&\qquad=O\left(\frac{\|\balphatil^*\|_2^{2+c}+\gamma_N^{-1-c}}{N^{c/2}(\|\balphatil^*\|_2^{2+c}+\gamma_N^{-1-c/2})}\right)=O\left((N\gamma_N)^{-c/2}\right)=o(1).\label{moments_psistar}
\end{align}
It follows that, for any $\delta>0$ as $N,d\to\infty$,
$$\gamma_N^{-1}\E\left[\left\{\psitil_{N}^*(\bZ)-\omega\right\}^2\mathbbm{1}_{|\psitil_{N}^*(\bZ)-\omega|>\delta\sqrt{N/\gamma_N}
}\right]=o(1).$$
By Proposition 2.27 (Lindeberg-Feller theorem) of \cite{van2000asymptotic},
$$\sqrt N\left(\Sigmatil_{N}^*\right)^{-1/2}\left\{N^{-1}\sum_{i\in\mathcal I}\psitil_{N}^*(\bZ_i)-\omega\right\}\xrightarrow{d}\mathcal N(0,1).$$
Hence, \eqref{norm_SSL} holds when $\omega=O(e_me_\gamma)=o(N^{-1/2}(\|\balphatil^*\|_2+\gamma_N^{-1/2}))$.}
\end{proof}

\subsection{Asymptotic variance estimation}
\begin{lemma}\label{lemma:var-est}
Let the assumptions of Theorems \ref{thm:PS-SP}, \ref{thm:OR-SP-mis}, and Lemmas \ref{lemma:Delta2}, \ref{lemma:normal_t4} hold. Let $s_{\balphatil}+s_{p}=o(N\gamma_N/(\log{d}\log^{1/2}{N}))$ and $\thetahat_{\mbox{\tiny 1,DC-BRSS}}-\theta_1=o_p(\gamma_N^{-1/2})$. Then, with some $\lambda_{\bbeta}\asymp\lambda_{\balpha}\asymp\sqrt{\log{d}/(N\gamma_N)}$, as $N,d\to\infty$,
\begin{align}\label{result:var-est}
\Sigmahat_{\mbox{\tiny 1,DC-BRSS}}=\Sigmatil_{N}^*\{1+o_p(1)\}.
\end{align}
\end{lemma}

\begin{proof}[Proof of Lemma \ref{lemma:var-est}]
For each $i\in\mathcal I_k$, define $g_i:=\pi^*(\bX_i)g(\bS_i^T\bbeta_{p}^*+\log(\gamma_N))$, $\ghat_i^{(k)}:=\pihat^{(k)}(\bX_i)g(\bS_i^T\bbetahat_{p}^{(k)}+\log(\phat_{N}^{(k)}))$, where $g(\cdot)$ is the logistic function and let
$$\psihat_{N}^{(k)}(\bZ_i):=\bS_i^T\balphahat^{(k')}+\Gamma_i(Y_i-\bS_i^T\balphahat^{(k')})/\ghat_i^{(k)}-\thetahat_{\mbox{\tiny 1,DC-BRSS}},$$
with $k'\neq k\in\{1,2\}$. Then,
\begin{align}\label{bound:gbar}
1/\ghat_i^{(k)}=\{1+\exp(-\bS_i^T\bbetahat_{p}^{(k)})/\phat_{N}^{(k)}\}/\pi^{(k)}(\bX_i)\leq\{1+\exp(C_0)/\phat_{N}^{(k)}\}/c_0.
\end{align}
Observe that
$$\psihat_{N}^{(k)}(\bZ_i)-\psitil_{N}^*(\bZ_i)=\Deltahat_{1,i}^{(k)}+\Deltahat_{2,i}^{(k)},$$
where
\begin{align*}
\Deltahat_{1,i}^{(k)}:=(1-\Gamma_i/\ghat_i^{(k)})\bS_i^T(\balphahat^{(k')}-\balphatil^*),\quad\Deltahat_{2,i}^{(k)}:=\Gamma_i(1/\ghat_i^{(k)}-1/g_i)\varepsilon_i.
\end{align*}
Note that
\begin{align}
&M^{-1}\sum_{i\in\mathcal I_k}(\Deltahat_{1,i}^{(k)})^2=M^{-1}\sum_{i\in\mathcal I_k}\left\{1-2\Gamma_i/\ghat_i^{(k)}+\Gamma_i/(\ghat_i^{(k)})^2\right\}\left\{\bS_i^T(\balphahat^{(k')}-\balphatil^*)\right\}^2\nonumber\\
&\qquad\overset{(i)}{\leq}M^{-1}\sum_{i\in\mathcal I_k}\left\{\bS_i^T(\balphahat^{(k')}-\balphatil^*)\right\}^2\nonumber\\
&\qquad\qquad+c_0^{-2}\left\{1+\exp(C_0)/\phat_{N}^{(k)}\right\}^{2}M^{-1}\sum_{i\in\mathcal I_k}\Gamma_i\left\{\bS_i^T(\balphahat^{(k')}-\balphatil^*)\right\}^2.\label{bound:Deltabar1}
\end{align}
where (i) holds by \eqref{bound:gbar}. {By part (a) of Lemma \ref{lemma:emp} choosing $s=\sbar+e_m^2\zeta_N^2N\gamma_N/\log{d}=o(N/\log{d})$, together with \eqref{rate:alphatil1} and \eqref{rate:alphatil2},
\begin{align*}
M^{-1}\sum_{i\in\mathcal I_k}\{\bS_i^T(\balphahat^{(k')}-\balphatil^*)\}^2=O_p\left(\frac{\sbar\log{d}}{N\gamma_N}+e_m^2\zeta_N^2\right)=o_p(1),
\end{align*}
as long as $\sbar=o(N\gamma_N/\log{d})$ and $e_m\zeta_N=o(1)$. Additionally, by part (b) of Lemma \ref{lemma:emp}, together with \eqref{rate:alphatil1} and \eqref{rate:alphatil2}, we also have
\begin{align*}
&(M\gamma_N)^{-1}\sum_{i\in\mathcal I_k}\Gamma_i\{\bS_i^T(\balphahat^{(k')}-\balphatil^*)\}^2\\
&\qquad=O_p\left(\frac{\sbar^2\log^2{d}\log{N}}{(N\gamma_N)^2}+e_m^4\zeta_N^4\log{N}+\frac{\sbar\log{d}}{N\gamma_N}+e_m^2\zeta_N^2\right)=o_p(1),
\end{align*}
if $\sbar=o(N\gamma_N/(\log{d}\log^{1/2}{N}))$ and $e_m\zeta_N=o(\log^{-1/4}{N})$.}

By Lemma \ref{lemma:phatN-est} and note that $p_{N}\asymp\gamma_N$, we have $\phat_{N}^{(k)}=O_p(\gamma_N)$ and $1/\phat_{N}^{(k)}=O_p(\gamma_N^{-1})$. Together with \eqref{bound:Deltabar1}, we obtain
$$M^{-1}\sum_{i\in\mathcal I_k}(\Deltahat_{1,i}^{(k)})^2=o_p(\gamma_N^{-1}).$$
Besides, under Assumption \ref{cond:basic}, we have
$$\E_{\mathcal D_N}\left\{M^{-1}\sum_{i\in\mathcal I_k}(\Deltahat_{1,2}^{(k)})^2\mid(R_i,T_i,\bX_i)_{i\in\mathcal D_N^{(k)}}\right\}=M^{-1}\sum_{i\in\mathcal I_k}\Gamma_i(1/\ghat_i^{(k)}-1/g_i)^2\E(\varepsilon_i^2|\bX_i).$$
Under Assumptions \ref{cond:subG'} and \ref{cond:approx}, $\sup_{i\in\mathcal I_k}\E(\varepsilon_i^2|\bX_i)=O(1)$. By \eqref{rate:gammadiff_insample}, we also have
$$M^{-1}\sum_{i\in\mathcal I_k}\Gamma_i(1/\ghat_i^{(k)}-1/g_i)^2=o_p(\gamma_N^{-1}),$$
and it follows that
$$\E_{\mathcal D_N}\left\{M^{-1}\sum_{i\in\mathcal I_k}(\Deltahat_{1,2}^{(k)})^2\mid(R_i,T_i,\bX_i)_{i\in\mathcal D_N^{(k)}}\right\}=o_p(\gamma_N^{-1}).$$
By Markov's inequality,
$$M^{-1}\sum_{i\in\mathcal I_k}(\Deltahat_{1,2}^{(k)})^2=o_p(\gamma_N^{-1}).$$
Therefore,
\begin{align*}
M^{-1}\sum_{i\in\mathcal I_k}\left\{\psihat^{(k)}_{N}(\bZ_i)-\psitil_{N}^*(\bZ_i)\right\}^2\leq2M^{-1}\sum_{i\in\mathcal I_k}(\Deltahat_{1,1}^{(k)})^2+2M^{-1}\sum_{i\in\mathcal I_k}(\Deltahat_{1,2}^{(k)})^2=o_p(\gamma_N^{-1}).
\end{align*}
In addition, by Lemma \ref{lemma:normal_t4}, we have $\Sigmatil_{N}^*\asymp\|\balphatil^*\|_2^2+\gamma_N^{-1}$. Together with \eqref{moments_psistar} and Lemma D.4 of \cite{zhang2023double}, we have
\begin{align*}
M^{-1}\sum_{i\in\mathcal I_k}\left\{\psihat^{(k)}_{N}(\bZ_i)\right\}^2=\E\left[\{\psitil_{N}^*(\bZ)\}^2\right]\{1+o_p(1)\}\overset{(i)}{=}\Sigmatil_{N}^*\{1+o_p(1)\},
\end{align*}
where (i) holds since $\E[\{\psitil_{N}^*(\bZ)\}^2]=\Sigmatil_{N}^*+[\E\{\psitil_{N}^*(\bZ)\}]^2=\Sigmatil_{N}^*\{1+o_p(1)\}$ when $e_me_\gamma=o(N^{-1/2}(\|\balphatil^*\|_2+\gamma_N^{-1/2}))$. Therefore,
\begin{align*}
&\Sigmahat_{\mbox{\tiny 1,DC-BRSS}}-\Sigmatil_{N}^*=N^{-1}\sum_{k=1}^2\sum_{i\in\mathcal I_k}\left\{\psihat^{(k)}_{N}(\bZ_i)\right\}^2-\Sigmatil_{N}^*=o_p(\Sigmatil_{N}^*),
\end{align*}
and \eqref{result:var-est} follows.
\end{proof}

\subsection{Proofs of Theorems \ref{t4-SP} and \ref{cor:para}}
\begin{proof}[Proof of Theorem \ref{t4-SP}]
Theorem \ref{t4-SP} follows directly from Lemmas \ref{lemma:Delta1}-\ref{lemma:var-est}.
\end{proof}

\begin{proof}[Proof of Theorem \ref{cor:para}]
Theorem \ref{cor:para} follows directly from Theorem \ref{t4-SP} with $\balphatil^*=\balpha^*$, $\bbeta_{p}^*=\bbeta^*$, and $\zeta_N=0$.
\end{proof}

\section{Extended properties of the preliminary R-DR estimator}\label{sec:add-S1}

In this section, we extend the theoretical properties of the R-DR estimator, establishing both consistency and asymptotic normality. Theorem \ref{thm:gen-rate-DR} in the main document provides a concise summary of the results presented here. We assume the following conditions.

\begin{assumption}[High-level conditions on the nuisance function estimators]\label{a2:high-level}
Let $m^*(\cdot)$ and $\gamma_N^*(\cdot)$ be approximations of the true nuisance functions $m(\cdot)$ and $\gamma_N(\cdot)$, respectively. For each $k\leq K$, assume the following conditions hold (recall $\E_{\bX}(\cdot)$ from Section \ref{sec:intro}):
\begin{align}
& \E_{\bX}\left[ \frac{a_{N}}{\gamma_N(\bX)} \{\mhat^{(-k)}(\bX) - m^*(\bX) \}^2 \right] = O_p(\alpha_{N}^2) \quad \mbox{with some} \;\; \alpha_{N} = o(1), \label{drthm:ratecond1}\\
& \E_{\bX}\left[ \frac{a_{N}}{\gamma_N(\bX)} \left\{1 - \frac{\gamma_N^*(\bX)}{\gammahat_N^{(-k)}(\bX)} \right\}^2\right] = O_p(\beta_{N}^2) \quad \mbox{with some} \;\; \beta_{N} = o(1),\label{drthm:ratecond2} \\
& \E_{\bX}\{\mhat^{(-k)}(\bX) - m^*(\bX) \}^2 = O_p(c_{N}^2) \quad \mbox{with some} \;\; c_{N} = o(1),\;\;\mbox{and}\label{drthm:ratecond3}\\
& \E_{\bX}\left\{1 - \frac{\gamma_N^*(\bX)}{\gammahat_N^{(-k)}(\bX)} \right\}^2 = O_p(d_{N}^2) \quad \mbox{with some} \;\; d_{N} = o(1).\label{drthm:ratecond4}
\end{align}
\end{assumption}

\begin{assumption}[Tail condition 1]\label{a3:L-F}
For any $\delta>0$,
$$a_{N}^{-1}\E\left[\left\{\psi_{N}^{opt}(\bZ)\right\}^2\mathbbm{1}_{|\psi_{N}^{opt}(\bZ)|>\delta\sqrt{N/a_{N}}}\right]\to0\;\;\text{as}\;\;N\to\infty.$$
\end{assumption}

\begin{assumption}[Tail condition 2]\label{cond:est_var}
For a constant $c>0$, let
$$N^{-c/2}a_{N}^{1+c/2}\E\left\{|\psi_{N}^{opt}(\bZ)|^{2+c}\right\}\to0\;\;\text{as}\;\;N\to\infty.$$
\end{assumption}

Assumptions \ref{a2:high-level}-\ref{cond:est_var} are analogous to the required conditions in Theorem 3.2 of \cite{zhang2023double}, where the authors considered the estimation of $\E(Y)$ in a non-causal setup. In our causal setup, if we treat $\Ytil=Y(1)$ as the outcome of interest, the estimation of $\theta_1=\E(\Ytil)$ can also be seen as a mean estimation problem with missing outcomes. However, in our causal \emph{and} missing data problem, $\Ytil$ suffers from two different types of missingness: (a) missing due to the labeling procedure (occurs when $R=0$) and (b) missing due to the treatment assignment (occurs when $T=0$). We can only observe $\Ytil$ if $\Gamma=TR=1$. Hence, as also indicated in Lemma \ref{lemma:rep}, $\Gamma$ can be viewed as the effective labeling indicator. Based on the triple $(\Ytil,\Gamma,\bX)$, we show Theorem \ref{t1} applying the results developed in Theorem 3.2 of \cite{zhang2023double}.

Notably, while Theorem \ref{thm:gen-rate-DR} in the main file relies only on Assumptions \ref{a2:high-level}-\ref{a3:L-F}, Assumption \ref{cond:est_var} is additionally required to ensure a consistent estimate of the asymptotic variance in Theorem \ref{t1}, the full version presented below.

\begin{theorem}[Asymptotic results for $\thetahat_{1,\mbox{\tiny R-DR}}$]\label{t1}
Let Assumptions \ref{cond:basic} and \ref{a2:high-level} hold. Let $a_{N}=[\E\{\gamma_N^{-1}(\bX)\}]]^{-1}$, assume $Na_{N} \rightarrow \infty$ and possibly, $a_{N} \rightarrow 0$, as $N \rightarrow \infty$. Further, with $m^*(\cdot)$ and $\gamma_N^*(\cdot)$ as in Assumption \ref{a2:high-level}, define functions $\psi^*_{N}(\bZ):=m^*(\bX) - \theta_1 + \Gamma\{Y(1) - m^*(\bX)\}/\gamma_N^*(\bX)$ and $\psi^{opt}_{N}(\bZ):=m(\bX) - \theta_1 + \Gamma\{Y(1) - m(\bX)\}/\gamma_N(\bX)$. Assume that $\psi^*_{N}(\bZ) \in \Lsc_2(\P_{\bX})$ and note that $\E\{\psi_{N}^*(\bZ)\} = 0$ whenever $m^*(\cdot) = m(\cdot)$ or $\gamma_N^*(\cdot) = \gamma_N(\cdot)$ but not necessarily both. Let $\Sigma_{N}^* := \Var\{\psi^*_{N}(\bZ)\}$ and $\Sigma_{N}^{opt} := \Var\{\psi^{opt}_{N}(\bZ)\}$. Then the properties of $\thetahat_{1,\mbox{\tiny R-DR}}$ as defined in \eqref{eq:SS-est-def} under different cases are as follows.

(a) Suppose $\gamma^*_N(\cdot) = \gamma_N(\cdot)$ and $m^*(\cdot) = m(\cdot)$. Assume $\E[\{Y(1)-m(\bX)\}^2|\bX]\leq C<\infty$. Then, $\thetahat_{1,\mbox{\tiny R-DR}}$ satisfies the following (`optimal') asymptotic linear expansion:
\begin{align*}
& (\thetahat_{1,\mbox{\tiny R-DR}} - \theta_1) = \frac{1}{N} \sum_{i=1}^N \psi^{opt}_{N}(\bZ_i) + O_p\left(\frac{\alpha_{N}}{\sqrt{Na_{N}}} + \frac{\beta_{N}}{\sqrt{Na_{N}}}\right) + D_N, \;\;\mbox{where}\\
& \;\; |D_N| := \left| \frac{1}{N}\sum_{k=1}^K\sum_{i\in\mathcal I_k} \left\{\frac{\Gamma_i}{\gammahat_N^{(-k)}(\bX_i)} - \frac{\Gamma_i}{\gamma_N^*(\bX_i)}\right\} \left\{\mhat^{(-k)}(\bX_i) - m^*(\bX_i) \right\} \right|= O_p\left(c_{N}d_{N}\right),
\end{align*}
and we note further that $\E\{\psi^{opt}_{N}(\bZ)\} = 0$ and $\Sigma_{N}^{opt} := \Var\{\psi^{opt}_{N}(\bZ)\} \asymp a_{N}^{-1}$. Further, assume $\E[\{Y(1)-m(\bX)\}^2|\bX]\geq c>0$ and $\Var\{Y(1)\}\leq C<\infty$. Then, under Assumption \ref{a3:L-F}, and as long as the product rate $c_{N}d_{N}$ from \eqref{drthm:ratecond3} and \eqref{drthm:ratecond4} additionally satisfies: $c_{N}d_{N}= o(1/\sqrt{Na_{N}})$, we have:
\begin{align*}
& \sqrt{Na_{N}}(\thetahat_{1,\mbox{\tiny R-DR}} - \theta_1) = O_p(1) \;\; \mbox{and} \;\; \sqrt{N} \left(\Sigma^{opt}_{N}\right)^{-1/2}(\thetahat_{1,\mbox{\tiny R-DR}} - \theta_1)\xrightarrow{d}\mathcal N(0,1).
\end{align*}
Further, let Assumption \ref{cond:est_var} hold. Then, as $N,d\to\infty$,
$$\Sigmahat_{N}=\Sigma_{N}^{opt}\{1+o_p(1)\} \;\; \mbox{and} \;\; \sqrt{N} \left(\Sigmahat^{opt}_{N}\right)^{-1/2}(\thetahat_{1,\mbox{\tiny R-DR}} - \theta_1)\xrightarrow{d}\mathcal N(0,1),$$
where we consider the following plug-in estimate of the asymptotic variance, $\Sigma_{N}^{opt}$:
$$\Sigmahat^{opt}_{N}:=\frac{1}{N}\sum_{k=1}^K\sum_{i\in\mathcal I_k}\left[\mhat^{(-k)}(\bX_i)+\frac{\Gamma_i\{Y_i-\mhat^{(-k)}(\bX_i)\}}{\gammahat_N^{(-k)}(\bX_i)}-\thetahat_{1,\mbox{\tiny R-DR}}\right]^2.$$
(b) Suppose either $\gamma_N^*(\cdot) = \gamma_N(\cdot)$ or $m^*(\cdot) = m(\cdot)$, but not necessarily both. Assume $\gamma_N^*(\cdot)\geq c\gamma_N(\cdot)$ with some $c>0$.
Then, $\thetahat_{1,\mbox{\tiny R-DR}}$ satisfies the following expansion:
\begin{align*}
&(\thetahat_{1,\mbox{\tiny R-DR}} - \theta_1) = \frac{1}{N} \sum_{i=1}^N \psi^{*}_{N}(\bZ_i) + O_p\left(\frac{\alpha_{N}}{\sqrt{Na_{N}}} + \frac{\beta_{N}}{\sqrt{Na_{N}}}\right) + D_N+ \Deltahat_{N},
\end{align*}
where
\begin{align}
&\Deltahat_{N} := \frac{1}{N} \sum_{k=1}^K\sum_{i\in\mathcal I_k} \frac{\Gamma_i}{\gamma_N(\bX_i)}\left\{1 - \frac{\gamma_N(\bX_i)}{\gammahat_N^{(-k)}(\bX_i)}\right\}\{m^*(\bX_i) - m(\bX_i)\} \;\; \mbox{if} \;\; \gamma_N^*(\cdot) = \gamma_N(\cdot),\label{def:Deltahat1}\\
&\mbox{or}\;\;\Deltahat_{N} := \frac{1}{N} \sum_{k=1}^K\sum_{i\in\mathcal I_k} \frac{\Gamma_i}{\gamma_N(\bX_i)}\left\{1 - \frac{\gamma_N(\bX_i)}{\gamma_N^*(\bX_i)}\right\}\{\mhat^{(-k)}(\bX_i) - m(\bX_i)\} \;\; \mbox{if} \;\; m^*(\cdot) = m(\cdot).\label{def:Deltahat1'}
\end{align}
Assume $0<c\leq\E[\{Y(1)-m^*(\bX)\}^2|\bX]\leq C<\infty$, $\Var\{Y(1)\}\leq C<\infty$, and $\gamma_N^*(\cdot)\leq C\gamma_N(\cdot)$. We note further that $\E\{\psi^{*}_{N}(\bZ_i)\} = 0$ and $\Sigma_{N}^{*} := \Var\{\psi^*_{N}(\bZ)\} \asymp a_{N}^{-1}$, and in general, under Assumption \ref{a2:high-level} and accounting for both cases \eqref{def:Deltahat1} and \eqref{def:Deltahat1'} above, $\Deltahat_{N}$ always satisfies:
$$
\Deltahat_{N} = O_p\left(d_{N}\mathbbm{1}_{m^*(\cdot) \neq m(\cdot)} + c_{N} \mathbbm{1}_{\gamma_N^*(\cdot) \neq \gamma_N(\cdot)}\right).
$$
\end{theorem}

\section{Proof of results in Sections \ref{sec:psetup} and \ref{sec:add-S1}}\label{sec:proof_sec:gen}

\begin{proof}[Proof of Lemma \ref{lemma:rep}]
Since $Y(1)\independent T\mid\bX$, we have
$$m(\bX)=\E\{Y(1)\mid\bX\}=\E\{Y(1)\mid\bX,T=1\}.$$
Additionally, since $R\independent Y\mid(T,\bX)$, we have $R\independent Y(1)\mid(\bX,T=1)$. Therefore,
\begin{align}
\E(\Gamma Y\mid\bX)&=\E(RTY\mid\bX)=\E\{Y(1)\mid\bX,T=1,R=1\}\P\{RT=1\mid\bX\}\\
&=\E\{Y(1)\mid\bX,T=1\}\P(\Gamma=1\mid\bX)=m(\bX)\gamma_N(\bX).\label{eq:tower}
\end{align}
Let either $m^*(\cdot) = m(\cdot)$ or $\gamma_N^*(\cdot) = \gamma_N(\cdot)$ hold. Then, it follows that
\begin{align*}
&\E\left[m^*(\bX)+\frac{\Gamma}{\gamma_N^*(\bX)}\left\{Y-m^*(\bX)\right\}\right]-\theta_1\\
&\qquad=\E\left\{\E\left(\left[m^*(\bX)+\frac{\Gamma}{\gamma_N^*(\bX)}\left\{Y-m^*(\bX)\right\}\right]\mid\bX\right)\right\}-\theta_1\\
&\qquad\overset{(i)}{=}\E\left[m^*(\bX)+\frac{\gamma_N(\bX)}{\gamma_N^*(\bX)}\left\{m(\bX)-m^*(\bX)\right\}\right]-\E\{m(\bX)\}\\
&\qquad=\E\left[\left\{\frac{\gamma_N(\bX)}{\gamma_N^*(\bX)}-1\right\}\left\{m(\bX)-m^*(\bX)\right\}\right]=0,
\end{align*}
where (i) holds by \eqref{eq:tower} and $\E(\Gamma\mid\bX)=\gamma_N(\bX)$.
\end{proof}

\begin{proof}[Proof of Theorem \ref{t1}]
We first show that $Y(1)\independent\Gamma\mid\bX$ under Assumption \ref{cond:basic}. Observe that
\begin{align*}
&\E\{Y(1)\mid\bX,\Gamma=1\}\overset{(i)}{=}\E(Y\mid\bX,R=T=1)\overset{(ii)}{=}\E(Y\mid\bX,T=1)\\
&\qquad\overset{(iii)}{=}\E\{Y(1)\mid\bX,T=1\}\overset{(iv)}{=}\E\{Y(1)\mid\bX\},
\end{align*}
where (i) holds since $Y=Y(T)$ and $\Gamma=TR$; (ii) holds since $R\independent Y\mid(T,X)$; (iii) holds since $Y=Y(T)$; (iv) holds since $T\independent Y(1)\mid\bX$. In addition, by the tower rule,
\begin{align*}
\E\{Y(1)\mid\bX\}&=\E\{Y(1)\mid\bX,\Gamma=1\}\gamma_N(\bX)+\E\{Y(1)\mid\bX,\Gamma=0\}\{1-\gamma_N(\bX)\}\\
&=\E\{Y(1)\mid\bX\}\gamma_N(\bX)+\E\{Y(1)\mid\bX,\Gamma=0\}\{1-\gamma_N(\bX)\}.
\end{align*}
It follows that
\begin{align*}
\E\{Y(1)\mid\bX\}\{1-\gamma_N(\bX)\}=\E\{Y(1)\mid\bX,\Gamma=0\}\{1-\gamma_N(\bX)\}.
\end{align*}
Since $\gamma_N(\bX)=\pi(\bX)p_N(\bX)>0$ almost surely under Assumption \ref{cond:basic}, we have
\begin{align*}
\E\{Y(1)\mid\bX,\Gamma=0\}=\E\{Y(1)\mid\bX\}=\E\{Y(1)\mid\bX,\Gamma=1\}\;\;\mbox{almost surely}.
\end{align*}
That is, $Y(1)\independent\Gamma\mid\bX$, i.e., the missing at random (MAR) condition holds if we consider $\Ytil=Y(1)$ and $\Gamma$ as the outcome variable and labeling indicator. The remaining results hold as long as we apply the results in Theorem 3.2 of \cite{zhang2023double}, with $(Y,R,\bX)$ replaced with the triple $(\Ytil,\Gamma,\bX)$.
\end{proof}

\end{document}